\newif\ifDoubleBlind
\newif\ifTR
\newcommand{\realrange}[2]{\left[#1, #2\right]}
\newcommand{\unitrange}[2]{\realrange{0}{1}}
\newcommand{\llabel}[1]{\label{\labelprefix:#1}}
\newcommand{\labelprefix}{} 
\newcommand{\discussionsize}{\small}
\newcommand{\frage}[1]{}
\newenvironment{code}{\noindent
\begin{tabbing}%
\hspace{2em}\=\hspace{2em}\=\hspace{2em}\=\hspace{2em}\=\hspace{2em}\=%
\hspace{2em}\=\hspace{2em}\=\hspace{2em}\=\hspace{2em}\=\hspace{2em}\=%
\kill}{\end{tabbing}}
\newcommand{\labelcommand}{}
\newcommand{\captiontext}{}
\newsavebox{\codeparam}
\newcounter{lineNumber}
\newenvironment{disscodepos}[3]{%
\renewcommand{\labelcommand}{#2}%
\renewcommand{\captiontext}{#3}%
\sbox{\codeparam}{\parbox{\textwidth}{#3}}%
\begin{figure}[#1]\begin{center}\begin{code}\setcounter{lineNumber}{1}}{%
\end{code}\end{center}\caption{\llabel{\labelcommand}\captiontext}\end{figure}}
\newcommand{\While}    {{\bf while\ }}
\newcommand{\For}      {{\bf for\ }}
\newcommand{\If}       {{\bf if\ }}
\newdimen\endofsize\endofsize=0.5em
\def\endofbeweis{~\quad\hglue\hsize minus\hsize
                 \hbox{\vrule height \endofsize width
\endofsize}\par}
\newtheorem{Satz}{Theorem}[section]
\numberwithin{equation}{section}
\def\maxwidth{ %
  \ifdim\Gin@nat@width>\linewidth
    \linewidth
  \else
    \Gin@nat@width 
  \fi
}
\newcommand{\ie}{i.e.~}
\newcommand{\etal}{et~al.\ }
\begin{document}

\ifTR
\title{\Large Multilevel Acyclic Hypergraph Partitioning\thanks{Partially supported by DFG grants DFG SA 933/11-1 and SCHU 2567/1-2.}}
\else
\title{\Large Multilevel Acyclic Hypergraph Partitioning\thanks{This is the short version of the technical report \cite{popp2020multilevelTR}. Partially supported by DFG grants DFG SA 933/11-1 and SCHU 2567/1-2.}}
\fi

\author{Merten Popp\thanks{GrAI Matter Labs.}
\and Sebastian Schlag\thanks{Karlsruhe Institute of Technology.}
\and Christian Schulz\thanks{Heidelberg University.}
\and Daniel Seemaier\thanks{Karlsruhe Institute of Technology.}}
\date{}
\maketitle

\fancyfoot[C]{\thepage}

\begin{abstract}
A directed acyclic hypergraph is a generalized concept of a directed acyclic graph, where each hyperedge can contain an arbitrary number of tails and heads.  Directed hypergraphs can be used to model data flow and execution dependencies in streaming applications.  Thus, hypergraph partitioning algorithms can be used to obtain efficient parallelizations for multiprocessor architectures.  However, an acyclicity constraint on the partition is necessary when mapping streaming applications to embedded multiprocessors due to resource restrictions on this type of hardware.  The acyclic hypergraph partitioning problem is to partition the hypernodes of a directed acyclic hypergraph into a given number of blocks of roughly equal size such that the partition is acyclic while minimizing an objective function. 

Here, we contribute the first $n$-level algorithm for the acyclic hypergraph partitioning problem.  
Based on this, we engineer a memetic algorithm to further reduce communication cost, as well as to improve scheduling makespan on embedded multiprocessor architectures. Experiments indicate that our algorithm outperforms previous algorithms that focus on the directed acyclic graph case which have previously been employed in the application domain.  Moreover, our experiments indicate that using the directed hypergraph model for this type of application yields a significantly smaller makespan.  
\end{abstract}

\clearpage 
\pagestyle{fancy}

\section{Practical Motivation}
\label{intro}

This research is inspired by computer vision and imaging applications which typically have a high demand for computational power.
Quite often, these applications run on embedded devices that have limited computing resources and also a tight thermal
budget. This requires the use of specialized hardware and a programming model
that allows to fully utilize the computing resources for streaming applications.
Directed graphs can be used to model data flow and execution dependencies in streaming applications.
Thus, graph partitioning algorithms can be used to obtain efficient parallelizations
for multiprocessor architectures.
However, when mapping streaming applications to embedded multiprocessors, memory-size restrictions on this type of hardware require the partitioning to be acyclic.
The acyclic graph partitioning problem is NP-complete~\cite{DBLP:conf/wea/MoreiraPS17}, and there is no constant factor approximation~\cite{DBLP:conf/wea/MoreiraPS17}.
Hence, heuristic algorithms are used in practice. Very recently, several multilevel and memetic algorithms have been proposed for directed acyclic graphs~(DAGs)~\cite{DBLP:conf/wea/MoreiraPS17,DBLP:conf/gecco/MoreiraP018,DBLP:journals/siamsc/HerrmannOUKC19}.

Here, we generalize the partitioning problem to directed acyclic \emph{hypergraphs}. This enables us to use more realistic objective functions. 
To be more precise, a directed acyclic hypergraph is a generalized concept of a directed acyclic graph, where each hyperedge can contain an arbitrary number of tails and heads.
Our focus is on acyclic hypergraphs where hyperedges can have one head and arbitrary many tails -- however the algorithms can be easily extended to the more general case.
Hypergraphs, as opposed to regular graphs, allow application developers to model dataflow of data streams with multiple producers/consumers more precisely.
The acyclic hypergraph partitioning problem is to partition the hypernodes of a directed acyclic hypergraph into a given number of blocks of roughly equal size such that the corresponding quotient graph is acyclic while minimizing an objective function on the partition. 
As the quality of this partitioning has a strong impact on performance and partitions often only have to be computed once to be used many times, our focus in this work is on \emph{solution quality}.

A highly successful heuristic to partition large undirected hypergraphs is the \emph{multilevel} approach~\cite{SPPGPOverviewPaper}.
Here, the hypergraph is recursively \emph{contracted} to obtain smaller hypergraphs which should reflect the same basic structure as the input. After applying an \emph{initial partitioning} algorithm to the smallest hypergraph, contraction is undone and, at each level, a
\emph{local search} method is used to improve the partitioning induced by the coarser level. 
The intuition behind this approach is that a good partition at one level of the hierarchy will also be a good partition on the next finer level. Hence, 
local search algorithms are able to explore local solution spaces very effectively. 
However, local search algorithms often get stuck in local optima~\cite{hMetisKway}. 

While multiple independent repetitions of the multilevel algorithm can help to improve the result, even a large number of repeated executions can only scratch the surface of the huge space of possible partitionings. 
In order to explore the global solution space extensively we need more sophisticated metaheuristics. 
This is where memetic algorithms (MAs), i.e., genetic algorithms combined with local search~\cite{KimHKM11}, 
come into play. Memetic algorithms allow for effective exploration (global search) and exploitation (local search) of the solution space.

We have three main contributions. Firstly, we contribute the first $n$-level algorithm for the \emph{acyclic} hypergraph partitioning problem. Based on this, we engineer a memetic algorithm to further reduce communication cost. Experiments indicate that our algorithms scale well to large instances and compute high-quality acyclic hypergraph partitions. Moreover, our algorithms outperform previous algorithms that partition directed acyclic graphs, which have been the model previously employed by users in the application domain. Lastly, our experiments indicate that using the directed hypergraph model for this type of application has a significant advantage over the directed graph model in practice.

\section{Preliminaries}
\textbf{Notation and Definitions.}
An \textit{undirected hypergraph} $H=(V,E,c,$ $\omega)$ is defined as a set of $n$ hypernodes/vertices $V$ and a
set of $m$ hyperedges/nets $E$ with vertex weights $c:V \rightarrow \mathbb{R}_{>0}$ and net 
weights $\omega:E \rightarrow \mathbb{R}_{>0}$, where each net $e$ is a subset of the vertex set $V$ (i.e., $e \subseteq V$). The vertices of a net are called \emph{pins}.
We extend $c$ and $\omega$ to sets, i.e., $c(U) :=\sum_{v\in U} c(v)$ and $\omega(F) :=\sum_{e \in F} \omega(e)$.
A vertex $v$ is \textit{incident} to a net $e$ if $v \in e$. $\mathrm{I}(v)$ denotes the set of all incident nets of $v$. 
The set $\Gamma(v) := \{ u~|~\exists~e \in E : \{v,u\} \subseteq e\}$ denotes the neighbors of $v$.
The \textit{size} $|e|$ of a net $e$ is the number of its pins. 

A \emph{$k$-way partition} $\Pi$ of an undirected hypergraph $H$ is a partition of its vertex set into $k$ \emph{blocks} $\mathrm{\Pi} = \{V_1, \dots, V_k\}$ 
such that $\bigcup_{i=1}^k V_i = V$, $V_i \neq \emptyset $ for $1 \leq i \leq k$ and $V_i \cap V_j = \emptyset$ for $i \neq j$.
We use $b[v]$ to refer to the block of vertex $v$.
We call a $k$-way partition $\mathrm{\Pi}$ \emph{$\mathrm{\varepsilon}$-balanced} if each block $V_i \in \mathrm{\Pi}$ satisfies the \emph{balance constraint}:
$c(V_i) \leq L_{\max} := (1+\varepsilon)\lceil \frac{c(V)}{k} \rceil$ for some parameter $\mathrm{\varepsilon}$. 
Given a $k$-way partition $\mathrm{\Pi}$, the number of pins of a net $e$ in block $V_i$ is defined as
$\mathrm{\Phi}(e,V_i) := |\{v \in V_i~|~v \in e \}|$. 
For each net $e$, $\mathrm{\Lambda}(e) := \{V_i~|~ \mathrm{\Phi}(e, V_i) > 0\}$ denotes the \emph{connectivity set} of $e$.
The \emph{connectivity} of a net $e$ is the cardinality of its connectivity set: $\mathrm{\lambda}(e) := |\mathrm{\Lambda}(e)|$.
A net is called \emph{cut net} if $\mathrm{\lambda}(e) > 1$.

The generalized version of undirected hypergraphs are \emph{directed hypergraphs}. 
A directed hypergraph is an undirected hypergraph where each hyperedge $e \in E$ is divided into a set of tails $e^T \subseteq e$ and heads $e^H \subseteq e$ that fulfill $e^T \cup e^H = e$ and $e^T \cap e^H = \emptyset$.
In a directed hypergraph, a cycle $C$ of length $k$ is a sequence of hypernodes $C = (v_1, \dots, v_k, v_{k + 1} = v_1)$ such that for every $i = 1, \dots, k$, there exists some hyperedge $e \in E$ with $v_i \in e^T$ and $v_{i + 1} \in e^H$. 
Furthermore, we require that $v_i \neq v_j$ for $i \neq j$, $1 \le i, j \le k$.
Each hypernode has predecessors $\Gamma^{-}(u) \coloneqq \{ v \mid v \in e^T, u \in e^H \text{ for some } e \in E \}$ and successors $\Gamma^{+}(u) \coloneqq \{ v \mid u \in e^T, v \in e^H \text{ for some } e \in E \}$. 
We refer to directed hypergraphs that do not contain any cycles as \emph{directed acyclic hypergraphs} (DAHs).
The \emph{quotient graph} for a partitioned directed acyclic hypergraph $H$ contains a node $v_i$ for each block $V_i$ and an edge $(v_i, v_j)$ if $H$ contains a
 hyperedge $e$ with tail pins in $V_i$ and head pins in $V_j$, i.e., $e^T \cap V_i \neq \emptyset$ and $e^H \cap V_j \neq \emptyset$. 

	Let $H = (V, E)$ be a DAH. The \emph{toplevel} of a node $v \in V$, denoted by $\texttt{top}[v]$, is the length of the longest path from any node with indegree zero in $H$ to $v$. 
	In particular, nodes $s$ with indegree zero have toplevel~$\texttt{top}[s] = 0$.
	Let $\mathcal{C} = \{\mathcal{C}_1, \dots, \mathcal{C}_k\}$ be a clustering of $V$ such that for each $\mathcal{C}_i \in \mathcal{C}$ and for all $u, v \in \mathcal{C}_i$, $\lvert \texttt{top}[u] - \texttt{top}[v] \rvert \le 1$.
	We refer to clusters $\mathcal{C}_i$ where all $u, v \in \mathcal{C}_i$ have $\texttt{top}[u] = \texttt{top}[v]$ as \emph{single-level} clusters and
        to clusters $\mathcal{C}_j$ that contain at least one pair of nodes $u, v \in \mathcal{C}_j$ with $\lvert \texttt{top}[u] - \texttt{top}[v] \rvert = 1$ as \emph{mixed-level} clusters.

The \emph{$k$-way hypergraph partitioning problem} is to find an $\varepsilon$-balanced $k$-way partition $\mathrm{\Pi}$ of a hypergraph $H$ that
minimizes an objective function over the cut nets for some $\varepsilon$.
The most commonly used cost functions are the \emph{cut-net} metric $\text{cut}(\mathrm{\Pi}) := \sum_{e \in E'} \omega(e)$ and the
\emph{connectivity} metric $(\mathrm{\lambda} - 1)(\mathrm{\Pi}) := \sum_{e\in E'} (\mathrm{\lambda}(e) -1)~\omega(e)$, where $E'$ is the set of all cut nets~\cite{UMPa,donath1988logic}.
Optimizing either of both objective functions is known to be NP-hard \cite{Lengauer:1990}.
In this paper, we use the connectivity-metric $(\mathrm{\lambda} - 1)(\mathrm{\Pi})$.
The \emph{hypergraph partitioning problem for directed acyclic hypergraphs} is the same as before, but with the further restriction that the resulting quotient graph must also be \emph{acyclic}. 
\emph{Contracting} a pair of vertices $(u, v)$ means merging $v$ into $u$.
The weight of $u$ becomes $c(u) := c(u) + c(v)$. We connect $u$ to the former neighbors $\Gamma(v)$ of $v$ by replacing 
$v$ with $u$ in all nets $e \in \mathrm{I}(v) \setminus \mathrm{I}(u)$ and remove $v$ from all nets $e \in \mathrm{I}(u) \cap \mathrm{I}(v)$.
\emph{Uncontracting} a vertex $u$ reverses the contraction.\\


\noindent\textbf{Related Work}.
\label{s:related}
 Driven by applications in VLSI design and scientific computing, hypergraph partitioning (HGP) has evolved into a broad research area since the 1960s.
We refer to existing literature~\cite{Alpert19951,Papa2007,trifunovic2006parallel,SPPGPOverviewPaper,DBLP:reference/bdt/0003S19,SchlagHGP} for an extensive overview.
In the following, we focus on issues closely related to the contributions of our paper.
Well-known multilevel HGP software packages with certain distinguishing characteristics include \cite{PaToH,hMetisRB,hMetisKway,ahss2017alenex,KaHyPar-MF-JEA,hs2017sea,KaHyPar-R,Mondriaan,MLPart,Zoltan,Parkway2.0,SHP,DBLP:conf/dimacs/CatalyurekDKU12,Aykanat:2008}.
Parallel algorithms for the graph partitioning~\cite{DBLP:journals/jpdc/KarypisK98,DBLP:conf/alenex/Schlag0SS19,DBLP:journals/tpds/MeyerhenkeSS17,DBLP:conf/ipps/SlotaRDM17,DBLP:conf/europar/Akhremtsev0018,DBLP:conf/ipps/LasalleK13} and hypergraph partitioning problem~\cite{DBLP:journals/jpdc/TrifunovicK08,DBLP:conf/ipps/DevineBHBC06} are also available. \\

\noindent\textit{Evolutionary Partitioning/Clustering.}
Memetic algorithms~(MAs) were introduced by Moscato~\cite{MAs} and formalized by Radcliffe and Surry~\cite{RadcliffeS94} as an extension to the concept of genetic algorithms~(GAs)~\cite{Holland:1975}.
While GAs effectively explore the \emph{global} solution space, MAs additionally allow for exploitation of the \emph{local} solution space by incorporating local search
methods into the genetic framework. We refer to the work of Moscato and Cotta~\cite{Moscato2010} for an introduction.
There is a wide range of evolutionary/memetic algorithms for the undirected hypergraph partitioning problem~\cite{SaabR89,Hulin1991,BuiMoon94,HuMoerder85,Areibi00anintegrated,HypergraphKFM,AreibiY04,FeoRS94,ArmstrongGAD10,KimM04,Kim2004}.
Recently, an memetic multilevel algorithm for undirected hypergraph partitioning  has been proposed~\cite{DBLP:conf/gecco/AndreS018}.
 Note that except Ref.~\cite{DBLP:conf/gecco/AndreS018}, \emph{none} of the above algorithms makes use of the multilevel~paradigm. 
We refer to the survey of Kim \etal \cite{KimHKM11} for an overview and more material on genetic approaches for graph partitioning. Recent approaches for graph partitioning include~\cite{soper2004combined,benlichao2010,ChardaireBM07,kaffpaE}.

Recently, multilevel algorithms for DAG partitioning have been proposed \cite{DBLP:conf/gecco/MoreiraP018,DBLP:journals/siamsc/HerrmannOUKC19,DBLP:conf/wea/MoreiraPS17}.
In Ref.~\cite{DBLP:conf/wea/MoreiraPS17}, a memetic algorithm with natural combine operations provided by a multilevel framework for the DAG partitioning problem can be found.
Here, the objective of the evolutionary algorithm is also modified to serve a specific objective function that is needed in a real-world imaging application.
To the best of our knowledge, there is currently \emph{no} multilevel or memetic algorithm for directed hypergraph partitioning. \\

\noindent\textit{Hypergraph Partitioning using KaHyPar.}
Since our algorithms are build on top of the KaHyPar \cite{KaHyPar-R} framework, we briefly review its core components.
     KaHyPar instantiates the multilevel paradigm in the
extreme $n$-level version, removing only a \emph{single} vertex between two levels.
Furthermore, it incorporates global information about the structure of the hypergraph into the coarsening process
by using community detection in a preprocessing step and preventing inter-community contractions during coarsening.
Vertex pairs $(u,v)$ to be contracted are determined using the heavy-edge rating function $r(u,v) := \sum_{e \in E'}  \omega(e)/(|e| - 1)$, where $E' := \{\mathrm{I}(u) \cap \mathrm{I}(v)\}$.

After coarsening, a portfolio of simple algorithms is used to create an initial partition of the coarsest hypergraph. During uncoarsening,
strong localized local search heuristics based on the FM algorithm~\cite{FMAlgorithm,HypergraphKFM} are used to refine
the solution by moving vertices to other blocks in the order of improvements in the optimization objective.
Recently, KaHyPar was extended with a refinement algorithm based on maximum-flow computations~\cite{KaHyPar-MF-JEA}.
Unless mentioned otherwise, we use the default configurations provided by~the~authors~\cite{configurationKaHyPar}. 

\section{Multilevel Memetic DAH Partitioning}
We now explain our core contribution and present the first algorithm for the 
problem of computing acyclic partitions of directed acyclic hypergraphs with all its components.\\

\noindent\textbf{Overview.} Our hypergraph partitioner with acyclicity constraints is based on KaHyPar.
To cope with the acyclicity constraint, we extend the scope of KaHyPar to include directed hypergraphs
and implement new algorithms for coarsening, initial partitioning, and refinement of directed hypergraphs and acyclic partitions. 

Similar to previous work on DAG partitioning~\cite{DBLP:conf/gecco/MoreiraP018,DBLP:journals/siamsc/HerrmannOUKC19,DBLP:conf/wea/MoreiraPS17},
we \emph{first} compute an initial solution for the DAH by moving the initial partitioning phase \emph{before} the coarsening phase.
This initial solution is then used during coarsening to prevent contractions that would lead to cycles in the quotient graph. This is done by
only selecting pairs of vertices for contraction that were placed in the same block in the initial partition.
Once the coarsening algorithm terminates, the contraction operations are undone in reverse order during the uncoarsening phase. 
After each uncontraction operation, we use a localized refinement algorithm to improve the current solution.
Note that in contrast to the prominent multilevel partitioning scheme that has roughly $O(\log n)$ levels,
the refinement algorithm is employed after \emph{every single} uncontraction operation.

To compute a $k$-way partition, our algorithm performs recursive bipartitioning.
We first compute a balanced bipartition of the input DAH by computing an initial
acyclic bipartition, which is then coarsened and refined using the techniques
described in Section~\ref{ssec:coarsening} and Section~\ref{ssec:refinement}.
Afterwards, we build an induced acyclic subhypergraph for each block and recursively bipartition
the blocks until the desired number of blocks is reached. 
Lastly, we perform a V-cycle using our $k$-way FM algorithm described in Section~\ref{ssec:refinement} to further improve it.

\subsection{Initial Partitioning}
\label{ssec:ip}
This section describes our approaches for obtaining an initial partition of the directed acyclic hypergraph. 
Each algorithm starts with an unpartitioned directed acyclic hypergraph $H = (V, E)$ and produces a partition of $V$ into blocks $V_1, \dots, V_k$ for a fixed number of blocks $k$. \\

\noindent \textbf{Initial Partitioning via Topological Ordering.} For the DAG partitioning problem, Moreira \etal \cite{DBLP:conf/gecco/MoreiraP018,DBLP:conf/wea/MoreiraPS17} compute their initial partition based on a topological ordering of the graph. 
We implement the same approach for directed acyclic hypergraphs to obtain an initial $k$-way partition.
First, we calculate a topological ordering of the nodes of the hypergraph using Kahn's algorithm \cite{KahnsAlgorithm} adapted for directed hypergraphs, i.e.,
we repeatedly order and remove vertices from the hypergraph with indegree zero. 
Iterating over the topological ordering, our algorithm greedily assigns vertices to a block until it is full, i.e., its weight reaches $\lceil \frac{c(V)}{k} \rceil$.
The algorithm then moves on to the next block.
Note that this approach always produces a balanced initial partition for hypergraphs with unit node weights. 
For weighted hypergraphs, it can produce an imbalanced initial partition. 
In this case, the refinement step must balance the partition.  \\
\begin{algorithm}[b!]
\DontPrintSemicolon
	\KwData{DAH $H = (V, E)$.}
	\KwResult{Bipartition $V = V_1 \dot\cup V_2$.}
        \tcp{partition as undirected hypergraph}
	$(V_1, V_2) \coloneqq$ \texttt{HG$(H, 2)$} \\
        \tcp{break quotient graph edge $(V_1, V_2)$}
	$(V'_1, V'_2) \coloneqq$ \texttt{Ref$($Balance$($FixCyclic$(V_1, V_2)))$}  \\
        \tcp{break quotient graph edge $(V_2, V_1)$}
	$(V''_1, V''_2) \coloneqq$ \texttt{Ref$($Balance$($FixCyclic$(V_2, V_1)))$}  \\
	\tcp{select bipartition with lower connectivity} 

	\lIf{\texttt{KM1$(V'_1, V'_2)$} $\le$ \texttt{KM1$(V''_1, V''_2)$}}{ 
		\KwRet{$(V'_1, V'_2)$}
	}\lElse{
		\KwRet{$(V''_1, V''_2)$}
	}
	\caption{Initial partitioning algorithms that makes use of a preexisting hypergraph partitioner \texttt{HG($\cdot, \cdot$)} for undirected hypergraphs. \texttt{KM1($\cdot, \cdot$)} denotes the connectivity metric of the given bipartition.}	
	\label{alg:undir_ip}
\end{algorithm}

\noindent \textbf{Initial Partitioning via Undirected Partitioning.}
This algorithm is based on the initial partitioning scheme for the DAG partitioning algorithm presented by Herrmann \etal \cite{DBLP:journals/siamsc/HerrmannOUKC19}.
Algorithm~\ref{alg:undir_ip} gives an overview over our approach for partitioning a directed hypergraph into $k=2$ blocks.
Given the DAH, we first obtain a bipartition of the \emph{undirected version} of the hypergraph, which is then projected onto the original DAH. 
Since ignoring directions and dropping the acyclicity constraint may lead to a bipartition that contains cycles,
we run an algorithm that fixes the partition (Algorithm~\ref{alg:makebipartitionacyclic}) and afterwards try to
improve the balance and cut of the now acyclic solution. This in done twice, once removing the
quotient graph edge from $V_1$ to $V_2$ and once removing the reverse edge. 
We then repeat the process but move predecessors instead of successors in Algorithm~\ref{alg:makebipartitionacyclic}.
Finally, we select the bipartition with the lowest cut as solution.

The DAH is turned into an undirected hypergraph by merging the tails and heads of each hyperedge.
This hypergraph is then used as input for a standard hypergraph partitioner that minimizes the connectivity objective.
In our experiments, we use KaHyPar-MF~\cite{KaHyParMF} and PaToH~\cite{PaToH} for this task,
since KaHyPar-MF regularly finds partitions with the lowest connectivity objective out of all hypergraph partitioners,
while PaToH is the fastest partitioner~\cite{KaHyParMF}. In case of KaHyPar, we use the strongest configuration (i.e., \texttt{km1\_direct\_kway\_sea18.ini}). In case of PaToH,
we use the default configuration.

If the computed bipartition violates the acyclicity constraint after projecting it onto the original DAH, 
we use Algorithm~\ref{alg:makebipartitionacyclic} to make it acyclic.
Roughly speaking, we select one edge in the quotient graph that we want to remove and move hypernodes from one block to the other one accordingly.
To be more precise, denote the two blocks by $V_1$ and $V_2$ and assume that we want to remove the quotient graph edge from $V_1$ to $V_2$. 
We start a breath-first search at every hypernode in $V_1$ that has successors in $V_2$. 
The search only scans successors in $V_2$ and moves every node from $V_2$ to $V_1$. 
Once the search has completed, no hypernode in $V_1$ has successors in $V_2$ and therefore the quotient graph edge from $V_1$ to $V_2$ is removed. 

The resulting acyclic partition might become imbalanced due to the movements from one block to the other. 
To cope with this problem, we run an additional balancing and refinement step afterwards. 
This step simply moves hypernodes from the overloaded block to the underloaded block. 
Note that we cannot move arbitrary hypernodes while keeping the bipartition acyclic. 
If we have an acyclic bipartition with blocks $V_1$ and $V_2$, and a quotient graph edge from $V_1$ to $V_2$, we can only move hypernodes in $V_1$ that have no successors in $V_1$. 
In an effort the keep the connectivity of the bipartition low, we sort the movable hypernodes in the overloaded block by their gain value, i.e.,
the reduction in the objective function if the node is moved, using a priority queue, and run our $2$-way FM refinement algorithm described in Section~\ref{ssec:refinement} afterwards.

\begin{algorithm}[b!]
	\KwData{Cyclic bipartition $(V_1, V_2)$ of DAH $H = (V, E)$.}	
	\KwResult{Acyclic bipartition.}
	$S \coloneqq$ \KwNew \texttt{Stack}$()$\; 
	\For{$u \in V_1$}{
		\If{$\Gamma^{+}(u) \cap V_2 \neq \emptyset$}{
			$S \coloneqq S \cup \{ u \}$\;
		}
	}
	\While{$S \neq \emptyset$}{
		$u \coloneqq S$.\texttt{pop}$()$\;
		\For{$v \in \Gamma^{+}(u) \cap V_2$}{
			$S \coloneqq S \cup \{ v \}$\;
			$V_1 \coloneqq V_1 \cup \{ v \}$\;
			$V_2 \coloneqq V_2 \setminus \{ v \}$\;
		}
	}
	\KwRet{$(V_1, V_2)$}
	\caption{Subroutine \texttt{FixCyclic}$(\cdot)$ referenced in Algorithm~\ref{alg:undir_ip}: moves nodes to make a bipartition acyclic.} 
	\label{alg:makebipartitionacyclic}
\end{algorithm}


\subsection{Acyclic Coarsening}
\label{ssec:coarsening}

Having computed an initial partition via one of the two techniques described in the previous section, we proceed to the \emph{acyclic} coarsening phase.
By restricting contractions to pairs of vertices that are in the same block, we prevent the quotient graph from becoming cyclic.
However, contractions within a block may still lead to cycles within the directed hypergraph.

Our approach therefore restricts the coarsening algorithm in KaHyPar to pairs of hypernodes that can be safely contracted while keeping the hypergraph acyclic. 
First, we compute a clustering in the DAH. This clustering can be safely contracted, i.e., the coarser directed hypergraph will not contain a cycle.
After we have computed the clustering, we perform cluster contraction by iteratively contracting pairs of hypernodes that are inside the same cluster (yielding an $n$-level algorithm). 
When working with hypergraphs where each hyperedge contains at most one head pin, we allow the algorithm to contract pairs of hypernodes where one hypernode is a head and the other one is a tail in the same hyperedge. 
In this case, the contracted hypernode becomes a head in the hyperedge. 
For hypergraphs that contain hyperedges with multiple heads, we restrict the hypernode pairs to such where both hypernodes have the same role in all shared hyperedges.
Our algorithm to compute clusterings is based on Theorem~\ref{thm:forbidden_edges_hyperdag}, which identifies pairs of hypernodes that should not be in the same cluster.
We start by restating Theorem 4.2 of Herrmann~\etal~\cite{DBLP:journals/siamsc/HerrmannOUKC19} in Theorem~\ref{thm:forbidden_edges}, which identifies forbidden edges in DAGs (edges that if contracted may create a cycle).
We extend this to DAHs in Theorem~\ref{thm:forbidden_edges_hyperdag}.

\begin{Satz}{(Forbidden Edges \cite{DBLP:journals/siamsc/HerrmannOUKC19}.)}\label{thm:forbidden_edges}
	Let $G = (V, E)$ be a DAG and $\mathcal{C} = \{ \mathcal{C}_1, \dots, \mathcal{C}_k \}$ be a clustering of $V$.
	If $\mathcal{C}$ is such that
		for any cluster $\mathcal{C}_i$ and for all $u, v \in \mathcal{C}_i$, it holds that $\lvert \texttt{top}[u] - \texttt{top}[v] \rvert \le 1$, \emph{and}
		for two different clusters $\mathcal{C}_i$ and $\mathcal{C}_j$, and for all $u \in \mathcal{C}_i$ and $v \in \mathcal{C}_j$, either $(u, v) \notin E$, or $\texttt{top}[u] \neq \texttt{top}[v] - 1$,
	then the coarser graph that results from simultaneously contracting all clusters is acyclic. 
\end{Satz}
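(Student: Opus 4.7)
The plan is to argue by contradiction: suppose that simultaneously contracting all clusters in $\mathcal{C}$ yields a coarser graph $\bar G$ that still contains a cycle through clusters $\mathcal{C}_{i_1} \to \mathcal{C}_{i_2} \to \dots \to \mathcal{C}_{i_\ell} \to \mathcal{C}_{i_1}$ with $\ell \ge 2$ and pairwise distinct $i_j$. Each edge on this cycle is witnessed by some original edge $(u_j, v_j) \in E$ with $u_j \in \mathcal{C}_{i_j}$ and $v_j \in \mathcal{C}_{i_{j+1}}$, where indices are taken modulo $\ell$. I will derive a contradiction by showing that the minimum toplevel of the visited clusters must strictly increase along the cycle.

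The first ingredient is a sharpened toplevel inequality for inter-cluster edges. For any edge $(u,v) \in E$, the definition of \texttt{top} as the length of the longest path from an indegree-zero node to its argument gives $\texttt{top}[v] \ge \texttt{top}[u]+1$ (extend a longest path to $u$ by $(u,v)$). Applied to the inter-cluster edge $(u_j,v_j)$, condition~(2) forbids $\texttt{top}[u_j] = \texttt{top}[v_j]-1$, so in fact $\texttt{top}[v_j] \ge \texttt{top}[u_j] + 2$. Next, for each cluster $\mathcal{C}_{i_j}$ on the cycle, define
\[
  m_j \defeq \min_{w \in \mathcal{C}_{i_j}} \texttt{top}[w], \qquad M_j \defeq \max_{w \in \mathcal{C}_{i_j}} \texttt{top}[w],
\]
and note that condition~(1) gives $M_j \le m_j + 1$ for every $j$.

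Chaining these bounds yields the core step: $M_{j+1} \ge \texttt{top}[v_j] \ge \texttt{top}[u_j] + 2 \ge m_j + 2$, and then the single-level/mixed-level width bound $M_{j+1} \le m_{j+1} + 1$ gives $m_{j+1} \ge m_j + 1$. Iterating around the cycle and closing it up produces $m_{i_1} \ge m_{i_1} + \ell \ge m_{i_1}+2$, which is the desired contradiction, so $\bar G$ must be acyclic. The only subtlety I foresee is bookkeeping: making sure condition~(2) is invoked with the orientation in the right direction (head versus tail), that the witnesses $(u_j,v_j)$ really exist because inter-cluster quotient edges in $\bar G$ come from genuine edges in $E$, and that the edge $(u_j,v_j)$ is inter-cluster (i.e., $\mathcal{C}_{i_j} \ne \mathcal{C}_{i_{j+1}}$), which holds since the $i_j$ are distinct along the cycle. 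Everything else is a mechanical application of the two conditions.
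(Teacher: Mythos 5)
Your argument is correct. For an edge $(u,v)$ of a DAG one always has $\texttt{top}[v] \ge \texttt{top}[u]+1$ (append $(u,v)$ to a longest path into $u$; acyclicity guarantees the result is still a simple path), so condition~(2) indeed upgrades every inter-cluster edge to $\texttt{top}[v_j] \ge \texttt{top}[u_j]+2$, and combining this with the width bound $M_{j+1}\le m_{j+1}+1$ from condition~(1) gives the strict increase $m_{j+1}\ge m_j+1$ of the minimum toplevel around any quotient cycle with pairwise distinct clusters, which is the contradiction you need. The witnesses $(u_j,v_j)$ exist because quotient edges are induced by original edges, and the distinctness of the $i_j$ is exactly what licenses the application of condition~(2), so the two ``subtleties'' you flag are handled.

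One point of comparison is worth making explicit: the paper does \emph{not} prove this theorem. It is restated verbatim from Herrmann et al.\ \cite{DBLP:journals/siamsc/HerrmannOUKC19} and used as a black box inside the proof of Theorem~\ref{thm:forbidden_edges_hyperdag}, so there is no in-paper proof to measure yours against. That said, your toplevel-monotonicity argument is exactly the style the paper itself uses for the hypergraph generalization (tracking the lowest toplevel of consecutive clusters on a hypothetical cycle and showing it cannot return to its starting value), so your proof both fills the gap the paper leaves to the citation and is consistent with the surrounding reasoning. Minor cosmetic issues only: the notation drifts between $m_1$ and $m_{i_1}$, and the phrase ``head versus tail'' borrows hypergraph vocabulary that is not needed in the plain DAG setting.
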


\begin{Satz}(Forbidden Hyperedges)\label{thm:forbidden_edges_hyperdag}
	Let $H = (V, E)$ be a DAH and $\mathcal{C} = \{\mathcal{C}_1, \dots, \mathcal{C}_k\}$ be a clustering of $V$, such that 
		for any cluster $\mathcal{C}_i$ and for all $u, v \in \mathcal{C}_i$, $\lvert \texttt{top}[u] - \texttt{top}[v] \rvert \le 1$, \emph{and}
		for two different \textbf{mixed-level clusters} $\mathcal{C}_i$ and $\mathcal{C}_j$, and for all $u \in \mathcal{C}_i$ and $v \in \mathcal{C}_j$, 
		either there is no hyperedge $e \in E$ with $u \in e^T$ and $v \in e^H$, or $\lvert \texttt{top}[u] - \texttt{top}[v] \rvert > 1$,
	then the coarser DAH that results from simultaneously contracting all clusters is acyclic. 
\end{Satz}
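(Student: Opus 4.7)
My plan is to proceed by contradiction, lifting the argument behind Theorem~\ref{thm:forbidden_edges} to the hypergraph setting. Suppose the coarser DAH obtained by simultaneously contracting the clusters of $\mathcal{C}$ contains a cycle. Unrolling the definition of a cycle in the coarser hypergraph, this yields a sequence of distinct clusters $\mathcal{C}_{i_0}, \dots, \mathcal{C}_{i_{\ell-1}}$ (with $\ell \ge 2$), original hyperedges $e_0, \dots, e_{\ell-1} \in E$, and original hypernodes $u_j \in \mathcal{C}_{i_j}$, $v_{j+1} \in \mathcal{C}_{i_{j+1}}$ (indices modulo $\ell$) such that $u_j \in e_j^T$ and $v_{j+1} \in e_j^H$ for every $j$. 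Existence of these witnesses is immediate from how hyperedges project under contraction.

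The next step is a toplevel analysis along the cycle. Since $H$ itself is a DAH and $\texttt{top}$ measures the longest path from a source, $u_j \in e_j^T$ together with $v_{j+1} \in e_j^H$ forces $\texttt{top}[v_{j+1}] \ge \texttt{top}[u_j] + 1$. At the same time, assumption~(i) applied within $\mathcal{C}_{i_j}$ gives $\texttt{top}[u_j] \ge \texttt{top}[v_j] - 1$. Chaining these two inequalities yields $\texttt{top}[v_{j+1}] \ge \texttt{top}[v_j]$, so the sequence $(\texttt{top}[v_j])_{j}$ is non-decreasing around the cycle. Closing the loop with $v_\ell = v_0$ forces equality throughout: $\texttt{top}[v_{j+1}] = \texttt{top}[u_j] + 1 = \texttt{top}[v_j]$ for every $j$.

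From this tightness I extract two facts. First, $\texttt{top}[u_j] = \texttt{top}[v_j] - 1$, so each $\mathcal{C}_{i_j}$ contains two nodes whose toplevels differ by exactly one and is therefore a \emph{mixed-level} cluster. Second, $\lvert \texttt{top}[u_j] - \texttt{top}[v_{j+1}] \rvert = 1$. Condition~(ii) now applies to the distinct mixed-level clusters $\mathcal{C}_{i_j}$ and $\mathcal{C}_{i_{j+1}}$ with witnesses $u_j \in \mathcal{C}_{i_j}$ and $v_{j+1} \in \mathcal{C}_{i_{j+1}}$: it demands that either no hyperedge with $u_j$ in the tail and $v_{j+1}$ in the head exists, or the toplevel gap strictly exceeds one. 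The hyperedge $e_j$ violates the first option, and the gap is exactly $1$, violating the second, a contradiction.

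The core mechanism is the same as in the DAG version, so the main obstacle is justifying that weakening condition~(ii) to apply only between \emph{mixed-level} cluster pairs is still sufficient. The toplevel equality chain in the second paragraph is precisely what handles this: any hypothetical cycle in the quotient graph is forced to pass exclusively through mixed-level clusters, so the weakened assumption still triggers at every step. Apart from this, the argument is a routine adaptation; the only hypergraph-specific bookkeeping is the clean passage between coarser-cycle edges and forbidden tail/head witnesses $(u_j, v_{j+1})$ in $H$, which follows directly from the contraction semantics.
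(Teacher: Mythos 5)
Your proof is correct, but it takes a genuinely different route from the paper's. The paper first reduces to the directed graph case by replacing each hyperedge $e$ with a complete bipartite graph from $e^T$ to $e^H$, invokes Theorem~\ref{thm:forbidden_edges} to conclude that any cycle in the coarse hypergraph must pass through at least one \emph{single-level} cluster $C_i$, and then derives a contradiction locally around that cluster: taking $t$ as the lowest toplevel in the predecessor $C_{i-1}$, the toplevels in $C_i$ and $C_{i+1}$ are forced to be at least $t+1$, so no path can return from $C_{i+1}$ to $C_{i-1}$. You instead work entirely within the hypergraph and trace the whole cycle: the chain $\texttt{top}[v_{j+1}] \ge \texttt{top}[u_j]+1 \ge \texttt{top}[v_j]$ closes up around the cycle, forcing every inequality to be tight, which shows every cluster on the cycle is mixed-level with adjacent witnesses at toplevel distance exactly $1$, directly contradicting condition~(ii). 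Your version buys self-containment (no appeal to the bipartite-expansion equivalence or to Theorem~\ref{thm:forbidden_edges}) and makes completely explicit why the weakening of condition~(ii) to mixed-level pairs is harmless, whereas the paper's version is shorter by delegating the mixed-level case to the known DAG result. Both arguments share the same degenerate blind spot: neither rules out a length-one cycle (a cluster containing both a tail and a head pin of the same hyperedge), which is instead excluded by the contraction semantics described in Section~\ref{ssec:coarsening}; your assertion that $\ell \ge 2$ silently relies on this, just as the paper does.
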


\begin{proof}
First, note that given a directed hypergraph $H$, one can construct a directed graph $G$ that is equivalent to the hypergraph in regards to the acyclicity constraint by replacing each hyperedge $e$ by a complete bipartite graph from $e^T$ to $e^H$. 
$G$ is acyclic if and only if $H$ is acyclic and an acyclic partition of $G$ is also an acyclic partition of $H$ and vice-versa. 
        Assume that the coarser hypergraph contains a cycle. 
        With the equivalence above and Theorem~\ref{thm:forbidden_edges}, the cycle must contain at least one single-level cluster $C_i$.
        Moreover, since the nodes of $C_i$ have the same toplevel, the cycle must have length at least $2$.
        Let $C_{i - 1}$ be the predecessor and $C_{i + 1}$ be the successor of $C_i$ in the cycle. 
        Let $t$ be the lowest toplevel of nodes in $C_{i - 1}$. 
        Then, the toplevel of nodes in $C_i$ and $C_{i + 1}$ is at least $t + 1$, which forbids a path from $C_{i + 1}$ to $C_{i - 1}$, a contradiction. 
\end{proof}

Note that the difference between Theorem~\ref{thm:forbidden_edges} (Theorem 4.2~\cite{DBLP:journals/siamsc/HerrmannOUKC19}) and Theorem~\ref{thm:forbidden_edges_hyperdag} lies in the distinction between single-level and mixed-level clusters: the second condition must only hold for pairs of mixed-level clusters. 
Since adjacent nodes in DAGs always have different toplevels, the clustering algorithm by Herrmann \etal \cite{DBLP:journals/siamsc/HerrmannOUKC19} only produces mixed-level clusters. 
In contrast, DAHs might contain adjacent hypernodes with the same toplevel, justifying this~distinction. 

Based on this theorem, our clustering algorithm works as follows. 
At first, all hypernodes are in their own singleton cluster.
For each hypernode $u$ that is still in a singleton cluster, we rate each neighbor using the heavy-edge rating function already implemented in KaHyPar. 
We select the highest-rated neighbor $v$ whose cluster can include $u$ without violating the first condition from Theorem~\ref{thm:forbidden_edges_hyperdag}. 
If all hypernodes in $v$'s cluster have the same toplevel as $u$, we know that we can safely add $u$ to $v$'s cluster without inducing a cycle in the coarser hypergraph. 
Otherwise, we temporarily add $u$ to $v$'s cluster, search the hypergraph for edges violating the second condition in Theorem~\ref{thm:forbidden_edges_hyperdag},
and then check whether they induce a cycle in the contracted hypergraph. 
To be more precise, let the toplevel of hypernodes in $v$'s cluster be $t$ and $t + 1$. 
We maintain a queue of hypernodes that are to be processed. 
Initially, the queue contains all hypernodes in $v$'s cluster with toplevel $t$. 
For each hypernode $x$ in the queue, we examine its successors. 
If we find that a successor $y$ is in another cluster, we add all hypernodes from $y$'s cluster with toplevel $t$ to the queue. 
If $y$ is in $v$'s cluster, but $x$ is not, the search found a cycle in the coarsened DAH. 
At this point, we abort the search, remove $u$ from $v$'s cluster and move on to the next hypernode. 
If the search does not find a cycle, we leave $u$ in $v$'s cluster and move on.

After one round of clustering, we contract all hypernodes inside the same cluster pair by pair and start the next round of the algorithm on the resulting hypergraph. 
To introduce more diversity between rounds, we alternate between using toplevels and reversed toplevels, i.e., the maximum distance from a node to any node with outdegree zero. 
This process is repeated until the algorithm can no longer find any non-singleton clusters or the number of hypernodes drops below $160 k$. The second condition is the same stopping criterion used in \cite{KaHyPar}.

\subsection{Acyclic Refinement}
\label{ssec:refinement}
During uncoarsening, contraction operations are undone and after each uncontraction, we execute a localized refinement algorithm to improve the solution.
As our algorithm recursively bipartitions the input hypergraph, we use a $2$-way local search algorithm that 
improves the objective function by exchanging nodes between two blocks.
The memetic algorithm discussed in the next sections additionally employs a $k$-way local search algorithm to improve individuals.
In the following, we define the \emph{gain} of a node move as its resulting reduction in the objective function.\\

\noindent\textbf{$2$-way FM Refinement.} 
In the $2$-way setting, we use a variation of the well-known FM algorithm~\cite{FMAlgorithm} to improve the partition.
This algorithm moves hypernodes with the highest gain between the two blocks, while making sure to only consider movements that keep the partition acyclic. 
Over the course of the algorithm, it keeps track of the best bipartition. 
Once a stopping criterion decides that the refinement is unlikely to find a further improvement of the bipartition, it rolls back to the best partition found. 

More precisely, the algorithm uses two priority queues (one for each block) to keep track of hypernodes and their gains.
Each priority queue contains movable hypernodes in the corresponding block and their respective gain.
A hypernode is movable if and only if the move does not violate the balance constraint and if it can be moved to the other block without causing the partition to become cyclic. 
During $2$-way refinement, this is easy to decide: Let $V_1$ and $V_2$ denote the blocks of the bipartition and assume that $V_2$ is the successor of $V_1$ in the quotient graph. 
Then, a hypernode in $V_1$ can be moved to $V_2$ if and only if it does not have any successors in $V_1$. 
Analogously, a hypernode in $V_2$ can be moved to $V_1$ if and only if it does not have any predecessors in $V_2$. 
Therefore, it is sufficient to keep track of the number of successors or predecessors that a hypernode has in the same block. 
We implement this using a simple array that we compute once at the start of the uncoarsening phase and then update appropriately after every uncontraction or vertex movement.
In particular, we can use this counter to decide whether new hypernodes become movable (counter becomes zero) or unmovable (counter becomes nonzero) after a move. 
We then insert/remove those hypernodes into/from the appropriate priority queue. 

Initially, both priority queues are empty. 
After uncontracting a hypernode, the resulting hypernodes and their partners are inserted into the priority queues if they are movable. 
If no vertex is movable, the refinement step is skipped and the next hypernode is uncontracted. 
Otherwise, the algorithm pulls the hypernode with the highest gain value from the priority queue and marks it as visited.
Visited hypernodes are excluded from the remainder of the pass. If the move does not violate
the balance and acyclicity constraints, the vertex is then moved to the opposite block. Afterwards,
all unmarked movable neighbors of the moved vertex are inserted into their corresponding priority queue
and the gain values of all affected hypernodes are updated. 
This process continues until the stopping criterion decides that further improvements are unlikely or both priority queues become empty. The algorithm then reverts to the best partition found. 
We stop local search after \numprint{350} subsequent moves that did not lead to an improved cut.\\

\noindent\textbf{$k$-way FM Refinement.}
The $k$-way FM refinement aims to improve a given $k$-way partition and is based on the $k$-way FM refinement algorithm implemented in KaHyPar \cite{KaHyPar}. 
The algorithm maintains $k$ priority queues, one queue for each block. 
Each queue holds hypernodes that can be moved to the block, with the priority being the gain value of the respective move. 
We limit the set of movable hypernodes to border hypernodes and only consider moving a hypernode to adjacent blocks. 
The algorithm always performs the best move across all priority queues and after the stopping criterion is reached, the best found partition during the process is restored. 
Here, we use the same stopping criterion as in KaHyPar~\cite{ahss2017alenex} which stops local search as soon as further improvements are unlikely using a statistical model.
In terms of neighborhood, we implement the global moves neighborhood from Moreira et al.~\cite{DBLP:conf/wea/MoreiraPS17} (with a natural extension to DAHs).
This neighborhood was among the best in their experiments and also had the largest local search neighborhood among all neighborhoods considered in the paper.
Starting from a given partition of the DAH, the algorithm computes the quotient graph. 
While the local search algorithm moves nodes between the blocks, the quotient graph is kept up-to-date.
After moving a node, we check whether this move created a new edge in the quotient graph. 
In this case, we check it for acyclicity using Kahn's algorithm~\cite{KahnsAlgorithm} and undo the last
movement if it created a cycle.

\subsection{Memetic DAH Partitioning}\label{ssec:ma}
Multilevel algorithms can be extended in a natural way to obtain memetic algorithms~\cite{DBLP:conf/wea/BiedermannH0S18,DBLP:conf/gis/AhujaBS0W15,kaffpaE,DBLP:conf/wea/Sanders016,DBLP:conf/gecco/00010SW17}.
Schlag~\etal~\cite{DBLP:conf/gecco/AndreS018} propose a multilevel memetic algorithm for the undirected hypergraph partitioning problem.
We adapt this algorithm for directed hypergraph partitioning by exchanging the algorithms used in the recombination and mutation operators for undirected hypergraphs
with the multilevel algorithms for the DAH partitioning problem described above.
From a meta-optimization perspective, the overall structure of the algorithm by Schlag~\etal~\cite{DBLP:conf/gecco/AndreS018} remains unchanged.
Hence, we follow their description closely.
For the sake of completeness, we now present the overall structure of the algorithm and explain the recombination and mutation operations.

Given a hypergraph $H$ and a time limit $t$, the algorithm starts by creating an initial
population of $\mathcal{P}$ \emph{individuals}, which correspond to $\varepsilon$-balanced $k$-way partitions of $H$. This is done by running our DAH partitioning algorithm multiple times with different random seeds.
Note that all individuals in the population fulfill the acyclicity constraint due to the way they are created.
The size of the population $|\mathcal{P}|$ is determined adaptively by first measuring the time $t_\text{I}$ spent to create one individual.
Then, the population size is chosen such that the time to create $|\mathcal{P}|$ individuals is a certain fraction $\delta$ of the total running time budget $t$:
$|\mathcal{P}| := \max(3,\min(50,\delta\cdot(t/t_I)))$, where $\delta$ is a tuning parameter. This is the same as Schlag~\etal~\cite{DBLP:conf/gecco/AndreS018} used in their work.
The \emph{fitness} of an individual is the objective function that we optimize: the connectivity $(\lambda -1)(\mathrm{\Pi})$ of its partition $\mathrm{\Pi}$.
Note that if the input is a DAG instead of a DAH, then the fitness function is identical to the edge cut.
Our algorithm follows the \emph{steady-state} paradigm~\cite{EvoComp}, i.e., only \emph{one} offspring is created per generation.
To generate a new offspring, we use the multilevel recombination operator described below.
In order to sufficiently explore the global search space and to prevent premature convergence, we employ multilevel mutation operations. 

Each operator always creates a new offspring $o$.
We insert it into the population and evict the individual \emph{most similar} to the offspring among all individuals whose fitness is equal to or worse than $o$.
For each individual, we compute the multi-set $D :=  \{(e, m(e)) : e \in E \} $, where $m(e) := \lambda(e) - 1$ is the
multiplicity (i.e., number of occurrences) of $e$. Thus each cut net $e$ is represented $\lambda(e) - 1$ times in $D$.
The difference of two individuals $I_1$ and $I_2$ is then computed as $d(I_1, I_2) := | D_1 \ominus D_2 |$, where $\ominus$ is the symmetric difference.

\subsection{Recombination Operators}\label{sec:combine}
By generalizing the recombination operator framework of Schlag~\etal~\cite{DBLP:conf/gecco/AndreS018} from
undirected hypergraphs to directed hypergraphs, the two-point recombine operator described in this section assures that the fitness of the offspring is
\emph{at least as good as the best of both parents}. \\

\noindent\textbf{Two-Point Recombine.}
The operator starts with selecting parents for recombination using binary tournament selection (without replacement)~\cite{BlickleT96} w.r.t. the $(\lambda-1)$ objective.
A tournament size of two is chosen to keep the selection pressure low and to avoid premature convergence,
since all individuals already constitute high-quality solutions.
Both individuals/partitions are then used as input of a modified multilevel scheme as follows:

During coarsening, two vertices $u$ and $v$ are only allowed to be contracted if \emph{both parents agree
on the block assignment of both vertices}, i.e., if $b_1[u] = b_1[v] \wedge b_2[u] = b_2[v]$. Originally, this
is a generalization from multilevel memetic GP, \ie\cite{kaffpaE}, where \emph{edges running between two blocks are not eligible} for contraction
and therefore remain in the graph. 
In other words, the generalization allows two vertices of the same cut net to be contracted as long as the input individuals agree that they belong to the same block.
For directed HGP, this restriction ensures that cut nets $e$ remain in the coarsened hypergraph
\emph{and} maintain their connectivity $\lambda(e)$ regarding \emph{both} partitions. This modification is important for the optimization objective,
because it allows us to use the partition of the \emph{better} parent as initial partition of the offspring.
The stopping criterion during coarsening is changed such that it stops when no more contractions are possible. 
During uncoarsening, local search algorithms can then use this initial partitioning to (i) exchange good parts of the solution on the coarse
levels by moving few vertices and (ii) to find the best block assignment for those vertices, for which the parent partitions disagreed.
Since our local search algorithms guarantee non-decreasing solution quality, the final fitness of offspring solutions generated using
this kind of recombination is always \emph{at least as good as the better of both parents}.

\subsection{Mutation Operations and Diversification}\label{sec:mutate}
Our mutation operators are based on V-cycles~\cite{WalshawVcycle}.
The V-cycle technique reuses an already computed partition, e.g.~from a random individual $I$ of the population, as input for the multilevel approach and iterates
coarsening and local search phases several times using different seeds for randomization.
During coarsening, the quality of the solution is maintained by only contracting vertex pairs $(u,v)$
belonging to the same block ($b[u] = b[v]$). We define two different
mutation operators: one uses the current partition of the individual as initial partition of the coarsest hypergraph and guarantees non-decreasing
solution quality. The other one first generates a new partition using the multilevel scheme and then takes a random individual of the population and recombines it with the just created individual.
During initial partitioning the newly created parent partition is used as initial partition.
In both cases during uncoarsening, $k$-way local search algorithms improve the solution quality and thereby further mutate the individual. 
Since the second operator computes uses a new initial partition which might be different from the original partition of $I$, the fitness of offspring generated by this operator can be worse than the fitness of $I$.

\section{Experimental Evaluation}\label{sec:eval}
\textbf{System and Methodology.}\label{Methodology}
We implemented the multilevel and the memetic algorithm described in the previous section within the KaHyPar hypergraph partitioning framework. 
The code is written in C++ and compiled using g++-9.1 with flags \texttt{-O3} \texttt{-march=native}.
We plan to release the code.
All experiments are performed on one core of a machine that has two Intel Xeon E5-2670 Octa-Core (Sandy Bridge) processors 
clocked at $2.6$ GHz, $64$~GB main memory, $20$~MB L3-Cache and~8x256~KB~L2-Cache. 

Our experiments are structured as follows: as DAGs are a special type of DAHs, we start this section by comparing our algorithm with the state-of-the-art
for DAG partitioning by Herrmann~\etal~\cite{DBLP:journals/siamsc/HerrmannOUKC19} (HOUKC -- abbreviations of the author's last names) and Moreira~\etal~\cite{DBLP:conf/gecco/MoreiraP018}.
We then show how our algorithms perform against other competing DAH partitioning approaches.
Lastly, we perform experiments on a target platform and show that using DAH partitioning instead of DAG partitioning yields improved performance in practice.

We use \texttt{mlDHGP} to refer to our multilevel algorithm and \texttt{memDHGP} for our memetic algorithm.
For partitioning, we use $\epsilon = 0.03$ as imbalance factor and $k \in \{2, 4, 8, 16, 32\}$ for the number of blocks,
since those are the values used in previous work by Herrmann~\etal~\cite{DBLP:journals/siamsc/HerrmannOUKC19}.
As stated before, we use two undirected hypergraph partitioning algorithms (KaHyPar and PaToH) to compute initial partitions, and use subscript letters to indicate which algorithm we are currently using (\texttt{memDHGP$_K$} for KaHyPar and \texttt{memDHGP$_P$} for PaToH).\\

\noindent \textbf{Performance Profiles.}
In order to compare the solution quality of different algorithms, we use performance profiles \cite{DBLP:journals/mp/DolanM02}.
These plots relate the smallest minimum connectivity of all algorithms to the corresponding connectivity produced by each algorithm.
More precisely, the $y$-axis shows $\#\{\text{objective} \leq \tau * \text{best} \} / \# \text{instances}$, where objective corresponds to
the result of an algorithm on an instance and best refers to the best result of any algorithm.
The parameter $\tau$ in this equation is plotted on the $x$-axis.
For each algorithm, this yields a non-decreasing, piecewise constant function.
Thus, if we are interested in the number of instances where an algorithm is the best, we only need to look at $\tau = 1$.

\noindent \textbf{Benchmark Instances.} \label{Instances}
Experiments are performed using benchmark instances from the Polyhedral Benchmark suite (PolyBench)~\cite{PolyBench}
and the~ISPD98 VLSI Circuit Benchmark Suite~\cite{ISPD98}. The PolyBench instances were kindly provided to us by Herrmann \etal \cite{DBLP:journals/siamsc/HerrmannOUKC19} as DAGs. The ISPD98 graphs are based on the respective circuits and contain one node for each cell and a directed edge from the source of a net to each of its sinks. 
In case the resulting instance does not form a DAG, i.e., contains
cycles, we do the following: We gradually add directed edges and skip those that would create a cycle. 
Basic properties of the instances can be found in Appendix~\ref{app:benchmark_instances}. 
To perform experiments with DAHs, we transform all graphs into hypergraphs using the row-net model~\cite{PaToH}.
A hypergraph contains one hypernode for each node of the DAG and one hyperedge for each node $u$ that has outgoing edges. The head of the hyperedge is $u$ and the tails are the successors of $u$. \\

\begin{figure}[ht!]
\centering%
\includegraphics[page=1, height=4.5cm]{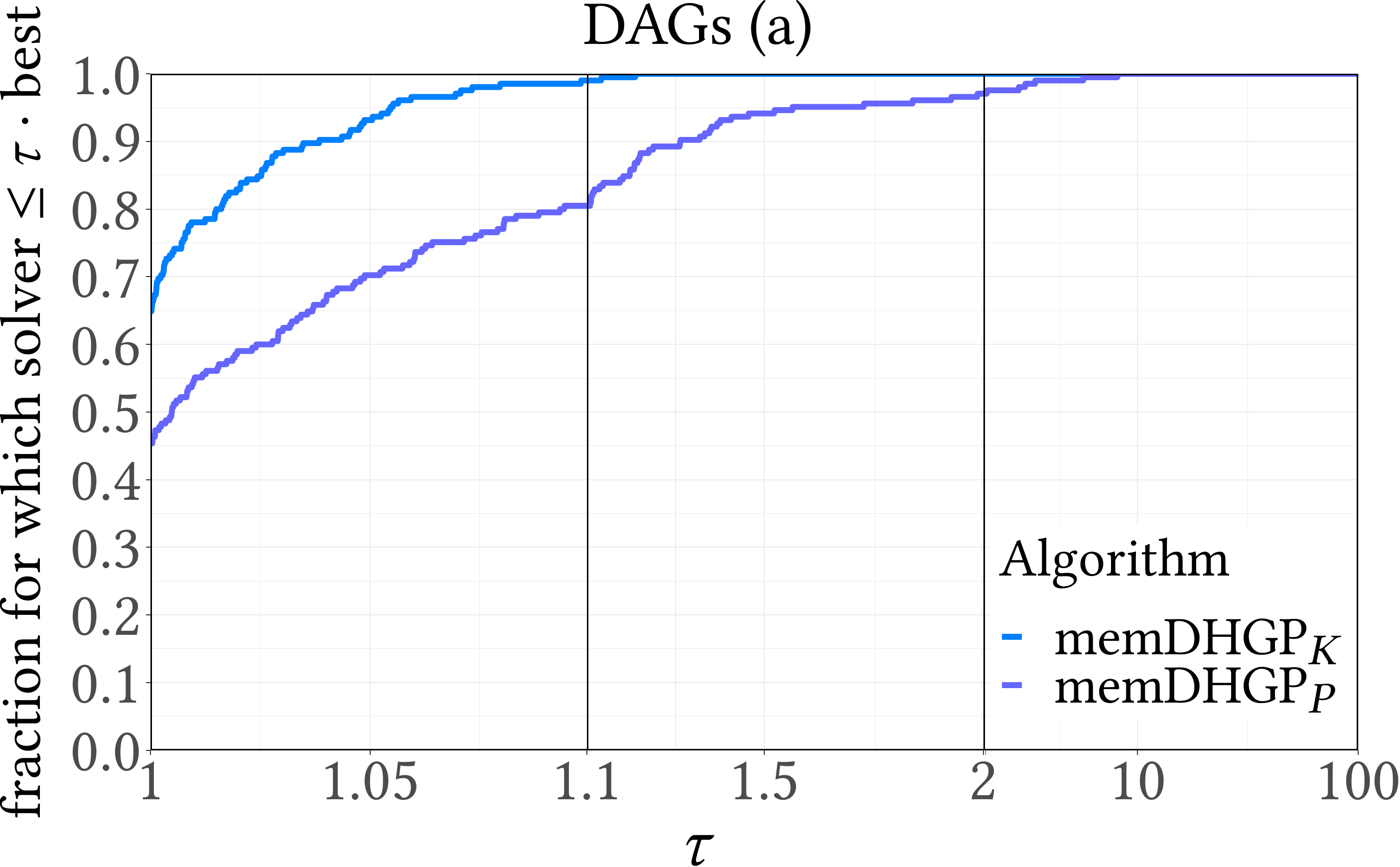}\\%
\vspace{0.3em}%
\includegraphics[page=1, height=4.5cm]{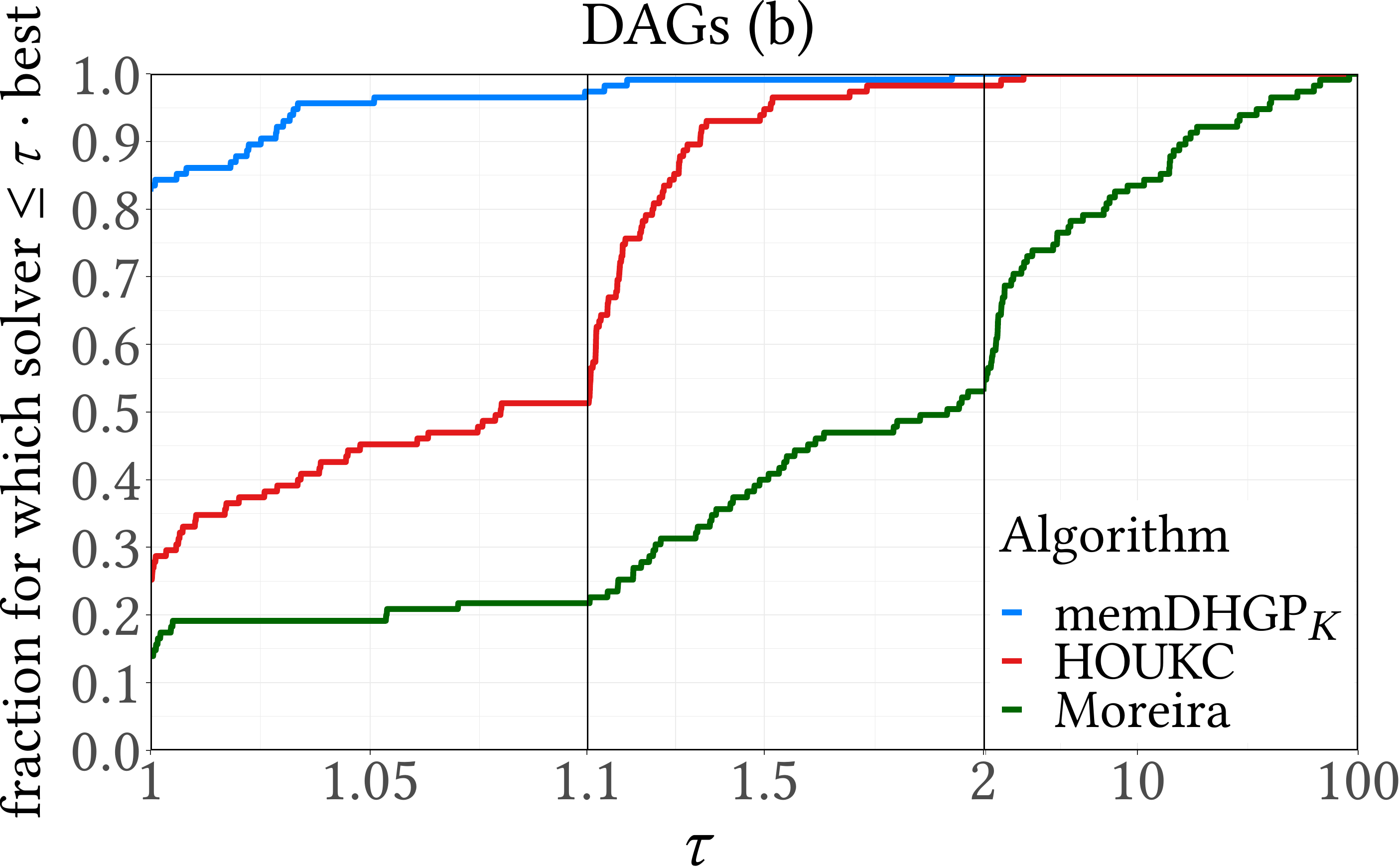}\\%
\vspace{0.3em}%
\includegraphics[page=1, height=4.5cm]{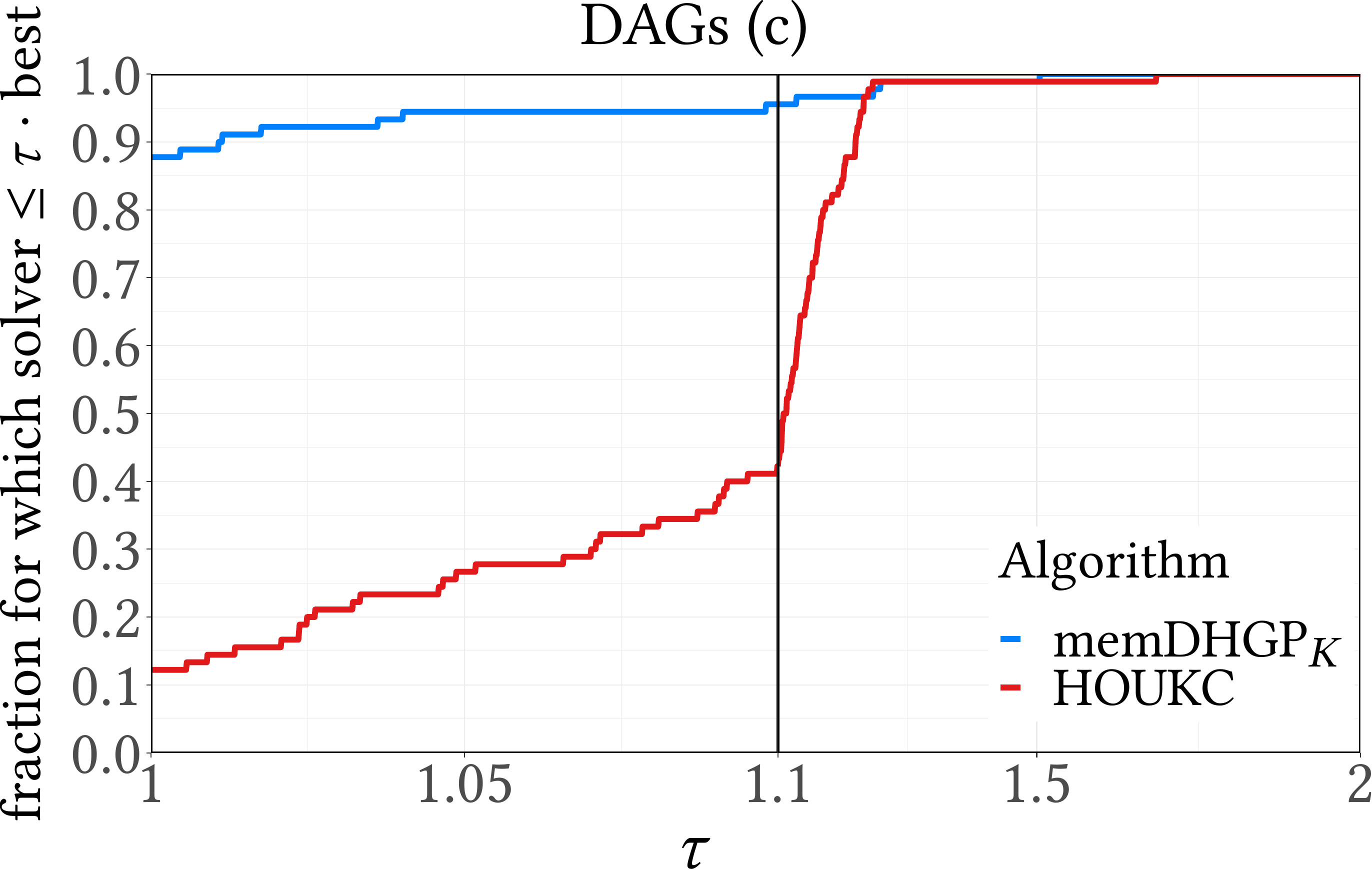}\\%
\vspace{0.3em}%
\includegraphics[page=1, height=4.5cm]{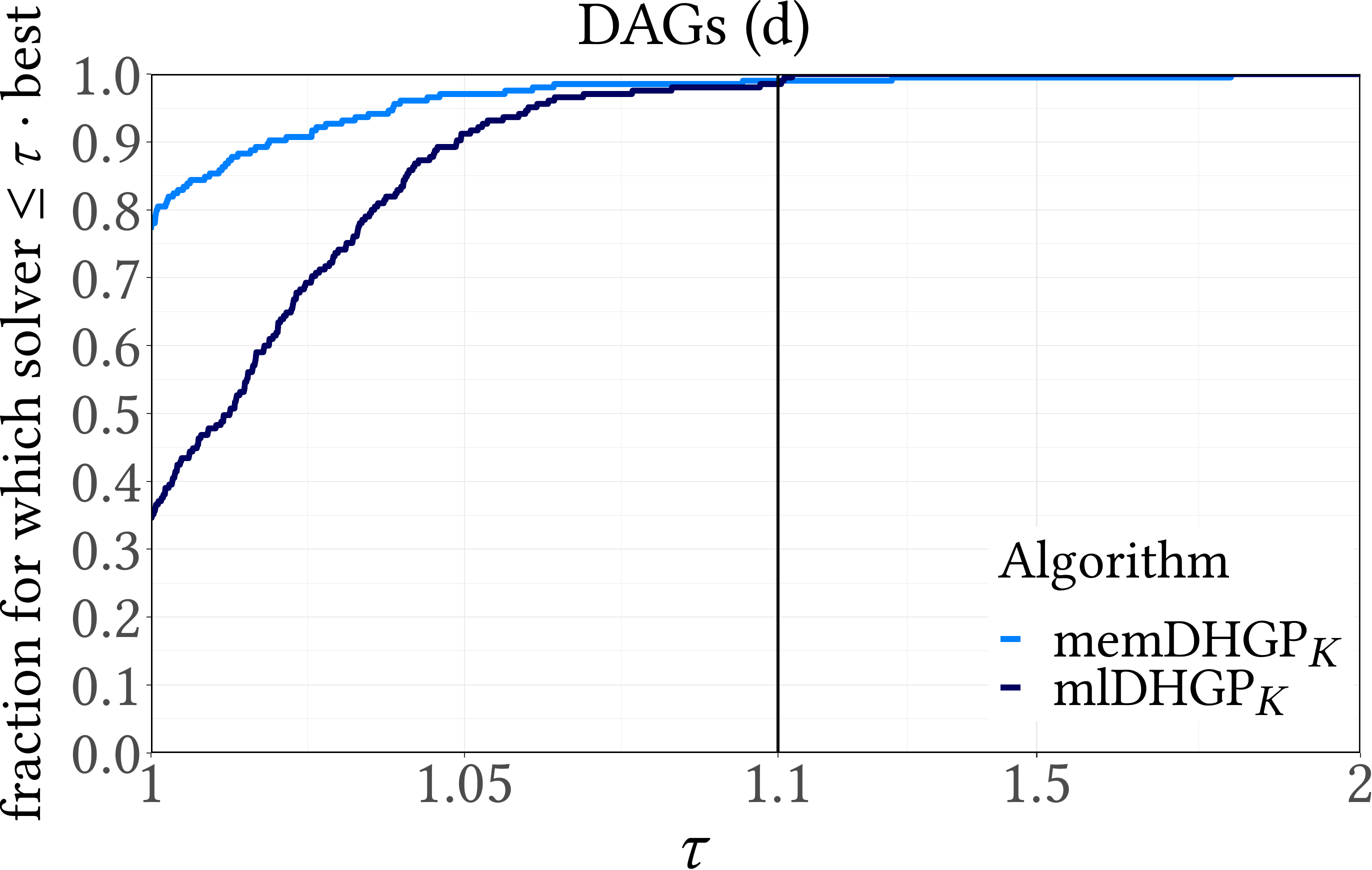}\\%
 
\caption{Performance profiles from top to bottom: (a) \texttt{memDHGP} with different initial partitioners on all instances as DAGs; (b) \texttt{HOUKC}, \texttt{memDHGP} and Moreira on PolyBench benchmark set as DAGs; (c) \texttt{HOUK}, and \texttt{memDHGP} memetic algorithms on ISPD98 benchmark set as DAGs; (d) Performance profiles on all instances as DAGs, for \texttt{memDHGP} and \texttt{mlDHGP} equipped with KaHyPar as initial partitioning algorithm.}
\vspace*{-.5cm}
\label{fig:expsummary}
\end{figure}

\begin{figure}[t]
\centering%
\includegraphics[height=4.5cm]{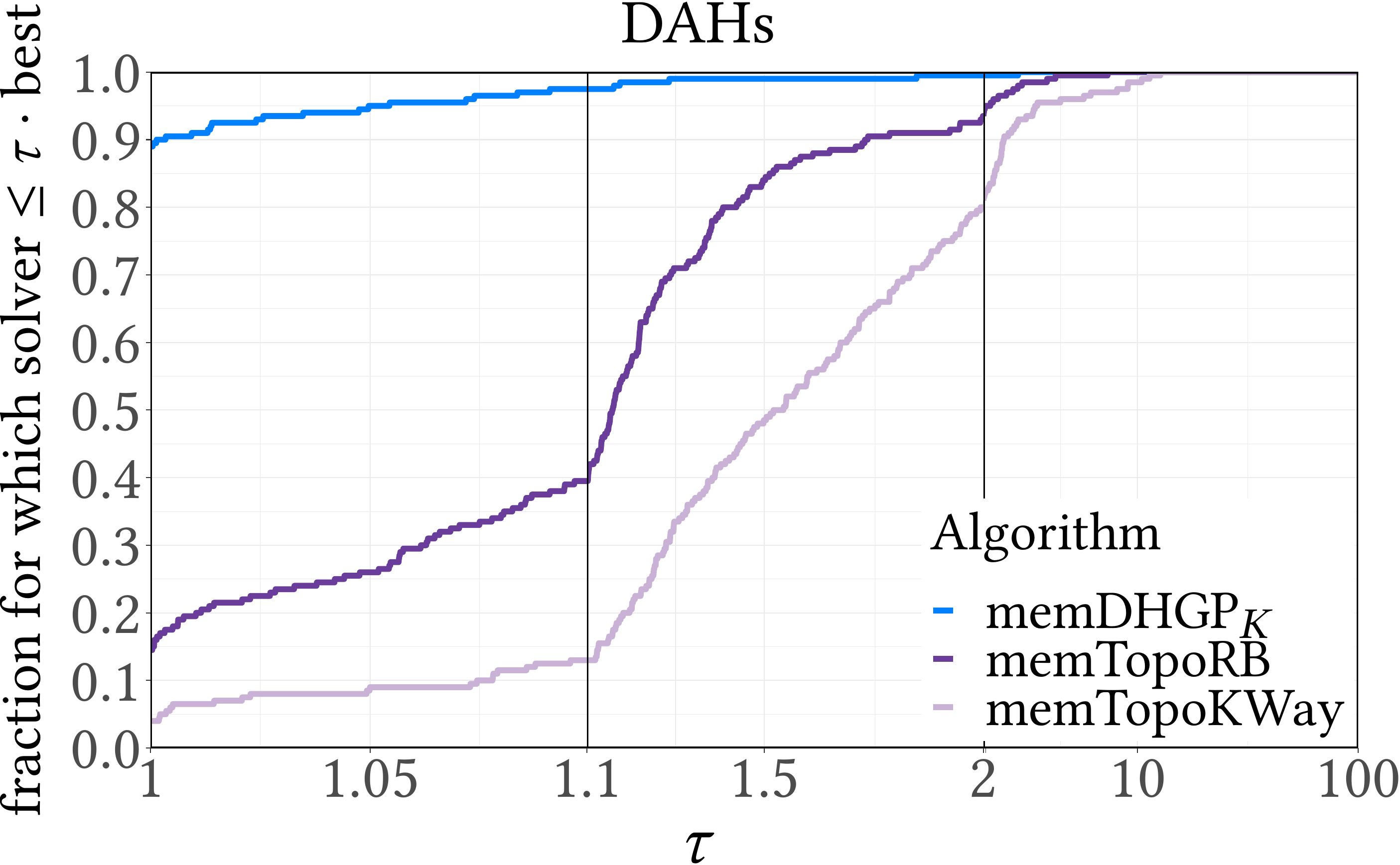}\\%
  
  \caption{Performance profile of \texttt{memDHGP}, \texttt{memTopoRB} and \texttt{memTopoKWay} on all instances as DAHs.}
  \vspace*{-.5cm}
  \label{fig:expsummary_dah}
\end{figure}

\subsection{Acyclic DAG Partitioning.} \label{subsec:eval}
We now focus on the DAG case, i.e., we evaluate all algorithms for the case that the input is a DAG (not a DAH).
\ifTR
Figure~\ref{fig:expsummary} summarizes the results while Tables~\ref{tab:detailedresults:cfg:1}-\ref{tab:detailedresults:ispd98:2} in Appendix~\ref{app:dags} give detailed per-instance results.
\else 
Figure~\ref{fig:expsummary} summarizes the results while detailed per-instance results are available in our technical report \cite{popp2020multilevelTR}.
\fi
We focus on the case in which all solvers are given 8 hours of time to compute a solution (except for the adi instance, for which we give all solvers 24 hours) in order to make fair comparisons.
The detailed data also contains average results of single algorithm executions.
First, we compare different initial partitioning algorithms.
Overall, as can be seen in Figure~\ref{fig:expsummary} (a), KaHyPar seems to be the better choice as initial~partitioning~algorithm,
as it computes the best solutions for 65\% of all instances and is within 10\% of PaToHs solutions for almost all cases.

We now compare \texttt{memDHGP$_K$} with \texttt{HOUKC} and the evolutionary algorithm of Moreira~\etal~\cite{DBLP:conf/gecco/MoreiraP018}. Looking
at the results for the PolyBench instances in Figure~\ref{fig:expsummary} (b), we see that \texttt{memDHGP$_K$}
performs considerably better than both DAG partitioners -- computing the best solutions for more than $82\%$ of all
instances. Furthermore, it is within a factor of $1.1$ of the best algorithm in more than $95\%$ of all instances,
while \texttt{HOUKC} and Moreira are only within a factor of $1.1$ for $51\%$ (resp. $22\%$) of all instances. 
\ifTR
Considering the results presented in Tables~\ref{tab:detailedresults:cfg:1}-\ref{tab:detailedresults:ispd98:2},
we note that \texttt{memDHGP$_K$} computes 11.1\% better cuts on average than the previous state-of-the-art algorithm \texttt{HOUKC} on the PolyBench instances.
\else 
Considering the detailed per-instance results presented in \cite{popp2020multilevelTR},
we note that \texttt{memDHGP$_K$} computes 11.1\% better cuts on average than the previous state-of-the-art algorithm \texttt{HOUKC} on the PolyBench instances.
\fi 
It computes a better or equal result in \numprint{104} out of \numprint{115} cases and a strictly better result in \numprint{85} cases.
The largest improvement is observed on instance \texttt{covariance}, where Herrmann \etal compute
a result of \numprint{34 307} for $k=2$ and our algorithm computes a result of \numprint{11281}.
On the ISPD98 instances, our algorithm computes a better result in 79 out of 90 cases. On average, the result is improved by 9.7\% over \texttt{HOUKC}.

Figure~\ref{fig:expsummary} (d) shows the effect of the memetic algorithm. Here, we compare the results of \texttt{memDHGP$_K$} to the results
produced by our multilevel algorithm without the memetic component when run for 8 hours of time using different random seeds to compute a partition.
The experiments indicate that using a memetic strategy is more effective than repeated restarts of the algorithm.
Overall, \texttt{memDHGP$_K$} computes better cuts than \texttt{mlDHGP$_K$} for $76\%$ of all instances (i.e., in 155 out of 205 cases).
The largest improvement over repeated trials is 10\% on the PolyBench instances -- observed on instance 2mm for $k=8$ -- and 8\% on ibm01 for $k=2$.

\subsection{Acyclic DAH Partitioning.} \label{subsec:evalDAH}
We now switch to the DAH case. We exclude the adi instance from the PolyBench set since it is too large to be partitioned within a reasonable time using our multilevel algorithm.
Since \texttt{HOUKC} and Moreira are unable to partition DAHs and since there is \emph{no} previous work on the DAH partitioning problem that we are aware of, we compare \texttt{memDHGP} (using KaHyPar for initial partitioning) with two configurations that use different approaches to compute the initial population:
\texttt{memTopoRB} generates individuals by first using Kahn's algorithm to compute a topological order of the DAH, afterwards obtains two blocks by splitting along the ordering into two blocks, performs $2$-way local search, and then proceeds recursively. 
\texttt{memTopoKWay} generates individuals by computing one topological ordering, \emph{directly} splits the DAH into $k$ blocks, and then uses $k$-way local search as described above to refine the solution. 
For \texttt{mlDHGP}, \texttt{mlTopoRB}, and \texttt{mlTopoKWay}, the memetic algorithm is disabled, i.e.,
we just use the partitions computed by KaHyPar or topological ordering approaches as input to our multilevel algorithm. 
For \texttt{TopoRB} and \texttt{TopoKWay}, we furthermore disable the multilevel algorithm.
\ifTR
Table~\ref{tbl:dah_partitioning} summarizes the experiments, while Tables~\ref{app:dahs}.\ref{tab:detailedresultshg:cfg:1}-\ref{app:dahs}.\ref{tab:detailedresultshg:ispd98:2} give detailed per-instances results.
\else 
Table~\ref{tbl:dah_partitioning} summarizes the experiments, while detailed per-instance results are available in our technical report \cite{popp2020multilevelTR}.
\fi 


\begin{table}[b!]
\vspace*{-.75cm}
	\centering
	  \caption{Geometric mean solution quality of the best results out of multiple repetitions for different algorithms on PolyBench and ISPD98 instances (as DAHs).}\label{tbl:dah_partitioning}
\vspace*{.25cm}
	\begin{tabular}{l|r}
	Algorithm & gmean \\
					  \midrule
	\texttt{TopoRB}  & 16\ 571 \\ 
	\texttt{mlTopoRB}& \numprint{9128} \\
	\texttt{memTopoRB} (8h) &8\ 071 \\
					  \midrule
										
	\texttt{TopoKWay}  &18 161 \\
	\texttt{mlTopoKWay}  & \numprint{13605} \\
	\texttt{memTopoKWay} (8h) &10\ 643 \\
					  \midrule
	\texttt{mlDHGP}  & \numprint{7244}  \\
	\texttt{memDHGP} (8h) &6\ 526  \\ 
	
	\end{tabular}
	\end{table}


Looking at Table~\ref{tbl:dah_partitioning} it can be seen that using recursive bisection has a significant advantage over using a direct $k$-way scheme. On average, \texttt{TopoRB} improves quality over \texttt{TopoKWay} by 9.6\%. 
This gets even more pronounced when using these algorithms as input to our multilevel algorithm where \texttt{mlTopoRB} improves partitions by 49\% over \texttt{mlTopoKWay}.
We believe that this is due to the fact that because of the acyclicity constraint, the $k$-way search space is much more fractured than the $2$-way search space.

Using KaHyPar as algorithm for initial partitioning in our multilevel algorithm (i.e., \texttt{mlDHGP}) improves the result significantly over \texttt{mlTopoRB}.
This is expected, since the multilevel algorithm adds a more global view to the optimization landscape.
The average improvement of \texttt{mlDHGP} is 26\% over \texttt{mlTopoRB} and 128.8\% over the single level \texttt{TopoRB} algorithm.
Switching to the memetic algorithm \texttt{memDHGP} improves the result over \texttt{mlDHGP} by another 11\% on average.
Using a high-quality algorithm to build the initial population has a clear advantage:
Considering Table~\ref{tbl:dah_partitioning}, \texttt{memTopoRB} and \texttt{memTopoKWay} compute 23.7\% and 63.1\% worse solutions on average. 

Looking at Figure~\ref{fig:expsummary_dah}, we see that \texttt{memDHGP} computes the best solutions for almost $90\%$ of all instances,
while the solutions of \texttt{memTopoRB} and \texttt{memTopoKWay} are more than a factor of 1.1 worse than the best
algorithm for more than 60\% (resp. 80\%) of all cases.

\begin{figure}[t!]
\centering%
\includegraphics[page=1, width=.75\columnwidth]{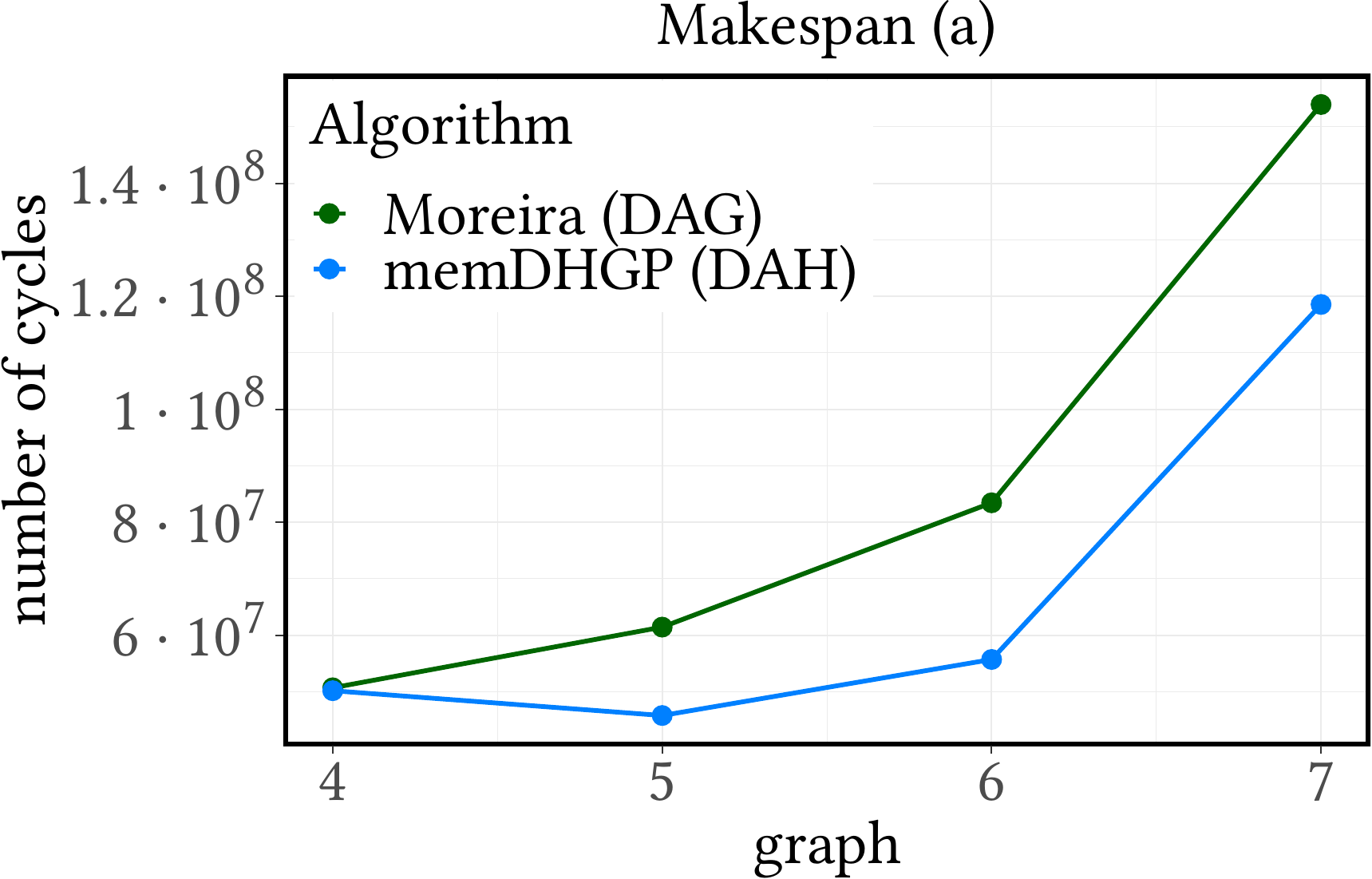}\\%
\vspace{0.3em}%
\includegraphics[page=1, width=.75\columnwidth]{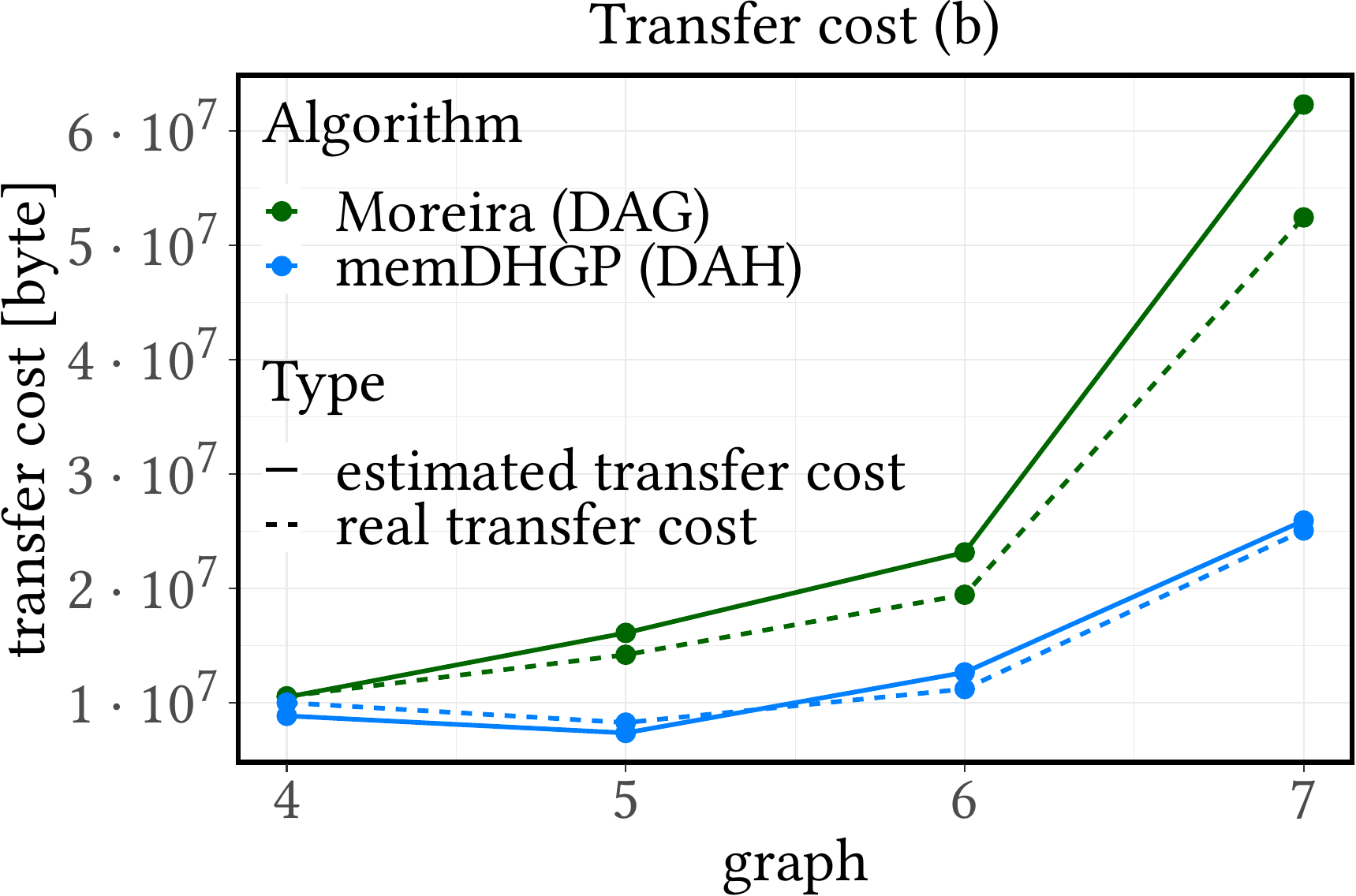}\\%
\caption{Makespan and transfer cost for partitions computed with \texttt{memDHGP} (DAH model) and the algorithm by Moreira \etal\cite{DBLP:conf/gecco/MoreiraP018} (DAG model).}
\label{fig:impactplots}
\vspace*{-.5cm}  
\end{figure}

Note that the overall improvement of \texttt{memDHGP} over the the simple \texttt{TopoRB} algorithm is 153\% on average.
We \emph{conclude} that (i) using recursive bisection is better than direct $k$-way methods,
(ii) that multilevel algorithms are indeed superior to single level algorithms,
(iii) that a high-quality initial solution (like the one obtained from using KaHyPar) is necessary to obtain high-quality overall in a multilevel (and memetic) algorithm,
and (iv) that our memetic strategy helps to effectively explore the search space.

\subsection{Impact on Imaging Application.} 
\label{sec:impact}
We evaluate the impact of DAH partitioning on
an advanced imaging algorithm, the \emph{Local Laplacian filter}~\cite{DBLP:journals/cacm/ParisHK15}.
We compare our \texttt{memDHGP} with the evolutionary DAG partitioner 
Moreira~\cite{DBLP:conf/gecco/MoreiraP018}.
We model the dataflow of the filter both as a DAG and a DAH.
In both cases, nodes are annotated with the program size and an execution time
estimate, and (hyper)edges with the corresponding data transfer size.
We use 4 different Laplacian filter configurations, which differ in the number of levels of image pyramids and remapping functions (parameter $K$).
A higher $K$ improves the resulting image quality but increases the (hyper)graph. We use $K \in \{4, 5, 6, 7\}$.
The DAG has $752$ nodes and $862$ hyperedges in total in its largest configuration for $K = 7$.
The time budget given to each heuristic is 10 minutes.
The makespans for the resulting schedules (Figure~\ref{fig:impactplots}~(a)) are obtained with a cycle-true compiled simulator of the hardware platform. We see that \texttt{memDHGP} outperforms Moreira.
The difference tends to increase for larger graphs, with the makespan for the largest filter being 22\% smaller.
Figure~\ref{fig:impactplots}~(b) compares the reported edge cuts to
the \emph{real} data transfer cost of the final application. It can be seen that in
the DAH case, the edge cut approximates the real transfer cost much better.
We can therefore conclude that hypergraphs are better suited to model data dependencies in this application domain.

\section{Conclusion and Future Work}

We engineered the first $n$-level algorithm for the acyclic hypergraph partitioning problem.
Based on this, we engineer a memetic algorithm to further reduce communication cost, as well as to improve scheduling makespan on embedded
multiprocessor architectures. Experiments indicate that our algorithm outperforms previous algorithms that focus on the directed acyclic graph case which have previously been employed in the application domain. 
Moreover, the results indicate that our algorithm is the current state-of-the-art for the DAH case.
Improvements stem from many places: first, using a multilevel scheme is better than using a single level algorithm; second,
using $n$ levels and a high-quality initial partitioning algorithm yields a significant advantage over previous algorithms that
employ standard multilevel algorithms; and lastly employing a memetic strategy further improves solution quality.
Lastly, we performed experiments that indicate that by using DAHs instead of DAGs yields the better model in the application domain.
Important future work includes parallelization of our algorithm.

\appendix
\bibliographystyle{abbrvnat}
\bibliography{p50-schlag}
\newpage

\begin{appendix}
\section{Benchmark Instances}\label{app:benchmark_instances}
\begin{table}[h!]
\label{tab:test_instances}
\begin{center}
\begin{tabular}{lrrr}

\toprule
PolyBench & $n$ & $m$ & Ref. \\
\midrule
2mm & \numprint{36500} & \numprint{62200} & \cite{PolyBench}\\
3mm & \numprint{111900} & \numprint{214600} & \cite{PolyBench}\\
adi & \numprint{596695} & \numprint{1059590} & \cite{PolyBench}\\
atax & \numprint{241730} & \numprint{385960} & \cite{PolyBench}\\
covariance & \numprint{191600} & \numprint{368775} & \cite{PolyBench}\\
doitgen & \numprint{123400} & \numprint{237000} & \cite{PolyBench}\\
durbin & \numprint{126246} & \numprint{250993} & \cite{PolyBench}\\
fdtd-2d & \numprint{256479} & \numprint{436580} & \cite{PolyBench}\\
gemm & \numprint{1026800} & \numprint{1684200} & \cite{PolyBench}\\
gemver & \numprint{159480} & \numprint{259440} & \cite{PolyBench}\\
gesummv & \numprint{376000} & \numprint{500500} & \cite{PolyBench}\\
heat-3d & \numprint{308480} & \numprint{491520} & \cite{PolyBench}\\
jacobi-1d & \numprint{239202} & \numprint{398000} & \cite{PolyBench}\\
jacobi-2d & \numprint{157808} & \numprint{282240} & \cite{PolyBench}\\
lu & \numprint{344520} & \numprint{676240} & \cite{PolyBench}\\
ludcmp & \numprint{357320} & \numprint{701680} & \cite{PolyBench}\\
mvt & \numprint{200800} & \numprint{320000} & \cite{PolyBench}\\
seidel-2d & \numprint{261520} & \numprint{490960} & \cite{PolyBench}\\
symm & \numprint{254020} & \numprint{440400} & \cite{PolyBench}\\
syr2k & \numprint{111000} & \numprint{180900} & \cite{PolyBench}\\
syrk & \numprint{594480} & \numprint{975240} & \cite{PolyBench}\\
trisolv & \numprint{240600} & \numprint{320000} & \cite{PolyBench}\\
trmm & \numprint{294570} & \numprint{571200} & \cite{PolyBench}\\
\midrule
ISPD98 & $n$ & $m$ & Ref. \\
\midrule
ibm01 & \numprint{13865} & \numprint{42767} & \cite{ISPD98}\\
ibm02 & \numprint{19325} & \numprint{61756} & \cite{ISPD98}\\
ibm03 & \numprint{27118} & \numprint{96152} & \cite{ISPD98}\\
ibm04 & \numprint{31683} & \numprint{108311} & \cite{ISPD98}\\
ibm05 & \numprint{27777} & \numprint{91478} & \cite{ISPD98}\\
ibm06 & \numprint{34660} & \numprint{97180} & \cite{ISPD98}\\
ibm07 & \numprint{47830} & \numprint{146513} & \cite{ISPD98}\\
ibm08 & \numprint{50227} & \numprint{265392} & \cite{ISPD98}\\
ibm09 & \numprint{60617} & \numprint{206291} & \cite{ISPD98}\\
ibm10 & \numprint{74452} & \numprint{299396} & \cite{ISPD98}\\
ibm11 & \numprint{81048} & \numprint{258875} & \cite{ISPD98}\\
ibm12 & \numprint{76603} & \numprint{392451} & \cite{ISPD98}\\
ibm13 & \numprint{99176} & \numprint{390710} & \cite{ISPD98}\\
ibm14 & \numprint{152255} & \numprint{480274} & \cite{ISPD98}\\
ibm15 & \numprint{186225} & \numprint{724485} & \cite{ISPD98}\\
ibm16 & \numprint{189544} & \numprint{648331} & \cite{ISPD98}\\
ibm17 & \numprint{188838} & \numprint{660960} & \cite{ISPD98}\\
ibm18 & \numprint{201648} & \numprint{597983} & \cite{ISPD98}\\
\bottomrule
\end{tabular}
\end{center}
\end{table}

\ifTR
\section{Detailed Experimental Results for DAG/DAH Partitioning}\label{app:dags}
\label{app:dahs}
Tables 2 to 5 refer to experiments with DAGs, Tables 6 to 9 refer to experiments with DAHs.
\begin{landscape}
\begin{table}[h!]
\scriptsize
\vspace*{-2cm}
\caption{Detailed per instance results on the PolyBench benchmark set \cite{PolyBench}. \emph{HOUKC} refers to the algorithm developed by Herrmann et.\,al.  \cite{DBLP:journals/siamsc/HerrmannOUKC19}. \emph{mlDHGP + X} refers to our algorithm with \emph{X} as undirected hypergraph partitioner for initial partitioning.  \emph{memDHGP + X} refers to our memetic algorithm that uses mlDHGP equiped with \emph{X} as undirected hypergraph partitioner for initial partitioning to build an initial population.  For HOUKC, the \emph{Average} column reports the better average from Table~A.1 and Table~A.2 in \cite{DBLP:journals/siamsc/HerrmannOUKC19} and the \emph{Best} column reports the best edge cut found during 8 hours of individual runs or the best edge cut reported in \cite{DBLP:journals/siamsc/HerrmannOUKC19}, if that is lower (marked with a star). For \emph{mlDHGP + X}, the \emph{Average} column reports the average edge cut of 5 individual runs and the \emph{Best} column reports the best edge cut found during 8 hours.  For \emph{memDHGP + X}, the \emph{Best} column reports the best result found after running for 8 hours.  The \emph{Overall Best} column shows the best cut found by any tool with the following identifiers: H: HOUKC, N: one of the new approaches, M: Moreira et.\,al. In general, lower is better.} \label{tab:detailedresults:cfg:1}

\begin{center}
\resizebox{.75\columnwidth}{!}{
\begin{tabular}{|lr|rr|rr|rrr|rr|rr|}
\hline 
& & \multicolumn{2}{c|}{HOUKC} & \multicolumn{2}{c|}{Moreira et.~al.} & \multicolumn{2}{c}{mlDHGP }& memDHGP & mlDHGP & memDHGP & \multicolumn{2}{c|}{Overall}\\ 
& &  & & & & \multicolumn{3}{c|}{with KaHyPar} & \multicolumn{2}{c|}{with PaToH} & \multicolumn{2}{c|}{Best}\\ 
\hline 
\multicolumn{1}{|c}{Graph} & \multicolumn{1}{c|}{K} & \multicolumn{1}{l}{Average} & \multicolumn{1}{c|}{Best (8h) or \cite{DBLP:journals/siamsc/HerrmannOUKC19}} & \multicolumn{1}{c}{Average} & \multicolumn{1}{c|}{Best} & \multicolumn{1}{c}{Average} & \multicolumn{1}{c}{Best (8h)} &\multicolumn{1}{c|}{Best (8h)} &\multicolumn{1}{c}{Average} &\multicolumn{1}{c|}{Best (8h)} & Result & Solver\\ 
\hline
\multirow{5}{*}{2mm}        & \numprint{2}  & \textbf{\numprint{200}}   & \textbf{\numprint{200}}   & \textbf{\numprint{200}}   & \textbf{\numprint{200}}   & \textbf{\numprint{200}}   & \textbf{\numprint{200}}   & \textbf{\numprint{200}}    & \textbf{\numprint{200}}   & \textbf{\numprint{200}}   & \numprint{200}      & H,M,N\\
                            & \numprint{4}  & \numprint{2160}           & \numprint{946}            & \numprint{947}            & \textbf{\numprint{930}}   & \numprint{1065}           & \textbf{\numprint{930}}   & \textbf{\numprint{930}}    & \numprint{1006}           & \textbf{\numprint{930}}   & \numprint{930}&M,N\\
                            & \numprint{8}  & \numprint{5361}           & \numprint{2910}           & \numprint{7181}           & \numprint{6604}           & \numprint{2819}           & \numprint{2576}           & \textbf{\numprint{2465}}   & \numprint{5563}           & \numprint{5110}           & \numprint{2465}&N\\
                            & \numprint{16} & \numprint{11196}          & \numprint{8103}           & \numprint{13330}          & \numprint{13092}          & \numprint{7090}           & \numprint{5963}           & \textbf{\numprint{5435}}   & \numprint{7881}           & \numprint{6632}           & \numprint{5435}&N\\
                            & \numprint{32} & \numprint{15911}          & \numprint{12708}          & \numprint{14583}          & \numprint{14321}          & \numprint{11397}          & \numprint{10635}          & \textbf{\numprint{10398}}  & \numprint{12228}          & \numprint{11012}          & \numprint{10398}&N\\
\hline                                                                                                                                                                                                                                           
\multirow{5}{*}{3mm}        & \numprint{2}  & \numprint{1000}           & \textbf{\numprint{800}}   & \numprint{1000}           & \numprint{1000}           & \textbf{\numprint{800}}   & \textbf{\numprint{800}}   & \textbf{\numprint{800}}    & \numprint{1000}           & \numprint{1000}           & \numprint{800}&H,N\\
                            & \numprint{4}  & \numprint{9264}           & \textbf{\numprint{2600}}  & \numprint{38722}          & \numprint{37899}          & \numprint{2647}           & \textbf{\numprint{2600}}  & \textbf{\numprint{2600}}   & \textbf{\numprint{2600}}  & \textbf{\numprint{2600}}  & \numprint{2600}&H,N\\
                            & \numprint{8}  & \numprint{24330}          & \numprint{7735}           & \numprint{58129}          & \numprint{49559}          & \numprint{8596}           & \numprint{6967}           & \textbf{\numprint{6861}}   & \numprint{14871}          & \numprint{9560}           & \numprint{6861}&N\\
                            & \numprint{16} & \numprint{37041}          & \numprint{21903}          & \numprint{64384}          & \numprint{60127}          & \numprint{23513}          & \textbf{\numprint{19625}} & \numprint{19675}           & \numprint{28021}          & \numprint{23967}          & \numprint{19675}&N\\
                            
                            & \numprint{32} & \numprint{46437}          & \numprint{36718}          & \numprint{62279}          & \numprint{58190}          & \numprint{34721}          & \textbf{\numprint{30908}} & \numprint{31423}           & \numprint{38879}          & \numprint{34353}          & \numprint{31423}&N\\
\hline              
\multirow{5}{*}{adi} & \numprint{2} & \numprint{142719} & *\textbf{\numprint{134675}} & \numprint{134945} & \textbf{\numprint{134675}} & \numprint{138433} & \numprint{138057} & \numprint{138279} &\numprint{138520} & \numprint{138329} & \numprint{134675} & H,M\\ 
& \numprint{4} & \numprint{212938} & \textbf{\numprint{210979}} & \numprint{284666} & \numprint{283892} & \numprint{213255} & \numprint{212709} & \numprint{212851} &\numprint{213390} & \numprint{212564} & \numprint{210979} & H\\ 
& \numprint{8} & \numprint{256302} & \textbf{\numprint{229563}} & \numprint{290823} & \numprint{290672} & \numprint{253885} & \numprint{252271} & \numprint{253206} &\numprint{254282} & \numprint{252376} & \numprint{229563} & H \\ 
& \numprint{16} & \numprint{282485} & \textbf{\numprint{271374}} & \numprint{326963} & \numprint{326923} & \numprint{281068} & \numprint{277337} & \numprint{280437} &\numprint{281751} & \numprint{276958} & \numprint{271374} & H\\ 
& \numprint{32} & \numprint{306075} & \numprint{305091} & \numprint{370876} & \numprint{370413} & \numprint{309930} & \numprint{303078} & \textbf{\numprint{299387}} &\numprint{309757} & \numprint{302157} & \numprint{302157} & N\\ 
\hline                                                                                                                                                                                                                             
\multirow{5}{*}{atax}       & \numprint{2}  & \numprint{39876}          & \numprint{32451}          & \numprint{47826}          & \numprint{47424}          & \numprint{39695}          & \numprint{24150}          & \textbf{\numprint{23690}}  & \numprint{45130}          & \numprint{43450}          & \numprint{23690}&N\\
                            & \numprint{4}  & \numprint{48645}          & \numprint{43511}          & \numprint{82397}          & \numprint{76245}          & \numprint{50725}          & \numprint{42028}          & \textbf{\numprint{39316}}  & \numprint{50144}          & \numprint{47486}          & \numprint{39316}&N\\
                            & \numprint{8}  & \numprint{51243}          & \numprint{48702}          & \numprint{113410}         & \numprint{111051}         & \numprint{54891}          & \numprint{48824}          & \textbf{\numprint{47741}}  & \numprint{52163}          & \numprint{49450}          & \numprint{47741}&N\\
                            & \numprint{16} & \numprint{59208}          & \numprint{52127}          & \numprint{127687}         & \numprint{125146}         & \numprint{68153}          & \textbf{\numprint{50962}} & \numprint{51256}           & \numprint{53256}          & \numprint{51191}          & \numprint{51191}&N\\
                            & \numprint{32} & \numprint{69556}          & \numprint{57930}          & \numprint{132092}         & \numprint{130854}         & \numprint{66267}          & \numprint{54613}          & \numprint{56051}           & \numprint{56773}          & \textbf{\numprint{54536}} & \numprint{54536}&N\\
\hline                                                                                                                                                                                                                                           
\multirow{5}{*}{covariance} & \numprint{2}  & \numprint{27269}          & \textbf{\numprint{4775}}  & \numprint{66520}          & \numprint{66445}          & \textbf{\numprint{4775}}  & \textbf{\numprint{4775}}  & \textbf{\numprint{4775}}   & \numprint{5893}           & \numprint{5641}           & \numprint{4775}&H,N\\
                            & \numprint{4}  & \numprint{61991}          & *\numprint{34307}         & \numprint{84626}          & \numprint{84213}          & \numprint{12362}          & \numprint{11724}          & \textbf{\numprint{11281}}  & \numprint{13339}          & \numprint{12344}          & \numprint{11281}&N\\
                            & \numprint{8}  & \numprint{74325}          & *\numprint{50680}         & \numprint{103710}         & \numprint{102425}         & \numprint{24429}          & \numprint{21460}          & \textbf{\numprint{21106}}  & \numprint{51984}          & \numprint{41807}          & \numprint{21106}&N\\
                            & \numprint{16} & \numprint{119284}         & \numprint{99629}          & \numprint{125816}         & \numprint{123276}         & \numprint{62011}          & \numprint{60143}          & \textbf{\numprint{58875}}  & \numprint{65302}          & \numprint{59153}          & \numprint{58875}&N\\
                            & \numprint{32} & \numprint{121155}         & \numprint{94247}          & \numprint{142214}         & \numprint{137905}         & \numprint{76977}          & \numprint{73758}          & \textbf{\numprint{72090}}  & \numprint{80464}          & \numprint{74770}          & \numprint{72090}&N\\
\hline                                                                                                                                                                                                                                           
\multirow{5}{*}{doitgen}    & \numprint{2}  & \numprint{5035}           & \textbf{\numprint{3000}}  & \numprint{43807}          & \numprint{42208}          & \textbf{\numprint{3000}}  & \textbf{\numprint{3000}}  & \textbf{\numprint{3000}}   & \textbf{\numprint{3000}}  & \textbf{\numprint{3000}}  & \numprint{3000}&H,N\\
                            & \numprint{4}  & \numprint{37051}          & \textbf{\numprint{9000}}  & \numprint{72115}          & \numprint{71082}          & \numprint{11029}          & \textbf{\numprint{9000}}  & \textbf{\numprint{9000}}   & \numprint{28317}          & \numprint{27852}          & \numprint{9000}&H,N\\
                            & \numprint{8}  & \numprint{51283}          & \numprint{36790}          & \numprint{76977}          & \numprint{75114}          & \numprint{36326}          & \numprint{34912}          & \textbf{\numprint{34682}}  & \numprint{42185}          & \numprint{38491}          & \numprint{34682}&N\\
                            & \numprint{16} & \numprint{62296}          & \numprint{50481}          & \numprint{84203}          & \numprint{77436}          & \numprint{51064}          & \numprint{48992}          & \numprint{50486}           & \numprint{50993}          & \textbf{\numprint{48193}} & \numprint{48193}&N\\
                            & \numprint{32} & \numprint{68350}          & \numprint{59632}          & \numprint{94135}          & \numprint{92739}          & \numprint{59159}          & \numprint{58184}          & \numprint{57408}           & \numprint{57208}          & \textbf{\numprint{55721}} & \numprint{55721}&N\\
\hline                                                                                                                                                                                                                                           
\multirow{5}{*}{durbin}     & \numprint{2}  & \textbf{\numprint{12997}} & \textbf{\numprint{12997}} & \textbf{\numprint{12997}} & \textbf{\numprint{12997}} & \textbf{\numprint{12997}} & \textbf{\numprint{12997}} & \textbf{\numprint{12997}}  & \textbf{\numprint{12997}} & \textbf{\numprint{12997}} & \numprint{12997}&H,M,N\\
                            & \numprint{4}  & \numprint{21566}          & *\numprint{21566}         & \numprint{21641}          & \numprint{21641}          & \numprint{21556}          & \numprint{21557}          & \textbf{\numprint{21541}}  & \numprint{21556}          & \textbf{\numprint{21541}} & \numprint{21541}&N\\
                            & \numprint{8}  & \numprint{27519}          & \numprint{27518}          & \numprint{27571}          & \numprint{27571}          & \numprint{27511}          & \textbf{\numprint{27508}} & \numprint{27509}           & \numprint{27511}          & \numprint{27509}          & \numprint{27509}&N\\
                            & \numprint{16} & \numprint{32852}          & \numprint{32841}          & \numprint{32865}          & \numprint{32865}          & \numprint{32869}          & \textbf{\numprint{32824}} & \numprint{32825}           & \numprint{32852}          & \numprint{32825}          & \numprint{32825}&N\\
                            & \numprint{32} & \numprint{39738}          & \numprint{39732}          & \numprint{39726}          & \numprint{39725}          & \numprint{39753}          & \numprint{39717}          & \textbf{\numprint{39701}}  & \numprint{39753}          & \textbf{\numprint{39701}} & \numprint{39701}&N\\
\hline                                                                                                                                                                                                                                           
\multirow{5}{*}{fdtd-2d}    & \numprint{2}  & \numprint{6024}           & \textbf{\numprint{4381}}  & \numprint{5494}           & \numprint{5494}           & \numprint{5233}           & \numprint{4756}           & \numprint{4604}            & \numprint{6318}           & \numprint{6285}           & \numprint{4381}&H\\
                            & \numprint{4}  & \numprint{15294}          & \numprint{11551}          & \numprint{15100}          & \numprint{15099}          & \numprint{11670}          & \numprint{9325}           & \textbf{\numprint{9240}}   & \numprint{11572}          & \numprint{10232}          & \numprint{9240}&N\\
                            & \numprint{8}  & \numprint{23699}          & \numprint{19527}          & \numprint{33087}          & \numprint{32355}          & \numprint{17704}          & \numprint{15906}          & \textbf{\numprint{15653}}  & \numprint{17990}          & \numprint{15758}          & \numprint{15653}&N\\
                            & \numprint{16} & \numprint{32917}          & \numprint{28065}          & \numprint{35714}          & \numprint{35239}          & \numprint{25170}          & \numprint{22866}          & \numprint{22041}           & \numprint{24582}          & \textbf{\numprint{22003}} & \numprint{22003}&N\\
                            & \numprint{32} & \numprint{42515}          & \numprint{39063}          & \numprint{43961}          & \numprint{42507}          & \numprint{32658}          & \numprint{30872}          & \numprint{29868}           & \numprint{32682}          & \textbf{\numprint{29772}} & \numprint{29772}&N\\
\hline                                                                                                                                                                                                                                           
\multirow{5}{*}{gemm}       & \numprint{2}  & \textbf{\numprint{4200}}  & \textbf{\numprint{4200}}  & \numprint{383084}         & \numprint{382433}         & \textbf{\numprint{4200}}  & \textbf{\numprint{4200}}  & \textbf{\numprint{4200}}   & \numprint{4768}           & \numprint{4690}           & \numprint{4200}&H,N\\
                            & \numprint{4}  & \numprint{59854}          & \textbf{\numprint{12600}} & \numprint{507250}         & \numprint{500526}         & \textbf{\numprint{12600}} & \textbf{\numprint{12600}} & \textbf{\numprint{12600}}  & \numprint{13300}          & \textbf{\numprint{12600}} & \numprint{12600}&H,N\\
                            & \numprint{8}  & \numprint{116990}         & \numprint{33382}          & \numprint{578951}         & \numprint{575004}         & \numprint{70827}          & \numprint{31413}          & \textbf{\numprint{30912}}  & \numprint{188172}         & \numprint{175495}         & \numprint{30912}&N\\
                            & \numprint{16} & \numprint{263050}         & \numprint{224173}         & \numprint{615342}         & \numprint{613373}         & \numprint{185872}         & \numprint{164235}         & \textbf{\numprint{148040}} & \numprint{202920}         & \numprint{194017}         & \numprint{148040}&N\\
                            & \numprint{32} & \numprint{330937}         & \numprint{277879}         & \numprint{626472}         & \numprint{623271}         & \numprint{270909}         & \numprint{265771}         & \textbf{\numprint{258607}} & \numprint{280849}         & \numprint{275188}         & \numprint{258607}&N\\
\hline                                                                                                                                                                                                                                           
\multirow{5}{*}{gemver}     & \numprint{2}  & \numprint{20913}          & *\numprint{20913}         & \numprint{29349}          & \numprint{29270}          & \numprint{22725}          & \numprint{19485}          & \numprint{19390}           & \numprint{20317}          & \textbf{\numprint{18930}} & \numprint{18930}&N\\
                            & \numprint{4}  & \numprint{40299}          & \numprint{35431}          & \numprint{49361}          & \numprint{49229}          & \numprint{38600}          & \numprint{35021}          & \textbf{\numprint{33324}}  & \numprint{37632}          & \numprint{34328}          & \numprint{33324}&N\\
                            & \numprint{8}  & \numprint{55266}          & \numprint{43716}          & \numprint{68163}          & \numprint{67094}          & \numprint{50440}          & \numprint{44253}          & \numprint{43276}           & \numprint{47799}          & \textbf{\numprint{42548}} & \numprint{42548}&N\\
                            & \numprint{16} & \numprint{59072}          & \numprint{54012}          & \numprint{78115}          & \numprint{75596}          & \numprint{53819}          & \numprint{48618}          & \numprint{48182}           & \numprint{53775}          & \textbf{\numprint{46563}} & \numprint{46563}&N\\
                            & \numprint{32} & \numprint{73131}          & \numprint{63012}          & \numprint{85331}          & \numprint{84865}          & \numprint{58898}          & \numprint{53581}          & \numprint{54953}           & \numprint{59210}          & \textbf{\numprint{52404}} & \numprint{52404}&N\\
\hline                                                                                                                                                                                                                                           
\multirow{5}{*}{gesummv}    & \numprint{2}  & \textbf{\numprint{500}}   & \textbf{\numprint{500}}   & \numprint{1666}           & \textbf{\numprint{500}}   & \textbf{\numprint{500}}   & \textbf{\numprint{500}}   & \textbf{\numprint{500}}    & \textbf{\numprint{500}}   & \textbf{\numprint{500}}   & \numprint{500}&H,M,N\\
                            & \numprint{4}  & \numprint{10316}          & \textbf{\numprint{1500}}  & \numprint{98542}          & \numprint{94493}          & \numprint{5096}           & \textbf{\numprint{1500}}  & \textbf{\numprint{1500}}   & \numprint{1548}           & \textbf{\numprint{1500}}  & \numprint{1500}&N\\
                            & \numprint{8}  & \numprint{9618}           & \numprint{4021}           & \numprint{101533}         & \numprint{98982}          & \numprint{25535}          & \textbf{\numprint{3500}}  & \textbf{\numprint{3500}}   & \numprint{3640}           & \textbf{\numprint{3500}}  & \numprint{3500}&N\\
                            & \numprint{16} & \numprint{35686}          & \numprint{11388}          & \numprint{112064}         & \numprint{104866}         & \numprint{30215}          & \textbf{\numprint{7500}}  & \textbf{\numprint{7500}}   & \numprint{7883}           & \textbf{\numprint{7500}}  & \numprint{7500}&N\\
                            & \numprint{32} & \numprint{45050}          & \numprint{28295}          & \numprint{117752}         & \numprint{114812}         & \numprint{31740}          & \numprint{15620}          & \numprint{16339}           & \numprint{16144}          & \textbf{\numprint{15500}} & \numprint{15500}&N\\
\hline                                                                                                                                                                                                                                           
\multirow{5}{*}{heat-3d}    & \numprint{2}  & \numprint{9378}           & \numprint{8936}           & \numprint{8695}           & \numprint{8684}           & \numprint{8930}           & \textbf{\numprint{8640}}  & \textbf{\numprint{8640}}   & \numprint{9242}           & \numprint{8936}           & \numprint{8640}&N\\
                            & \numprint{4}  & \numprint{16700}          & \numprint{15755}          & \textbf{\numprint{14592}} & \textbf{\numprint{14592}} & \numprint{15355}          & \numprint{14642}          & \textbf{\numprint{14592}}  & \numprint{16304}          & \numprint{14865}          & \numprint{14592}&M,N\\
                            & \numprint{8}  & \numprint{25883}          & \numprint{24326}          & \textbf{\numprint{20608}} & \textbf{\numprint{20608}} & \numprint{23307}          & \numprint{21190}          & \numprint{21300}           & \numprint{25462}          & \numprint{23074}          & \numprint{20608}&M\\
                            & \numprint{16} & \numprint{42137}          & *\numprint{41261}         & \numprint{31615}          & \textbf{\numprint{31500}} & \numprint{38909}          & \numprint{38053}          & \numprint{35909}           & \numprint{40148}          & \numprint{37659}          & \numprint{31500}&M\\
                            & \numprint{32} & \numprint{64614}          & \numprint{60215}          & \numprint{51963}          & \textbf{\numprint{50758}} & \numprint{55360}          & \numprint{53525}          & \numprint{51682}           & \numprint{54621}          & \numprint{50848}          & \numprint{50758}&M\\
\hline                                                                                                                                                                                                                                           
\end{tabular}}
\end{center}
\end{table}
\begin{table}[h!]
\scriptsize

\vspace*{-2cm}
\caption{Detailed per instance results on the PolyBench benchmark set \cite{PolyBench}. \emph{HOUKC} refers to the algorithm developed by Herrmann et.\,al.  \cite{DBLP:journals/siamsc/HerrmannOUKC19}. \emph{mlDHGP + X} refers to our algorithm with \emph{X} as undirected hypergraph partitioner for initial partitioning.  \emph{memDHGP + X} refers to our memetic algorithm that uses mlDHGP equiped with \emph{X} as undirected hypergraph partitioner for initial partitioning to build an initial population.  For HOUKC, the \emph{Average} column reports the better average from Table~A.1 and Table~A.2 in \cite{DBLP:journals/siamsc/HerrmannOUKC19} and the \emph{Best} column reports the best edge cut found during 8 hours of individual runs or the best edge cut reported in \cite{DBLP:journals/siamsc/HerrmannOUKC19}, if that is lower (marked with a star). For \emph{mlDHGP + X}, the \emph{Average} column reports the average edge cut of 5 individual runs and the \emph{Best} column reports the best edge cut found during 8 hours.  For \emph{memDHGP + X}, the \emph{Best} column reports the best result found after running for 8 hours.  The \emph{Overall Best} column shows the best cut found by any tool with the following identifiers: H: HOUKC, N: one of the new approaches, M: Moreira et.\,al. In general, lower is better.} 
\label{tab:detailedresults:cfg:2}

\begin{center}

\resizebox{.75\columnwidth}{!}{
\begin{tabular}{|lr|rr|rr|rrr|rr|rr|}
\hline 
& & \multicolumn{2}{c|}{HOUKC} & \multicolumn{2}{c|}{Moreira et.~al.} & \multicolumn{2}{c}{mlDHGP }& memDHGP & mlDHGP & memDHGP & \multicolumn{2}{c|}{Overall}\\ 
& &  & & & & \multicolumn{3}{c|}{with KaHyPar} & \multicolumn{2}{c|}{with PaToH} & \multicolumn{2}{c|}{Best}\\ 
\hline 
\multicolumn{1}{|c}{Graph} & \multicolumn{1}{c|}{K} & \multicolumn{1}{l}{Average} & \multicolumn{1}{c|}{Best (8h) or \cite{DBLP:journals/siamsc/HerrmannOUKC19}} & \multicolumn{1}{c}{Average} & \multicolumn{1}{c|}{Best} & \multicolumn{1}{c}{Average} & \multicolumn{1}{c}{Best (8h)} &\multicolumn{1}{c|}{Best (8h)} &\multicolumn{1}{c}{Average} &\multicolumn{1}{c|}{Best (8h)} & Result & Solver\\ 
\hline

\hline
\multirow{5}{*}{jacobi-1d}  & \numprint{2}  & \numprint{646}            & \textbf{\numprint{400}}   & \numprint{596}            & \numprint{596}            & \numprint{440}            & \textbf{\numprint{400}}   & \textbf{\numprint{400}}    & \numprint{491}            & \numprint{423}            & \numprint{400}&H,N\\
                            & \numprint{4}  & \numprint{1617}           & \numprint{1123}           & \numprint{1493}           & \numprint{1492}           & \numprint{1188}           & \numprint{1046}           & \textbf{\numprint{1044}}   & \numprint{1250}           & \numprint{1128}           & \numprint{1044}&N\\
                            & \numprint{8}  & \numprint{2845}           & \numprint{2052}           & \numprint{3136}           & \numprint{3136}           & \numprint{2028}           & \numprint{1754}           & \textbf{\numprint{1750}}   & \numprint{2170}           & \numprint{1855}           & \numprint{1750}&N\\
                            & \numprint{16} & \numprint{4519}           & \numprint{3517}           & \numprint{6340}           & \numprint{6338}           & \numprint{3140}           & \numprint{2912}           & \textbf{\numprint{2869}}   & \numprint{3355}           & \numprint{2982}           & \numprint{2869}&N\\
                            & \numprint{32} & \numprint{6742}           & \numprint{5545}           & \numprint{8923}           & \numprint{8750}           & \numprint{4776}           & \numprint{4565}           & \textbf{\numprint{4498}}   & \numprint{4910}           & \numprint{4587}           & \numprint{4498}&N\\
\hline

\multirow{5}{*}{jacobi-2d} & \numprint{2}  & \numprint{3445}   & *\numprint{3342}          & \numprint{2994}           & \numprint{2991}           & \numprint{3878}          & \numprint{3000}            & \textbf{\numprint{2986}}   & \numprint{3942}          & \numprint{3129}           & \numprint{2986}&N\\
                           & \numprint{4}  & \numprint{7370}   & \numprint{7243}           & \numprint{5701}           & \textbf{\numprint{5700}}  & \numprint{7591}          & \numprint{5979}            & \numprint{5881}            & \numprint{7528}          & \numprint{6245}           & \numprint{5700}&M\\
                           & \numprint{8}  & \numprint{13168}  & \numprint{12134}          & \numprint{9417}           & \numprint{9416}           & \numprint{10872}         & \numprint{9295}            & \textbf{\numprint{8935}}   & \numprint{11753}         & \numprint{10492}          & \numprint{8935}&N\\
                           & \numprint{16} & \numprint{21565}  & \numprint{18394}          & \numprint{16274}          & \numprint{16231}          & \numprint{15605}         & \numprint{14746}           & \textbf{\numprint{13867}}  & \numprint{15889}         & \numprint{14736}          & \numprint{13867}&N\\
                           & \numprint{32} & \numprint{29558}  & \numprint{25740}          & \numprint{22181}          & \numprint{21758}          & \numprint{20597}         & \numprint{19647}           & \textbf{\numprint{18979}}  & \numprint{21653}         & \numprint{19530}          & \numprint{18979}&N\\
\hline                                                                                                                                                                                                                                  
\multirow{5}{*}{lu}        & \numprint{2}  & \numprint{5351}   & \textbf{\numprint{4160}}  & \numprint{5210}           & \numprint{5162}           & \textbf{\numprint{4160}} & \textbf{\numprint{4160}}   & \textbf{\numprint{4160}}   & \textbf{\numprint{4160}} & \textbf{\numprint{4160}}  & \numprint{4160}&H,N\\
                           & \numprint{4}  & \numprint{21258}  & \textbf{\numprint{12214}} & \numprint{13528}          & \numprint{13510}          & \numprint{12720}         & \textbf{\numprint{12214}}  & \textbf{\numprint{12214}}  & \numprint{16091}         & \numprint{15992}          & \numprint{12214}&H,N\\
                           & \numprint{8}  & \numprint{53643}  & \numprint{34074}          & \numprint{33307}          & \textbf{\numprint{33211}} & \numprint{42963}         & \numprint{33873}           & \numprint{33954}           & \numprint{41113}         & \numprint{38318}          & \numprint{33211}&M\\
                           & \numprint{16} & \numprint{105289} & \numprint{81713}          & \numprint{74543}          & \textbf{\numprint{74006}} & \numprint{81224}         & \numprint{74400}           & \numprint{74448}           & \numprint{83980}         & \numprint{75150}          & \numprint{74006}&M\\
                           & \numprint{32} & \numprint{156187} & \numprint{141868}         & \numprint{130674}         & \numprint{129954}         & \numprint{125932}        & \numprint{122977}          & \textbf{\numprint{121451}} & \numprint{131850}        & \numprint{127904}         & \numprint{121451}&N\\
\hline                                                                                                                                                                                                                                  
\multirow{5}{*}{ludcmp}    & \numprint{2}  & \numprint{5731}   & \textbf{\numprint{5337}}  & \numprint{5380}           & \textbf{\numprint{5337}}  & \textbf{\numprint{5337}} & \textbf{\numprint{5337}}   & \textbf{\numprint{5337}}   & \textbf{\numprint{5337}} & \textbf{\numprint{5337}}  & \numprint{5337}&H,N\\
                           & \numprint{4}  & \numprint{22368}  & \numprint{15170}          & \textbf{\numprint{14744}} & \textbf{\numprint{14744}} & \numprint{18114}         & \numprint{16889}           & \numprint{17560}           & \numprint{26606}         & \numprint{17113}          & \numprint{14744}&N\\
                           & \numprint{8}  & \numprint{60255}  & \numprint{41086}          & \numprint{37228}          & \textbf{\numprint{37069}} & \numprint{46268}         & \numprint{37688}           & \numprint{37790}           & \numprint{52980}         & \numprint{39362}          & \numprint{37069}&N\\
                           & \numprint{16} & \numprint{106223} & \numprint{86959}          & \numprint{78646}          & \numprint{78467}          & \numprint{89958}         & \textbf{\numprint{76074}}  & \numprint{80706}           & \numprint{96275}         & \numprint{85572}          & \numprint{78467}&N\\
                           & \numprint{32} & \numprint{158619} & \numprint{144224}         & \numprint{134758}         & \numprint{134288}         & \numprint{130552}        & \textbf{\numprint{125957}} & \numprint{127454}          & \numprint{136218}        & \numprint{131161}         & \numprint{127454}&N\\
\hline                                                                                                                                                                                                                                  
\multirow{5}{*}{mvt}       & \numprint{2}  & \numprint{21281}  & \numprint{16768}          & \numprint{24528}          & \numprint{23091}          & \numprint{23798}         & \textbf{\numprint{16584}}  & \numprint{16596}           & \numprint{32856}         & \numprint{20016}          & \numprint{16596}&N\\
                           & \numprint{4}  & \numprint{38215}  & \textbf{\numprint{29229}} & \numprint{74386}          & \numprint{73035}          & \numprint{41156}         & \numprint{29318}           & \numprint{30070}           & \numprint{52353}         & \numprint{42870}          & \numprint{29229}&H,N\\
                           & \numprint{8}  & \numprint{46776}  & \numprint{39295}          & \numprint{86525}          & \numprint{82221}          & \numprint{50853}         & \numprint{36531}           & \textbf{\numprint{35471}}  & \numprint{60021}         & \numprint{55460}          & \numprint{35471}&N\\
                           & \numprint{16} & \numprint{54925}  & \numprint{48036}          & \numprint{99144}          & \numprint{97941}          & \numprint{58258}         & \textbf{\numprint{41727}}  & \numprint{42890}           & \numprint{65738}         & \numprint{59194}          & \numprint{42890}&N\\
                           & \numprint{32} & \numprint{62584}  & \numprint{54293}          & \numprint{105066}         & \numprint{104917}         & \numprint{58413}         & \textbf{\numprint{45958}}  & \numprint{46122}           & \numprint{69221}         & \numprint{64611}          & \numprint{46122}&N\\
\hline                                                                                                                                                                                                                                  
\multirow{5}{*}{seidel-2d} & \numprint{2}  & \numprint{4374}   & \textbf{\numprint{3401}}  & \numprint{4991}           & \numprint{4969}           & \numprint{4036}          & \numprint{3578}            & \numprint{3504}            & \numprint{4206}          & \numprint{3786}           & \numprint{3401}&H,N\\
                           & \numprint{4}  & \numprint{11784}  & \numprint{10872}          & \numprint{12197}          & \numprint{12169}          & \numprint{11352}         & \numprint{10645}           & \textbf{\numprint{10404}}  & \numprint{11480}         & \numprint{10604}          & \numprint{10404}&N\\
                           & \numprint{8}  & \numprint{21937}  & \numprint{20711}          & \numprint{21419}          & \numprint{21400}          & \numprint{19954}         & \numprint{18528}           & \textbf{\numprint{17770}}  & \numprint{20309}         & \numprint{18482}          & \numprint{17770}&N\\
                           & \numprint{16} & \numprint{38065}  & \numprint{33647}          & \numprint{38222}          & \numprint{38110}          & \numprint{29930}         & \numprint{27644}           & \textbf{\numprint{27583}}  & \numprint{30329}         & \numprint{28348}          & \numprint{27583}&N\\
                           & \numprint{32} & \numprint{58319}  & \numprint{51745}          & \numprint{52246}          & \numprint{51531}          & \numprint{41256}         & \numprint{38949}           & \textbf{\numprint{38175}}  & \numprint{42291}         & \numprint{39058}          & \numprint{38175}&N\\
\hline                                                                                                                                                                                                                                  
\multirow{5}{*}{symm}      & \numprint{2}  & \numprint{26374}  & \numprint{21963}          & \numprint{94357}          & \numprint{94214}          & \numprint{22000}         & \numprint{21840}           & \textbf{\numprint{21836}}  & \numprint{29871}         & \numprint{26134}          & \numprint{21836}&N\\
                           & \numprint{4}  & \numprint{59815}  & \numprint{42442}          & \numprint{127497}         & \numprint{126207}         & \numprint{41486}         & \numprint{38290}           & \textbf{\numprint{37854}}  & \numprint{65111}         & \numprint{57620}          & \numprint{37854}&N\\
                           & \numprint{8}  & \numprint{91892}  & \numprint{69554}          & \numprint{152984}         & \numprint{151168}         & \numprint{69569}         & \textbf{\numprint{58084}}  & \numprint{60644}           & \numprint{82865}         & \numprint{75151}          & \numprint{60644}&N\\
                           & \numprint{16} & \numprint{105418} & \numprint{89320}          & \numprint{167822}         & \numprint{167512}         & \numprint{90978}         & \textbf{\numprint{83703}}  & \numprint{85508}           & \numprint{96932}         & \numprint{89445}          & \numprint{85508}&N\\
                           & \numprint{32} & \numprint{108950} & \textbf{\numprint{97174}} & \numprint{174938}         & \numprint{174843}         & \numprint{110495}        & \numprint{104376}          & \numprint{100337}          & \numprint{108814}        & \numprint{104592}         & \numprint{97174}&H\\
\hline                                                                                                                                                                                                                                  
\multirow{5}{*}{syr2k}     & \numprint{2}  & \numprint{4343}   & \textbf{\numprint{900}}   & \numprint{11098}          & \numprint{3894}           & \textbf{\numprint{900}}  & \textbf{\numprint{900}}    & \textbf{\numprint{900}}    & \textbf{\numprint{900}}  & \textbf{\numprint{900}}   & \numprint{900}&H,N\\
                           & \numprint{4}  & \numprint{12192}  & \numprint{3048}           & \numprint{49662}          & \numprint{48021}          & \numprint{3150}          & \numprint{2978}            & \textbf{\numprint{2909}}   & \numprint{16589}         & \numprint{9991}           & \numprint{2909}&N\\
                           & \numprint{8}  & \numprint{28787}  & \numprint{12833}          & \numprint{57584}          & \numprint{57408}          & \numprint{12504}         & \textbf{\numprint{9969}}   & \numprint{10154}           & \numprint{21427}         & \numprint{19507}          & \numprint{10154}&N\\
                           & \numprint{16} & \numprint{29519}  & \numprint{24457}          & \numprint{59780}          & \numprint{59594}          & \numprint{25054}         & \textbf{\numprint{21626}}  & \numprint{21828}           & \numprint{26120}         & \numprint{23588}          & \numprint{21828}&N\\
                           & \numprint{32} & \numprint{36111}  & \numprint{31138}          & \numprint{60502}          & \numprint{60085}          & \numprint{33424}         & \numprint{31236}           & \numprint{29984}           & \numprint{31358}         & \textbf{\numprint{29340}} & \numprint{29340}&N\\
\hline                                                                                                                                                                                                                                  
\multirow{5}{*}{syrk}      & \numprint{2}  & \numprint{11740}  & \textbf{\numprint{3240}}  & \numprint{219263}         & \numprint{218019}         & \textbf{\numprint{3240}} & \textbf{\numprint{3240}}   & \textbf{\numprint{3240}}   & \numprint{3439}          & \textbf{\numprint{3240}}  & \numprint{3240}&H,N\\
                           & \numprint{4}  & \numprint{56832}  & \textbf{\numprint{9960}}  & \numprint{289509}         & \numprint{289088}         & \numprint{10417}         & \numprint{10119}           & \numprint{9970}            & \numprint{89457}         & \numprint{80801}          & \numprint{9960}&H\\
                           & \numprint{8}  & \numprint{112236} & \textbf{\numprint{30602}} & \numprint{329466}         & \numprint{327712}         & \numprint{83000}         & \numprint{46130}           & \numprint{58876}           & \numprint{107220}        & \numprint{101516}         & \numprint{30602}&H\\
                           & \numprint{16} & \numprint{179042} & \numprint{147058}         & \numprint{354223}         & \numprint{351824}         & \numprint{117357}        & \numprint{113122}          & \textbf{\numprint{111635}} & \numprint{150363}        & \numprint{135615}         & \numprint{111635}&N\\
                           & \numprint{32} & \numprint{196173} & \numprint{173550}         & \numprint{362016}         & \numprint{359544}         & \numprint{158590}        & \textbf{\numprint{154818}} & \numprint{154921}          & \numprint{182222}        & \numprint{175999}         & \numprint{154921}&N\\
\hline                                                                                                                                                                                                                                  
\multirow{5}{*}{trisolv}   & \numprint{2}  & \numprint{336}    & \numprint{280}            & \numprint{6788}           & \numprint{3549}           & \numprint{280}           & \textbf{\numprint{279}}    & \textbf{\numprint{279}}    & \numprint{308}           & \textbf{\numprint{279}}   & \numprint{279}&N\\
                           & \numprint{4}  & \numprint{828}    & \numprint{827}            & \numprint{43927}          & \numprint{43549}          & \numprint{823}           & \textbf{\numprint{821}}    & \textbf{\numprint{821}}    & \numprint{865}           & \numprint{823}            & \numprint{821}&N\\
                           & \numprint{8}  & \numprint{2156}   & \numprint{1907}           & \numprint{66148}          & \numprint{65662}          & \numprint{2112}          & \textbf{\numprint{1893}}   & \numprint{1895}            & \numprint{2035}          & \numprint{1897}           & \numprint{1895}&N\\
                           & \numprint{16} & \numprint{6240}   & \numprint{5285}           & \numprint{71838}          & \numprint{71447}          & \numprint{8719}          & \numprint{4125}            & \textbf{\numprint{4108}}   & \numprint{4358}          & \numprint{4240}           & \numprint{4108}&N\\
                           & \numprint{32} & \numprint{13431}  & *\numprint{13172}         & \numprint{79125}          & \numprint{79071}          & \numprint{16027}         & \numprint{8942}            & \numprint{8784}            & \numprint{9210}          & \textbf{\numprint{8716}}  & \numprint{8716}&N\\
\hline                                                                                                                                                                                                                                  
\multirow{5}{*}{trmm}      & \numprint{2}  & \numprint{13659}  & \textbf{\numprint{3440}}  & \numprint{138937}         & \numprint{138725}         & \textbf{\numprint{3440}} & \textbf{\numprint{3440}}   & \textbf{\numprint{3440}}   & \textbf{\numprint{3440}} & \textbf{\numprint{3440}}  &\numprint{3440}&H,N\\
                           & \numprint{4}  & \numprint{58477}  & \numprint{14543}          & \numprint{192752}         & \numprint{191492}         & \numprint{14942}         & \numprint{12622}           & \textbf{\numprint{12389}}  & \numprint{35964}         & \numprint{35824}          & \numprint{12389}&N\\
                           & \numprint{8}  & \numprint{92185}  & \numprint{49830}          & \numprint{225192}         & \numprint{223529}         & \numprint{65303}         & \numprint{46059}           & \textbf{\numprint{45053}}  & \numprint{67011}         & \numprint{61045}          & \numprint{45053}&N\\
                           & \numprint{16} & \numprint{128838} & \numprint{103975}         & \numprint{240788}         & \numprint{238159}         & \numprint{92172}         & \textbf{\numprint{79507}}  & \numprint{80186}           & \numprint{96421}         & \numprint{87275}          & \numprint{80186}&N\\
                           & \numprint{32} & \numprint{153644} & \numprint{131899}         & \numprint{246407}         & \numprint{245173}         & \numprint{120839}        & \numprint{115460}          & \textbf{\numprint{112267}} & \numprint{120753}        & \numprint{113205}         & \numprint{112267}&N\\
\hline
\hline
\textbf{Mean} & & 25777 & 17897 & \numprint{44923} & \numprint{43200} &\numprint{18887} & \numprint{15988} & \numprint{16095} & \numprint{20308} & \numprint{18642}& &\\ 
 
\hline
\end{tabular}}
\end{center}
\end{table}
\begin{table}
\scriptsize
\vspace*{-2cm}
\caption{Detailed per instance results on the ISPD98 benchmark set \cite{ISPD98}. \emph{HOUKC} refers to the algorithm developed by Herrmann et.\,al.  \cite{DBLP:journals/siamsc/HerrmannOUKC19}. \emph{mlDHGP + X} refers to our algorithm with \emph{X} as undirected hypergraph partitioner for initial partitioning.  \emph{memDHGP + X} refers to our memetic algorithm that uses mlDHGP equiped with \emph{X} as undirected hypergraph partitioner for initial partitioning to build an initial population.  The \emph{Best} column reports the best edge cut found during 8 hours of individual runs. For \emph{mlDHGP + X}, the \emph{Average} column reports the average edge cut of 5 individual runs and the \emph{Best} column reports the best edge cut found during 8 hours.  For \emph{memDHGP + X}, the \emph{Best} column reports the best result found after running for 8 hours.  The \emph{Overall Best} column shows the best cut found by any tool with the following identifiers: H: HOUKC, N: one of the new approaches. In general, lower is better.}
\label{tab:detailedresults:ispd98:1}

\begin{center}

\resizebox{.75\columnwidth}{!}{
\begin{tabular}{|lr|rr|rrr|rr|rr|}
\hline 
& & \multicolumn{2}{c|}{HOUKC} & \multicolumn{2}{c}{mlDHGP} & memDHGP & mlDHGP & memDHGP & \multicolumn{2}{c|}{Overall} \\ 
& & &  & \multicolumn{3}{c|}{with KaHyPar} & \multicolumn{2}{c|}{with PaToH} & \multicolumn{2}{c|}{Best} \\ 
\hline 
\multicolumn{1}{|c}{Graph} & \multicolumn{1}{c|}{K} & \multicolumn{1}{l}{Average} & \multicolumn{1}{l|}{Best (8h)} & \multicolumn{1}{l}{Average} &\multicolumn{1}{c}{Best (8h)} & \multicolumn{1}{r|}{Best (8h)} &\multicolumn{1}{l}{Average} &\multicolumn{1}{r|}{Best (8h)} & Result & Solver \\ 
\hline
\multirow{5}{*}{ibm01} & \numprint{2}  & \numprint{3175}   & \numprint{2752}           & \numprint{3235}  & \numprint{2428}           & \textbf{\numprint{2255}}  & \numprint{2730}  & \numprint{2290} & \numprint{2255} & N \\
                       & \numprint{4}  & \numprint{6092}   & \numprint{5099}           & \numprint{5434}  & \numprint{5028}           & \numprint{4848}           & \numprint{5325}  & \textbf{\numprint{4841}} & \numprint{4841} & N \\
                       & \numprint{8}  & \numprint{7449}   & \numprint{6880}           & \numprint{8026}  & \numprint{7240}           & \numprint{6958}           & \numprint{8268}  & \textbf{\numprint{6639}} & \numprint{6639} & N \\
                       & \numprint{16} & \numprint{10555}  & \numprint{8603}           & \numprint{9131}  & \numprint{8135}           & \numprint{8028}           & \numprint{8870}  & \textbf{\numprint{7627}} & \numprint{7627} & N \\
                       & \numprint{32} & \numprint{12652}  & \numprint{11119}          & \numprint{10909} & \numprint{10086}          & \numprint{9572}           & \numprint{11107} & \textbf{\numprint{9404}} & \numprint{9404} & N \\
\hline                                                                                                                                                            
\multirow{5}{*}{ibm02} & \numprint{2}  & \numprint{8540}   & \numprint{4708}           & \numprint{8772}  & \textbf{\numprint{3262}}  & \numprint{5873}           & \numprint{8806}  & \numprint{8599} & \numprint{3262} & N \\
                       & \numprint{4}  & \numprint{13264}  & \numprint{11375}          & \numprint{12290} & \textbf{\numprint{11374}} & \numprint{11497}          & \numprint{12317} & \numprint{11400} & \numprint{11374} & N \\
                       & \numprint{8}  & \numprint{17832}  & \numprint{16591}          & \numprint{17557} & \numprint{16522}          & \textbf{\numprint{16253}} & \numprint{17520} & \numprint{16387} & \numprint{16253} & N \\
                       & \numprint{16} & \numprint{24856}  & \numprint{23002}          & \numprint{21708} & \numprint{20209}          & \textbf{\numprint{19727}} & \numprint{22128} & \numprint{20455} & \numprint{19727} & N \\
                       & \numprint{32} & \numprint{30407}  & \numprint{29082}          & \numprint{26379} & \numprint{25263}          & \textbf{\numprint{24264}} & \numprint{26659} & \numprint{25393} & \numprint{24264} & N \\
\hline                                                                                                                                                            
\multirow{5}{*}{ibm03} & \numprint{2}  & \numprint{14601}  & \numprint{13687}          & \numprint{15278} & \numprint{12584}          & \textbf{\numprint{11870}} & \numprint{14265} & \numprint{12051} & \numprint{11870} & N \\
                       & \numprint{4}  & \numprint{21802}  & \numprint{20077}          & \numprint{20652} & \numprint{18622}          & \textbf{\numprint{17757}} & \numprint{18840} & \numprint{17835} & \numprint{17757} & N \\
                       & \numprint{8}  & \numprint{26051}  & \numprint{24361}          & \numprint{25370} & \numprint{21494}          & \textbf{\numprint{20579}} & \numprint{22975} & \numprint{20699} & \numprint{20579} & N \\
                       & \numprint{16} & \numprint{30776}  & \numprint{27238}          & \numprint{29885} & \numprint{24637}          & \numprint{24006}          & \numprint{28097} & \textbf{\numprint{23837}} & \numprint{23837} & N \\
                       & \numprint{32} & \numprint{33439}  & \numprint{31034}          & \numprint{32134} & \numprint{27309}          & \numprint{27093}          & \numprint{30035} & \textbf{\numprint{27085}} & \numprint{27085} & N \\
\hline                                                                                                                                                            
\multirow{5}{*}{ibm04} & \numprint{2}  & \numprint{9518}   & \numprint{9108}           & \numprint{9727}  & \numprint{8508}           & \textbf{\numprint{8237}}  & \numprint{9727}  & \textbf{\numprint{8237}} & \numprint{8237} & N \\
                       & \numprint{4}  & \numprint{14226}  & \numprint{13190}          & \numprint{12668} & \numprint{11512}          & \numprint{10970}          & \numprint{12358} & \textbf{\numprint{10944}} & \numprint{10944} & N \\
                       & \numprint{8}  & \numprint{18508}  & \numprint{16683}          & \numprint{18677} & \numprint{16983}          & \numprint{16298}          & \numprint{18811} & \textbf{\numprint{15878}} & \numprint{15878} & N \\
                       & \numprint{16} & \numprint{25885}  & \numprint{22874}          & \numprint{24363} & \numprint{22800}          & \numprint{21812}          & \numprint{24298} & \textbf{\numprint{21373}} & \numprint{21373} & N \\
                       & \numprint{32} & \numprint{30512}  & \numprint{27107}          & \numprint{27882} & \numprint{26486}          & \textbf{\numprint{25078}} & \numprint{28127} & \numprint{25680} & \numprint{25078} & N \\
\hline                                                                                                                                                            
\multirow{5}{*}{ibm05} & \numprint{2}  & \numprint{8360}   & \numprint{5882}           & \numprint{7494}  & \textbf{\numprint{5478}}  & \numprint{5830}           & \numprint{7285}  & \numprint{6979} & \numprint{5478} & N \\
                       & \numprint{4}  & \numprint{17040}  & \numprint{13278}          & \numprint{14932} & \numprint{10740}          & \textbf{\numprint{10710}} & \numprint{15035} & \numprint{11885} & \numprint{10710} & N \\
                       & \numprint{8}  & \numprint{23170}  & \numprint{19480}          & \numprint{19618} & \numprint{16076}          & \numprint{15980}          & \numprint{19803} & \textbf{\numprint{15934}} & \numprint{15934} & N \\
                       & \numprint{16} & \numprint{29747}  & \numprint{25590}          & \numprint{25512} & \numprint{22049}          & \textbf{\numprint{20771}} & \numprint{24914} & \numprint{21604} & \numprint{20771} & N \\
                       & \numprint{32} & \numprint{34495}  & \numprint{30721}          & \numprint{29437} & \numprint{27465}          & \numprint{27582}          & \numprint{30155} & \textbf{\numprint{26899}} & \numprint{26899} & N \\
\hline                                                                                                                                                            
\multirow{5}{*}{ibm06} & \numprint{2}  & \numprint{14049}  & \numprint{12736}          & \numprint{12664} & \numprint{11804}          & \numprint{11341}          & \numprint{12832} & \textbf{\numprint{11285}} & \numprint{11285} & N \\
                       & \numprint{4}  & \numprint{23206}  & \numprint{20317}          & \numprint{21641} & \numprint{19097}          & \textbf{\numprint{18197}} & \numprint{21705} & \numprint{18374} & \numprint{18197} & N \\
                       & \numprint{8}  & \numprint{30875}  & \numprint{26980}          & \numprint{25402} & \numprint{23202}          & \numprint{22455}          & \numprint{25155} & \textbf{\numprint{22263}} & \numprint{22263} & N \\
                       & \numprint{16} & \numprint{34069}  & \numprint{30848}          & \numprint{29421} & \numprint{27435}          & \textbf{\numprint{26384}} & \numprint{29793} & \numprint{27263} & \numprint{26384} & N \\
                       & \numprint{32} & \numprint{38243}  & \numprint{36197}          & \numprint{32781} & \numprint{31310}          & \numprint{30839}          & \numprint{32826} & \textbf{\numprint{30597}} & \numprint{30597} & N \\
\hline                                                                                                                                                            
\multirow{5}{*}{ibm07} & \numprint{2}  & \numprint{16341}  & \numprint{15855}          & \numprint{15738} & \numprint{15356}          & \numprint{13681}          & \numprint{16003} & \textbf{\numprint{12965}} & \numprint{12965} & N \\
                       & \numprint{4}  & \numprint{26842}  & \numprint{23522}          & \numprint{22608} & \numprint{21583}          & \numprint{20499}          & \numprint{22273} & \textbf{\numprint{20348}} & \numprint{20348} & N \\
                       & \numprint{8}  & \numprint{29702}  & \numprint{27069}          & \numprint{26935} & \numprint{25655}          & \textbf{\numprint{24464}} & \numprint{27186} & \numprint{24586} & \numprint{24464} & N \\
                       & \numprint{16} & \numprint{36633}  & \numprint{33606}          & \numprint{31746} & \numprint{30788}          & \numprint{29808}          & \numprint{32195} & \textbf{\numprint{29797}} & \numprint{29797} & N \\
                       & \numprint{32} & \numprint{43083}  & \numprint{40205}          & \numprint{36959} & \numprint{35901}          & \textbf{\numprint{34648}} & \numprint{37017} & \numprint{34665} & \numprint{34648} & N \\
\hline                                                                                                                                                            
\multirow{5}{*}{ibm08} & \numprint{2}  & \numprint{25139}  & \numprint{24481}          & \numprint{24418} & \numprint{22381}          & \numprint{22079}          & \numprint{24384} & \textbf{\numprint{21925}} & \numprint{21925} & N \\
                       & \numprint{4}  & \numprint{52118}  & \numprint{38711}          & \numprint{41350} & \numprint{38644}          & \numprint{38495}          & \numprint{41402} & \textbf{\numprint{38330}} & \numprint{38330} & N \\
                       & \numprint{8}  & \numprint{84639}  & \numprint{81587}          & \numprint{50063} & \numprint{49238}          & \numprint{48429}          & \numprint{50043} & \textbf{\numprint{47124}} & \numprint{47124} & N \\
                       & \numprint{16} & \numprint{96107}  & \numprint{88135}          & \numprint{88727} & \numprint{87323}          & \textbf{\numprint{85996}} & \numprint{89513} & \numprint{86083} & \numprint{85996} & N \\
                       & \numprint{32} & \numprint{109264} & \numprint{96746}          & \numprint{93556} & \numprint{92591}          & \numprint{90779}          & \numprint{94172} & \textbf{\numprint{90660}} & \numprint{90660} & N \\
\hline                                                                                                                                                            
\multirow{5}{*}{ibm09} & \numprint{2}  & \numprint{19509}  & \numprint{15084}          & \numprint{17233} & \numprint{12661}          & \numprint{12305}          & \numprint{16307} & \textbf{\numprint{12127}} & \numprint{12127} & N \\
                       & \numprint{4}  & \numprint{28408}  & \numprint{25120}          & \numprint{26143} & \numprint{23342}          & \numprint{22557}          & \numprint{26184} & \textbf{\numprint{20892}} & \numprint{20892} & N \\
                       & \numprint{8}  & \numprint{36168}  & \numprint{31734}          & \numprint{33276} & \numprint{30411}          & \textbf{\numprint{29654}} & \numprint{34341} & \numprint{30168} & \numprint{29654} & N \\
                       & \numprint{16} & \numprint{42373}  & \numprint{39154}          & \numprint{39712} & \numprint{37301}          & \numprint{35902}          & \numprint{39529} & \textbf{\numprint{34707}} & \numprint{34707} & N \\
                       & \numprint{32} & \numprint{50041}  & \numprint{45987}          & \numprint{45226} & \numprint{41007}          & \numprint{40701}          & \numprint{45131} & \textbf{\numprint{39887}} & \numprint{39887} & N \\
\hline
\end{tabular}}
\end{center}
\end{table}
\begin{table}[h!]
\scriptsize
\vspace*{-2cm}
\caption{Detailed per instance results on the ISPD98 benchmark set \cite{ISPD98}. \emph{HOUKC} refers to the algorithm developed by Herrmann et.\,al.  \cite{DBLP:journals/siamsc/HerrmannOUKC19}. \emph{mlDHGP + X} refers to our algorithm with \emph{X} as undirected hypergraph partitioner for initial partitioning.  \emph{memDHGP + X} refers to our memetic algorithm that uses mlDHGP equiped with \emph{X} as undirected hypergraph partitioner for initial partitioning to build an initial population.  The \emph{Best} column reports the best edge cut found during 8 hours of individual runs. For \emph{mlDHGP + X}, the \emph{Average} column reports the average edge cut of 5 individual runs and the \emph{Best} column reports the best edge cut found during 8 hours.  For \emph{memDHGP + X}, the \emph{Best} column reports the best result found after running for 8 hours.  The \emph{Overall Best} column shows the best cut found by any tool with the following identifiers: H: HOUKC, N: one of the new approaches. In general, lower is better.}
\label{tab:detailedresults:ispd98:2}

\begin{center}
\resizebox{.75\columnwidth}{!}{
\begin{tabular}{|lr|rr|rrr|rr|rr|}
\hline 
& & \multicolumn{2}{c|}{HOUKC} & \multicolumn{2}{c}{mlDHGP} & memDHGP & mlDHGP & memDHGP & \multicolumn{2}{c|}{Overall}  \\ 
& & &  & \multicolumn{3}{c|}{with KaHyPar} & \multicolumn{2}{c|}{with PaToH} & \multicolumn{2}{c|}{Best} \\ 
\hline 
\multicolumn{1}{|c}{Graph} & \multicolumn{1}{c|}{K} & \multicolumn{1}{l}{Average} & \multicolumn{1}{l|}{Best (8h)} & \multicolumn{1}{l}{Average} &\multicolumn{1}{c}{Best (8h)} & \multicolumn{1}{r|}{Best (8h)} &\multicolumn{1}{l}{Average} &\multicolumn{1}{r|}{Best (8h)} & Result & Solver \\ 
\hline                                                                                                                                                            
\multirow{5}{*}{ibm10} & \numprint{2}  & \numprint{24983}  & \numprint{24073}          & \numprint{24310} & \numprint{21575}          & \numprint{21328}          & \numprint{22560} & \textbf{\numprint{21310}} & \numprint{21310} & N \\
                       & \numprint{4}  & \numprint{38620}  & \numprint{35083}          & \numprint{39383} & \numprint{33217}          & \numprint{36352}          & \numprint{39288} & \textbf{\numprint{32101}} & \numprint{32101} & N \\
                       & \numprint{8}  & \numprint{49646}  & \numprint{44820}          & \numprint{47827} & \numprint{40423}          & \numprint{39202}          & \numprint{46082} & \textbf{\numprint{38238}} & \numprint{38238} & N \\
                       & \numprint{16} & \numprint{63960}  & \numprint{54164}          & \numprint{55610} & \numprint{50854}          & \textbf{\numprint{49632}} & \numprint{56129} & \numprint{49892} & \numprint{49632} & N \\
                       & \numprint{32} & \numprint{69990}  & \numprint{65302}          & \numprint{64229} & \numprint{61838}          & \numprint{59914}          & \numprint{64105} & \textbf{\numprint{59180}} & \numprint{59180} & N \\
\hline                                                                                                                                                            
\multirow{5}{*}{ibm11} & \numprint{2}  & \numprint{19224}  & \numprint{16926}          & \numprint{21879} & \numprint{14374}          & \numprint{13578}          & \numprint{15748} & \textbf{\numprint{13318}} & \numprint{13318} & N \\
                       & \numprint{4}  & \numprint{36346}  & \numprint{26539}          & \numprint{26919} & \numprint{22750}          & \numprint{21623}          & \numprint{24724} & \textbf{\numprint{21310}} & \numprint{21310} & N \\
                       & \numprint{8}  & \numprint{39755}  & \numprint{32812}          & \numprint{32816} & \numprint{30401}          & \numprint{28563}          & \numprint{33247} & \textbf{\numprint{28477}} & \numprint{28477} & N \\
                       & \numprint{16} & \numprint{52698}  & \numprint{45779}          & \numprint{40706} & \numprint{38055}          & \numprint{39294}          & \numprint{43773} & \textbf{\numprint{37257}} & \numprint{37257} & N \\
                       & \numprint{32} & \numprint{63925}  & \numprint{57699}          & \numprint{50612} & \numprint{47999}          & \textbf{\numprint{47331}} & \numprint{52963} & \numprint{47930} & \numprint{47331} & N \\
\hline                                                                                                                                                            
\multirow{5}{*}{ibm12} & \numprint{2}  & \numprint{29359}  & \textbf{\numprint{27238}} & \numprint{30315} & \numprint{27860}          & \numprint{27365}          & \numprint{29620} & \numprint{27688} & \numprint{27238} & H \\
                       & \numprint{4}  & \numprint{50457}  & \numprint{47922}          & \numprint{49225} & \numprint{44108}          & \textbf{\numprint{42728}} & \numprint{49591} & \numprint{46107} & \numprint{42728} & N \\
                       & \numprint{8}  & \numprint{60024}  & \numprint{53785}          & \numprint{57394} & \numprint{52487}          & \numprint{51425}          & \numprint{57046} & \textbf{\numprint{50955}} & \numprint{50955} & N \\
                       & \numprint{16} & \numprint{72429}  & \numprint{65979}          & \numprint{66486} & \numprint{62965}          & \textbf{\numprint{61186}} & \numprint{67160} & \numprint{61484} & \numprint{61186} & N \\
                       & \numprint{32} & \numprint{84328}  & \numprint{76066}          & \numprint{73872} & \numprint{70503}          & \numprint{68739}          & \numprint{73252} & \textbf{\numprint{68712}} & \numprint{68712} & N \\

\hline

\multirow{5}{*}{ibm13} & \numprint{2}  & \numprint{30698}  & \numprint{19008}          & \numprint{21700}  & \textbf{\numprint{17161}}  & \numprint{17484}           & \numprint{22151}  & \numprint{17659} & \numprint{17161} & N \\
                       & \numprint{4}  & \numprint{39781}  & \numprint{29198}          & \numprint{39288}  & \numprint{31700}           & \numprint{32060}           & \numprint{38609}  & \textbf{\numprint{26500}} & \numprint{26500} & N \\
                       & \numprint{8}  & \numprint{54061}  & \textbf{\numprint{39453}} & \numprint{55253}  & \numprint{42881}           & \numprint{44535}           & \numprint{47765}  & \numprint{41596} & \numprint{39453} & H \\
                       & \numprint{16} & \numprint{71208}  & \numprint{60006}          & \numprint{65263}  & \numprint{55070}           & \textbf{\numprint{49820}}  & \numprint{65962}  & \numprint{49993} & \numprint{49820} & N \\
                       & \numprint{32} & \numprint{89053}  & \numprint{76762}          & \numprint{81831}  & \numprint{72262}           & \numprint{74997}           & \numprint{81416}  & \textbf{\numprint{70987}} & \numprint{70987} & N \\
\hline                                                                                                                                                               
\multirow{5}{*}{ibm14} & \numprint{2}  & \numprint{33205}  & \textbf{\numprint{31988}} & \numprint{51511}  & \numprint{48338}           & \numprint{48140}           & \numprint{52065}  & \numprint{49670} & \numprint{31988} & H \\
                       & \numprint{4}  & \numprint{55342}  & \textbf{\numprint{49972}} & \numprint{69320}  & \numprint{64838}           & \numprint{62888}           & \numprint{70364}  & \numprint{66680} & \numprint{49972} & H \\
                       & \numprint{8}  & \numprint{76297}  & \numprint{68992}          & \numprint{68051}  & \numprint{62718}           & \numprint{60929}           & \numprint{67598}  & \textbf{\numprint{56972}} & \numprint{56972} & N \\
                       & \numprint{16} & \numprint{96638}  & \numprint{80591}          & \numprint{79801}  & \numprint{74705}           & \textbf{\numprint{73224}}  & \numprint{80029}  & \numprint{73861} & \numprint{73224} & N \\
                       & \numprint{32} & \numprint{104543} & \numprint{96677}          & \numprint{91692}  & \numprint{89688}           & \numprint{87904}           & \numprint{92823}  & \textbf{\numprint{86504}} & \numprint{86504} & N \\
\hline                                                                                                                                                               
\multirow{5}{*}{ibm15} & \numprint{2}  & \numprint{74713}  & \numprint{71593}          & \numprint{66301}  & \numprint{63603}           & \textbf{\numprint{63136}}  & \numprint{82679}  & \numprint{67804} & \numprint{63136} & N \\
                       & \numprint{4}  & \numprint{105577} & \numprint{95911}          & \numprint{97786}  & \textbf{\numprint{87849}}  & \numprint{92812}           & \numprint{96479}  & \numprint{88349} & \numprint{87849} & N \\
                       & \numprint{8}  & \numprint{146984} & \numprint{123993}         & \numprint{123403} & \numprint{112014}          & \numprint{113564}          & \numprint{124884} & \textbf{\numprint{108619}} & \numprint{108619} & N \\
                       & \numprint{16} & \numprint{169587} & \numprint{153693}         & \numprint{136151} & \numprint{135061}          & \textbf{\numprint{124709}} & \numprint{143941} & \numprint{133614} & \numprint{124709} & N \\
                       & \numprint{32} & \numprint{191476} & \numprint{174057}         & \numprint{158765} & \numprint{154660}          & \numprint{149558}          & \numprint{160815} & \textbf{\numprint{148763}} & \numprint{148763} & N \\
\hline                                                                                                                                                               
\multirow{5}{*}{ibm16} & \numprint{2}  & \numprint{55871}  & \numprint{52980}          & \numprint{51699}  & \numprint{48222}           & \numprint{48063}           & \numprint{50167}  & \textbf{\numprint{45371}} & \numprint{45371} & N \\
                       & \numprint{4}  & \numprint{108576} & \numprint{93874}          & \numprint{98471}  & \numprint{93941}           & \numprint{91481}           & \numprint{99729}  & \textbf{\numprint{89976}} & \numprint{89976} & N \\
                       & \numprint{8}  & \numprint{130302} & \numprint{117375}         & \numprint{129900} & \numprint{115437}          & \numprint{119439}          & \numprint{126431} & \textbf{\numprint{114458}} & \numprint{114458} & N \\
                       & \numprint{16} & \numprint{162743} & \numprint{148626}         & \numprint{147987} & \numprint{136916}          & \numprint{134387}          & \numprint{142235} & \textbf{\numprint{132412}} & \numprint{132412} & N \\
                       & \numprint{32} & \numprint{181924} & \numprint{172909}         & \numprint{166347} & \numprint{158854}          & \numprint{157879}          & \numprint{164966} & \textbf{\numprint{153490}} & \numprint{153490} & N \\
\hline                                                                                                                                                               
\multirow{5}{*}{ibm17} & \numprint{2}  & \numprint{75860}  & \numprint{57177}          & \numprint{70331}  & \numprint{59100}           & \numprint{59470}           & \numprint{61401}  & \textbf{\numprint{56895}} & \numprint{56895} & N \\
                       & \numprint{4}  & \numprint{100287} & \numprint{89849}          & \numprint{121023} & \numprint{78692}           & \textbf{\numprint{77889}}  & \numprint{121175} & \numprint{107211} & \numprint{77889} & N \\
                       & \numprint{8}  & \numprint{151126} & \numprint{141679}         & \numprint{152455} & \textbf{\numprint{124639}} & \numprint{126610}          & \numprint{147848} & \numprint{130307} & \numprint{124639} & N \\
                       & \numprint{16} & \numprint{182272} & \numprint{166847}         & \numprint{171507} & \numprint{153812}          & \numprint{155789}          & \numprint{165498} & \textbf{\numprint{150026}} & \numprint{150026} & N \\
                       & \numprint{32} & \numprint{211541} & \numprint{198404}         & \numprint{188792} & \textbf{\numprint{167274}} & \numprint{173762}          & \numprint{194056} & \numprint{182853} & \numprint{167274} & N \\
\hline                                                                                                                                                               
\multirow{5}{*}{ibm18} & \numprint{2}  & \numprint{37123}  & \numprint{34949}          & \numprint{35907}  & \numprint{33434}           & \numprint{33394}           & \numprint{36651}  & \textbf{\numprint{33277}} & \numprint{33277} & N \\
                       & \numprint{4}  & \numprint{63000}  & \numprint{53948}          & \numprint{64540}  & \numprint{53190}           & \numprint{53237}           & \numprint{58432}  & \textbf{\numprint{48482}} & \numprint{48482} & N \\
                       & \numprint{8}  & \numprint{92636}  & \numprint{78164}          & \numprint{86580}  & \numprint{76686}           & \numprint{75728}           & \numprint{81435}  & \textbf{\numprint{70558}} & \numprint{70558} & N \\
                       & \numprint{16} & \numprint{121219} & \numprint{108744}         & \numprint{107824} & \numprint{93018}           & \textbf{\numprint{88959}}  & \numprint{113181} & \numprint{98976} & \numprint{88959} & N \\
                       & \numprint{32} & \numprint{144219} & \numprint{132289}         & \numprint{124788} & \numprint{111650}          & \textbf{\numprint{110816}} & \numprint{128875} & \numprint{119170} & \numprint{110816} & N \\
\hline                                                                                                                                                               
\hline                                                                                                                                                               
\textbf{Mean}          &               & 41189             & 36205                     & \numprint{37828}  & \numprint{33459}           & \numprint{33007}           & \numprint{37382}  & \numprint{33088} & & \\
\hline
\end{tabular}}
\end{center}
\end{table}
\label{app:dahs}
\begin{table}[h!]
\begin{center}
\scriptsize
        \vspace*{-2cm}
\caption{Detailed per instance results on the PolyBench benchmark set \cite{PolyBench}. \emph{mlDHGP} refers to our algorithm with \emph{KaHyPar} as undirected hypergraph partitioner for initial partitioning.  \emph{memDHGP} refers to our memetic algorithm that uses mlDHGP equipped with \emph{KaHyPar} as undirected hypergraph partitioner for initial partitioning to build an initial population.  The \emph{Best} column reports the best edge cut found during 8 hours of individual runs. Instances that took longer than 8 hours to compute are marked with a star. For \emph{mlDHGP}, the \emph{Average} column reports the average edge cut of 5 individual runs.  For \emph{memDHGP}, the \emph{Best} column reports the best result found after running for 8 hours.  In general, lower is better.}
\label{tab:detailedresultshg:cfg:1}

\resizebox{.5\columnwidth}{!}{ 
        \begin{tabular}{|lr|rr|rr|rr|}
\hline 
& & mlDHGP & memDHGP & \multicolumn{2}{c}{TopoRB} & \multicolumn{2}{|c|}{TopoKWay} \\
\hline 
\multicolumn{1}{|c}{Hypergraph} & \multicolumn{1}{c|}{K} & \multicolumn{1}{r}{Average} & \multicolumn{1}{r|}{Best (8h)} & \multicolumn{1}{r}{Average} & \multicolumn{1}{r|}{Best (8h)} &\multicolumn{1}{r}{Average} & \multicolumn{1}{r|}{Best (8h)}\\
\hline 
\multirow{5}{*}{2mm} & \numprint{2} & \numprint{212} & \textbf{\numprint{200}} & \numprint{224} & \textbf{\numprint{200}} & \numprint{344} & \numprint{210} \\
& \numprint{4} & \numprint{633} & \textbf{\numprint{608}} & \numprint{905} & \numprint{840} & \numprint{1618} & \numprint{750} \\
& \numprint{8} & \numprint{1376} & \textbf{\numprint{1320}} & \numprint{1608} & \numprint{1440} & \numprint{3169} & \numprint{1433} \\
& \numprint{16} & \numprint{2239} & \textbf{\numprint{2153}} & \numprint{2695} & \numprint{2248} & \numprint{4691} & \numprint{2630} \\
& \numprint{32} & \numprint{3796} & \textbf{\numprint{3624}} & \numprint{4562} & \numprint{3934} & \numprint{7015} & \numprint{4229} \\
\hline 
\multirow{5}{*}{3mm} & \numprint{2} & \textbf{\numprint{800}} & \textbf{\numprint{800}} & \numprint{1112} & \numprint{805} & \numprint{1564} & \numprint{1090} \\
& \numprint{4} & \numprint{2419} & \textbf{\numprint{2000}} & \numprint{3155} & \numprint{2480} & \numprint{5036} & \numprint{3566} \\
& \numprint{8} & \numprint{3950} & \textbf{\numprint{3540}} & \numprint{5940} & \numprint{4689} & \numprint{9374} & \numprint{5653} \\
& \numprint{16} & \numprint{6264} & \textbf{\numprint{5999}} & \numprint{9099} & \numprint{7537} & \numprint{12996} & \numprint{8123} \\
& \numprint{32} & \numprint{9234} & \textbf{\numprint{8861}} & \numprint{12719} & \numprint{11483} & \numprint{19224} & \numprint{12516} \\
\hline 
\multirow{5}{*}{atax} & \numprint{2} & \numprint{9206} & \textbf{\numprint{460}} & \textbf{\numprint{460}} & \textbf{\numprint{460}} & \numprint{14644} & \numprint{5829} \\
& \numprint{4} & \numprint{9438} & \numprint{4943} & \numprint{7162} & \textbf{\numprint{1719}} & \numprint{24248} & \numprint{19462} \\
& \numprint{8} & \numprint{22036} & \numprint{17127} & \numprint{20110} & \textbf{\numprint{9291}} & \numprint{27736} & \numprint{20983} \\
& \numprint{16} & \numprint{30917} & \numprint{28378} & \numprint{29675} & \textbf{\numprint{24167}} & \numprint{46152} & \numprint{29036} \\
& \numprint{32} & \numprint{43936} & \numprint{41981} & \numprint{40637} & \textbf{\numprint{39098}} & \numprint{52265} & \numprint{46790} \\
\hline 
\multirow{5}{*}{covariance} & \numprint{2} & \numprint{2930} & \textbf{\numprint{2590}} & \numprint{3343} & \numprint{3160} & \numprint{3190} & \numprint{3059} \\
& \numprint{4} & \numprint{6058} & \numprint{5705} & \numprint{5361} & \textbf{\numprint{5265}} & \numprint{7029} & \numprint{5681} \\
& \numprint{8} & \numprint{8834} & \textbf{\numprint{8238}} & \numprint{9660} & \numprint{9092} & \numprint{12815} & \numprint{10472} \\
& \numprint{16} & \numprint{13406} & \textbf{\numprint{12758}} & \numprint{13917} & \numprint{13480} & \numprint{19825} & \numprint{16529} \\
& \numprint{32} & \numprint{17605} & \textbf{\numprint{17210}} & \numprint{20211} & \numprint{19833} & \numprint{29596} & \numprint{24640} \\
\hline 
\multirow{5}{*}{doitgen} & \numprint{2} & \textbf{\numprint{400}} & \textbf{\numprint{400}} & \numprint{3134} & \numprint{2927} & \numprint{3444} & \numprint{2283} \\
& \numprint{4} & \textbf{\numprint{1200}} & \textbf{\numprint{1200}} & \numprint{3652} & \numprint{3600} & \numprint{6760} & \numprint{3114} \\
& \numprint{8} & \numprint{2892} & \textbf{\numprint{2800}} & \numprint{5301} & \numprint{4613} & \numprint{11254} & \numprint{5405} \\
& \numprint{16} & \numprint{6001} & \textbf{\numprint{5800}} & \numprint{7263} & \numprint{6949} & \numprint{15725} & \numprint{8243} \\
& \numprint{32} & \numprint{9566} & \textbf{\numprint{9192}} & \numprint{11405} & \numprint{11221} & \numprint{20172} & \numprint{14876} \\
\hline 
\multirow{5}{*}{durbin} & \numprint{2} & \textbf{\numprint{349}} & \textbf{\numprint{349}} & \textbf{\numprint{349}} & \textbf{\numprint{349}} & \numprint{352} & \textbf{\numprint{349}} \\
& \numprint{4} & \numprint{1024} & \textbf{\numprint{1020}} & \numprint{1023} & \textbf{\numprint{1020}} & \numprint{1033} & \textbf{\numprint{1020}} \\
& \numprint{8} & \numprint{2361} & \textbf{\numprint{2339}} & \numprint{2362} & \numprint{2344} & \numprint{2375} & \numprint{2344} \\
& \numprint{16} & \numprint{5030} & \textbf{\numprint{4996}} & \numprint{5027} & \numprint{5000} & \numprint{5047} & \numprint{5018} \\
& \numprint{32} & \numprint{10374} & \numprint{10364} & \numprint{10366} & \textbf{\numprint{10358}} & \numprint{10396} & \numprint{10378} \\
\hline 
\multirow{5}{*}{fdtd-2d} & \numprint{2} & \numprint{2650} & \textbf{\numprint{1756}} & \numprint{3491} & \numprint{3490} & \numprint{3491} & \numprint{3490} \\
& \numprint{4} & \numprint{5549} & \textbf{\numprint{3960}} & \numprint{10473} & \numprint{4294} & \numprint{10474} & \numprint{4269} \\
& \numprint{8} & \numprint{7755} & \textbf{\numprint{6351}} & \numprint{13745} & \numprint{8673} & \numprint{24366} & \numprint{8120} \\
& \numprint{16} & \numprint{10971} & \textbf{\numprint{8959}} & \numprint{19112} & \numprint{13681} & \numprint{34855} & \numprint{15108} \\
& \numprint{32} & \numprint{14110} & \textbf{\numprint{12759}} & \numprint{24248} & \numprint{18726} & \numprint{42703} & \numprint{22036} \\
\hline 
\multirow{5}{*}{gemm} & \numprint{2} & \textbf{\numprint{4200}} & \textbf{\numprint{4200}} & \numprint{6179} & \numprint{4758} & \numprint{5989} & *\numprint{4506} \\
& \numprint{4} & \textbf{\numprint{12600}} & \textbf{\numprint{12600}} & \numprint{18908} & \numprint{14781} & \numprint{18579} & \numprint{14581} \\
& \numprint{8} & \numprint{20931} & \textbf{\numprint{19714}} & \numprint{39528} & \numprint{39290}  & \numprint{41055} & \numprint{35135} \\
& \numprint{16} & \numprint{33978} & \textbf{\numprint{31355}} & \numprint{63139} & *\numprint{63139} & \numprint{77882} & \numprint{76501} \\
& \numprint{32} & \numprint{52721} & \textbf{\numprint{50300}} & \numprint{89660} & *\numprint{89660} & \numprint{117319} & \numprint{115717} \\
\hline 
\multirow{5}{*}{gemver} & \numprint{2} & \numprint{2577} & \textbf{\numprint{480}} & \numprint{1824} & \textbf{\numprint{480}} & \numprint{4800} & \numprint{2947} \\
& \numprint{4} & \numprint{5341} & \textbf{\numprint{2070}} & \numprint{6705} & \numprint{4576} & \numprint{8081} & \numprint{5851} \\
& \numprint{8} & \numprint{10615} & \textbf{\numprint{8305}} & \numprint{10522} & \numprint{8357} & \numprint{15511} & \numprint{9673} \\
& \numprint{16} & \numprint{13432} & \textbf{\numprint{12474}} & \numprint{13263} & \numprint{12618} & \numprint{20260} & \numprint{14005} \\
& \numprint{32} & \numprint{17250} & \numprint{16576} & \numprint{16823} & \textbf{\numprint{16362}} & \numprint{25050} & \numprint{21086} \\
\hline 
\multirow{5}{*}{gesummv} & \numprint{2} & \numprint{350} & \textbf{\numprint{250}} & \numprint{518} & \numprint{501} & \numprint{523} & \numprint{500} \\
& \numprint{4} & \numprint{975} & \textbf{\numprint{750}} & \numprint{927} & \numprint{760} & \numprint{1191} & \numprint{1051} \\
& \numprint{8} & \numprint{1394} & \textbf{\numprint{1250}} & \numprint{1539} & \numprint{1515} & \numprint{2128} & \numprint{2053} \\
& \numprint{16} & \numprint{2247} & \textbf{\numprint{2246}} & \numprint{2600} & \numprint{2582} & \numprint{5403} & \numprint{2971} \\
& \numprint{32} & \numprint{3526} & \textbf{\numprint{3428}} & \numprint{3644} & \numprint{3454} & \numprint{4689} & \numprint{4295} \\
\hline 
\multirow{5}{*}{heat-3d} & \numprint{2} & \textbf{\numprint{1280}} & \textbf{\numprint{1280}} & \numprint{1347} & \textbf{\numprint{1280}} & \numprint{1358} & \textbf{\numprint{1280}} \\
& \numprint{4} & \numprint{3843} & \textbf{\numprint{3840}} & \numprint{3947} & \textbf{\numprint{3840}} & \numprint{4190} & \textbf{\numprint{3840}} \\
& \numprint{8} & \numprint{9427} & \textbf{\numprint{8777}} & \numprint{9222} & \numprint{8960} & \numprint{9776} & \numprint{8960} \\
& \numprint{16} & \numprint{15406} & \numprint{14509} & \numprint{16496} & \numprint{14325} & \numprint{19799} & \textbf{\numprint{14313}} \\
& \numprint{32} & \numprint{21102} & \textbf{\numprint{19382}} & \numprint{22727} & \numprint{20483} & \numprint{28957} & \numprint{21080} \\
\hline
\end{tabular}}
\end{center}
\end{table}
\vfill
\pagebreak
\begin{table}[h!]
\begin{center}
\scriptsize
\vspace*{-2cm}
\caption{Detailed per instance results on the PolyBench benchmark set \cite{PolyBench}. \emph{mlDHGP} refers to our algorithm with \emph{KaHyPar} as undirected hypergraph partitioner for initial partitioning.  \emph{memDHGP} refers to our memetic algorithm that uses mlDHGP equipped with \emph{KaHyPar} as undirected hypergraph partitioner for initial partitioning to build an initial population.  The \emph{Best} column reports the best edge cut found during 8 hours of individual runs. Instances that took longer than 8 hours to compute are marked with a star. For \emph{mlDHGP}, the \emph{Average} column reports the average edge cut of 5 individual runs.  For \emph{memDHGP}, the \emph{Best} column reports the best result found after running for 8 hours.  In general, lower is better.}
\resizebox{.5\columnwidth}{!}{ 
\begin{tabular}{|lr|rr|rr|rr|}
\hline 
& & mlDHGP & memDHGP & \multicolumn{2}{c}{TopoRB} & \multicolumn{2}{|c|}{TopoKWay} \\
\hline 
\multicolumn{1}{|c}{Hypergraph} & \multicolumn{1}{c|}{K} & \multicolumn{1}{r}{Average} & \multicolumn{1}{r|}{Best (8h)} & \multicolumn{1}{r}{Average} & \multicolumn{1}{r|}{Best (8h)} &\multicolumn{1}{r}{Average} & \multicolumn{1}{r|}{Best (8h)}\\
\hline 
\multirow{5}{*}{jacobi-1d} & \numprint{2} & \numprint{401} & \textbf{\numprint{400}} & \numprint{412} & \numprint{402} & \numprint{411} & \numprint{402} \\
& \numprint{4} & \numprint{926} & \textbf{\numprint{793}} & \numprint{1245} & \numprint{1206} & \numprint{1279} & \numprint{1206} \\
& \numprint{8} & \numprint{1587} & \textbf{\numprint{1467}} & \numprint{2900} & \numprint{2814} & \numprint{3053} & \numprint{2793} \\
& \numprint{16} & \numprint{2634} & \textbf{\numprint{2423}} & \numprint{6213} & \numprint{3900} & \numprint{6676} & \numprint{3349} \\
& \numprint{32} & \numprint{3992} & \textbf{\numprint{3786}} & \numprint{8788} & \numprint{5540} & \numprint{13680} & \numprint{4753} \\
\hline 
\multirow{5}{*}{jacobi-2d} & \numprint{2} & \textbf{\numprint{1008}} & \textbf{\numprint{1008}} & \numprint{1053} & \textbf{\numprint{1008}} & \numprint{1049} & \textbf{\numprint{1008}} \\
& \numprint{4} & \numprint{3524} & \textbf{\numprint{2981}} & \numprint{3093} & \numprint{3024} & \numprint{3129} & \numprint{3024} \\
& \numprint{8} & \numprint{5786} & \textbf{\numprint{4995}} & \numprint{7184} & \numprint{6978} & \numprint{7419} & \numprint{6837} \\
& \numprint{16} & \numprint{8198} & \textbf{\numprint{7215}} & \numprint{13070} & \numprint{9282} & \numprint{15715} & \numprint{8992} \\
& \numprint{32} & \numprint{11312} & \textbf{\numprint{10326}} & \numprint{16921} & \numprint{14002} & \numprint{24070} & \numprint{11587} \\

\hline 
\multirow{5}{*}{lu} & \numprint{2} & \numprint{3327} & \numprint{3221} & \numprint{2966} & \textbf{\numprint{2776}} & \numprint{3644} & \numprint{3190} \\
& \numprint{4} & \numprint{5922} & \textbf{\numprint{5735}} & \numprint{6219} & \numprint{5898} & \numprint{9635} & \numprint{9181} \\
& \numprint{8} & \numprint{10218} & \textbf{\numprint{9831}} & \numprint{10971} & \numprint{10837} & \numprint{20432} & \numprint{18592} \\
& \numprint{16} & \numprint{15319} & \textbf{\numprint{15145}} & \numprint{15735} & \numprint{15152} & \numprint{27899} & \numprint{27673} \\
& \numprint{32} & \numprint{22034} & \textbf{\numprint{21652}} & \numprint{23252} & \numprint{22984} & \numprint{36568} & \numprint{36178} \\
\hline 
\multirow{5}{*}{ludcmp} & \numprint{2} & \numprint{2952} & \textbf{\numprint{2887}} & \numprint{3020} & \numprint{2917} & \numprint{4364} & \numprint{3878} \\
& \numprint{4} & \numprint{7546} & \textbf{\numprint{7468}} & \numprint{7631} & \numprint{7479} & \numprint{11193} & \numprint{10758} \\
& \numprint{8} & \numprint{12568} & \numprint{12494} & \numprint{12557} & \textbf{\numprint{12322}} & \numprint{22516} & \numprint{22189} \\
& \numprint{16} & \numprint{18211} & \textbf{\numprint{17933}} & \numprint{20093} & \numprint{19412} & \numprint{31115} & \numprint{30422} \\
& \numprint{32} & \numprint{25273} & \textbf{\numprint{24491}} & \numprint{26447} & \numprint{26164} & \numprint{42154} & \numprint{42154} \\
\hline 
\multirow{5}{*}{mvt} & \numprint{2} & \numprint{446} & \textbf{\numprint{404}} & \numprint{3247} & \numprint{558} & \numprint{11174} & \numprint{468} \\
& \numprint{4} & \numprint{1069} & \textbf{\numprint{818}} & \numprint{2988} & \numprint{1664} & \numprint{14887} & \numprint{8545} \\
& \numprint{8} & \numprint{2425} & \textbf{\numprint{1648}} & \numprint{6860} & \numprint{4187} & \numprint{20852} & \numprint{14909} \\
& \numprint{16} & \numprint{2851} & \textbf{\numprint{2586}} & \numprint{13203} & \numprint{10041} & \numprint{32053} & \numprint{23345} \\
& \numprint{32} & \numprint{6288} & \textbf{\numprint{4295}} & \numprint{14809} & \numprint{10009} & \numprint{40295} & \numprint{34690} \\
\hline 
\multirow{5}{*}{seidel-2d} & \numprint{2} & \textbf{\numprint{838}} & \textbf{\numprint{838}} & \numprint{996} & \numprint{935} & \numprint{1275} & \numprint{938} \\
& \numprint{4} & \numprint{2582} & \textbf{\numprint{2473}} & \numprint{2775} & \numprint{2672} & \numprint{3349} & \numprint{2784} \\
& \numprint{8} & \numprint{4668} & \textbf{\numprint{4274}} & \numprint{6020} & \numprint{5403} & \numprint{7265} & \numprint{4905} \\
& \numprint{16} & \numprint{7247} & \textbf{\numprint{6580}} & \numprint{10166} & \numprint{9045} & \numprint{14873} & \numprint{9157} \\
& \numprint{32} & \numprint{10869} & \textbf{\numprint{9966}} & \numprint{15649} & \numprint{13240} & \numprint{26383} & \numprint{15662} \\
\hline 
\multirow{5}{*}{symm} & \numprint{2} & \numprint{836} & \textbf{\numprint{820}} & \numprint{2915} & \numprint{2346} & \numprint{2946} & \numprint{2808} \\
& \numprint{4} & \numprint{2630} & \textbf{\numprint{2540}} & \numprint{4963} & \numprint{4370} & \numprint{7034} & \numprint{6031} \\
& \numprint{8} & \numprint{6257} & \textbf{\numprint{6107}} & \numprint{9023} & \numprint{8862} & \numprint{11819} & \numprint{9618} \\
& \numprint{16} & \numprint{10721} & \textbf{\numprint{10445}} & \numprint{13520} & \numprint{13251} & \numprint{20199} & \numprint{19794} \\
& \numprint{32} & \numprint{15672} & \textbf{\numprint{15282}} & \numprint{18851} & \numprint{18594} & \numprint{33848} & \numprint{32173} \\
\hline 
\multirow{5}{*}{syr2k} & \numprint{2} & \numprint{900} & \textbf{\numprint{880}} & \numprint{1139} & \numprint{900} & \numprint{1356} & \numprint{900} \\
& \numprint{4} & \numprint{1938} & \textbf{\numprint{1820}} & \numprint{2327} & \numprint{1978} & \numprint{3062} & \numprint{1994} \\
& \numprint{8} & \numprint{3834} & \numprint{3372} & \numprint{3913} & \textbf{\numprint{3198}} & \numprint{6010} & \numprint{3763} \\
& \numprint{16} & \numprint{5579} & \textbf{\numprint{4967}} & \numprint{5868} & \numprint{5294} & \numprint{10551} & \numprint{5354} \\
& \numprint{32} & \numprint{7912} & \numprint{7590} & \numprint{8621} & \textbf{\numprint{7520}} & \numprint{16163} & \numprint{9684} \\
\hline 
\multirow{5}{*}{syrk} & \numprint{2} & \textbf{\numprint{3240}} & \textbf{\numprint{3240}} & \numprint{3393} & \numprint{3376} & \numprint{3854} & \textbf{\numprint{3240}} \\
& \numprint{4} & \numprint{7390} & \textbf{\numprint{7320}} & \numprint{10083} & \numprint{10079} & \numprint{10431} & \numprint{9482} \\
& \numprint{8} & \numprint{13566} & \textbf{\numprint{13202}} & \numprint{13924} & \numprint{13924} & \numprint{19379} & \numprint{17118} \\
& \numprint{16} & \numprint{20121} & \textbf{\numprint{19674}} & \numprint{31052} & *\numprint{30851} & \numprint{30102} & \numprint{28704} \\
& \numprint{32} & \numprint{28222} & \textbf{\numprint{27446}} & \numprint{42805} & *\numprint{42805} & \numprint{47622} & \numprint{46759} \\
\hline 
\multirow{5}{*}{trisolv} & \numprint{2} & \textbf{\numprint{279}} & \textbf{\numprint{279}} & \numprint{280} & \textbf{\numprint{279}} & \numprint{283} & \numprint{280} \\
& \numprint{4} & \numprint{620} & \numprint{595} & \numprint{777} & \numprint{600} & \numprint{643} & \textbf{\numprint{581}} \\
& \numprint{8} & \numprint{1088} & \textbf{\numprint{1054}} & \numprint{1260} & \numprint{1133} & \numprint{1366} & \numprint{1289} \\
& \numprint{16} & \numprint{1788} & \textbf{\numprint{1742}} & \numprint{2008} & \numprint{1808} & \numprint{2622} & \numprint{2420} \\
& \numprint{32} & \numprint{2783} & \textbf{\numprint{2683}} & \numprint{3347} & \numprint{3020} & \numprint{4420} & \numprint{3984} \\
\hline 
\multirow{5}{*}{trmm} & \numprint{2} & \numprint{2704} & \textbf{\numprint{1844}} & \numprint{3755} & \numprint{3579} & \numprint{4113} & \numprint{3440} \\
& \numprint{4} & \numprint{6226} & \textbf{\numprint{5673}} & \numprint{7311} & \numprint{7167} & \numprint{11452} & \numprint{8793} \\
& \numprint{8} & \numprint{10082} & \textbf{\numprint{9914}} & \numprint{13669} & \numprint{13484} & \numprint{19559} & \numprint{12482} \\
& \numprint{16} & \numprint{16173} & \textbf{\numprint{15472}} & \numprint{20933} & \numprint{20933} & \numprint{30026} & \numprint{20348} \\
& \numprint{32} & \numprint{22126} & \textbf{\numprint{21437}} & \numprint{27168} & \numprint{27168} & \numprint{42503} & \numprint{41895} \\
\hline
\hline
\textbf{Mean} & & \numprint{4447} & \numprint{3900} & \numprint{5698} & \numprint{4853} & \numprint{8247} & \numprint{6045} \\ 
\hline
\end{tabular}}
\end{center}
\end{table}
\begin{table}[h!]
\begin{center}
\scriptsize
\vspace*{-2cm}
\caption{Detailed per instance results on the ISPD98 benchmark suite \cite{ISPD98}. \emph{mlDHGP} refers to our algorithm with \emph{KaHyPar} as undirected hypergraph partitioner for initial partitioning.  \emph{memDHGP} refers to our memetic algorithm that uses mlDHGP equipped with \emph{KaHyPar} as undirected hypergraph partitioner for initial partitioning to build an initial population.  The \emph{Best} column reports the best edge cut found during 8 hours of individual runs. For \emph{mlDHGP}, the \emph{Average} column reports the average edge cut of 5 individual runs.  For \emph{memDHGP}, the \emph{Best} column reports the best result found after running for 8 hours.  In general, lower is better.}

\resizebox{.5\columnwidth}{!}{ 
\begin{tabular}{|lr|rr|rr|rr|}
\hline 
& & mlDHGP & memDHGP & \multicolumn{2}{c}{TopoRB} & \multicolumn{2}{|c|}{TopoKWay} \\
\hline 
\multicolumn{1}{|c}{Hypergraph} & \multicolumn{1}{c|}{K} & \multicolumn{1}{r}{Average} & \multicolumn{1}{r|}{Best (8h)} & \multicolumn{1}{r}{Average} & \multicolumn{1}{r|}{Best (8h)} &\multicolumn{1}{r}{Average} & \multicolumn{1}{r|}{Best (8h)}\\
\hline 

\multirow{5}{*}{ibm01} & \numprint{2} & \numprint{838} & \textbf{\numprint{629}} & \numprint{877} & \numprint{659} & \numprint{1267} & \numprint{660} \\
& \numprint{4} & \numprint{1835} & \textbf{\numprint{1427}} & \numprint{2035} & \numprint{1684} & \numprint{4921} & \numprint{2615} \\
& \numprint{8} & \numprint{2923} & \textbf{\numprint{2136}} & \numprint{3512} & \numprint{2670} & \numprint{6513} & \numprint{4153} \\
& \numprint{16} & \numprint{3764} & \textbf{\numprint{3049}} & \numprint{4584} & \numprint{3710} & \numprint{8271} & \numprint{6032} \\
& \numprint{32} & \numprint{4774} & \textbf{\numprint{4013}} & \numprint{5626} & \numprint{4706} & \numprint{9894} & \numprint{6652} \\
\hline 
\multirow{5}{*}{ibm02} & \numprint{2} & \numprint{2222} & \textbf{\numprint{1869}} & \numprint{2629} & \numprint{1990} & \numprint{3048} & \numprint{2319} \\
& \numprint{4} & \numprint{4391} & \textbf{\numprint{3247}} & \numprint{5296} & \numprint{4017} & \numprint{7520} & \numprint{5185} \\
& \numprint{8} & \numprint{6898} & \textbf{\numprint{5674}} & \numprint{8561} & \numprint{6677} & \numprint{11208} & \numprint{9485} \\
& \numprint{16} & \numprint{9787} & \textbf{\numprint{8481}} & \numprint{10678} & \numprint{9300} & \numprint{14195} & \numprint{12709} \\
& \numprint{32} & \numprint{12545} & \textbf{\numprint{11448}} & \numprint{13773} & \numprint{12362} & \numprint{18141} & \numprint{14596} \\
\hline 
\multirow{5}{*}{ibm03} & \numprint{2} & \numprint{3782} & \textbf{\numprint{2242}} & \numprint{3772} & \numprint{2862} & \numprint{4306} & \numprint{2932} \\
& \numprint{4} & \numprint{5955} & \textbf{\numprint{4231}} & \numprint{6335} & \numprint{4748} & \numprint{8661} & \numprint{6746} \\
& \numprint{8} & \numprint{7679} & \textbf{\numprint{5911}} & \numprint{8478} & \numprint{6771} & \numprint{12510} & \numprint{10131} \\
& \numprint{16} & \numprint{9179} & \textbf{\numprint{7386}} & \numprint{10278} & \numprint{8601} & \numprint{15725} & \numprint{12304} \\
& \numprint{32} & \numprint{10051} & \textbf{\numprint{8496}} & \numprint{12271} & \numprint{10116} & \numprint{18507} & \numprint{14162} \\
\hline 
\multirow{5}{*}{ibm04} & \numprint{2} & \numprint{3080} & \textbf{\numprint{717}} & \numprint{4448} & \numprint{3044} & \numprint{5252} & \numprint{3204} \\
& \numprint{4} & \numprint{5232} & \textbf{\numprint{2467}} & \numprint{6175} & \numprint{3707} & \numprint{9871} & \numprint{6086} \\
& \numprint{8} & \numprint{7239} & \textbf{\numprint{5339}} & \numprint{9919} & \numprint{7304} & \numprint{13859} & \numprint{9917} \\
& \numprint{16} & \numprint{9415} & \textbf{\numprint{7343}} & \numprint{11868} & \numprint{10029} & \numprint{17680} & \numprint{12584} \\
& \numprint{32} & \numprint{11129} & \textbf{\numprint{9259}} & \numprint{13795} & \numprint{11947} & \numprint{21342} & \numprint{16273} \\
\hline 
\multirow{5}{*}{ibm05} & \numprint{2} & \numprint{4630} & \textbf{\numprint{3954}} & \numprint{4799} & \numprint{4232} & \numprint{4952} & \numprint{4248} \\
& \numprint{4} & \numprint{7629} & \textbf{\numprint{5930}} & \numprint{9574} & \numprint{7222} & \numprint{11693} & \numprint{7933} \\
& \numprint{8} & \numprint{10434} & \textbf{\numprint{8612}} & \numprint{13292} & \numprint{10339} & \numprint{17575} & \numprint{11821} \\
& \numprint{16} & \numprint{13095} & \textbf{\numprint{11285}} & \numprint{16394} & \numprint{13566} & \numprint{21884} & \numprint{16613} \\
& \numprint{32} & \numprint{15371} & \textbf{\numprint{13837}} & \numprint{18577} & \numprint{15938} & \numprint{25750} & \numprint{20079} \\
\hline 
\multirow{5}{*}{ibm06} & \numprint{2} & \numprint{4486} & \textbf{\numprint{2730}} & \numprint{5624} & \numprint{3839} & \numprint{7027} & \numprint{4279} \\
& \numprint{4} & \numprint{8189} & \textbf{\numprint{5858}} & \numprint{8789} & \numprint{6648} & \numprint{14557} & \numprint{11971} \\
& \numprint{8} & \numprint{10203} & \textbf{\numprint{8281}} & \numprint{11483} & \numprint{9590} & \numprint{19122} & \numprint{15012} \\
& \numprint{16} & \numprint{12720} & \textbf{\numprint{10157}} & \numprint{14123} & \numprint{11751} & \numprint{23880} & \numprint{20591} \\
& \numprint{32} & \numprint{15155} & \textbf{\numprint{12179}} & \numprint{17588} & \numprint{14851} & \numprint{30019} & \numprint{24375} \\
\hline 
\multirow{5}{*}{ibm07} & \numprint{2} & \numprint{5355} & \textbf{\numprint{3680}} & \numprint{8273} & \numprint{3871} & \numprint{8831} & \numprint{4602} \\
& \numprint{4} & \numprint{10343} & \textbf{\numprint{6250}} & \numprint{11463} & \numprint{7130} & \numprint{16176} & \numprint{11691} \\
& \numprint{8} & \numprint{12386} & \textbf{\numprint{8993}} & \numprint{13861} & \numprint{9482} & \numprint{21580} & \numprint{16862} \\
& \numprint{16} & \numprint{13927} & \textbf{\numprint{11870}} & \numprint{17289} & \numprint{14109} & \numprint{27718} & \numprint{23060} \\
& \numprint{32} & \numprint{16880} & \textbf{\numprint{14264}} & \numprint{20418} & \numprint{17595} & \numprint{34122} & \numprint{26761} \\
\hline 
\multirow{5}{*}{ibm08} & \numprint{2} & \numprint{12865} & \numprint{9344} & \numprint{11639} & \textbf{\numprint{9331}} & \numprint{12536} & \numprint{11281} \\
& \numprint{4} & \numprint{18373} & \numprint{16860} & \numprint{18385} & \textbf{\numprint{15454}} & \numprint{22071} & \numprint{20504} \\
& \numprint{8} & \numprint{22238} & \numprint{20526} & \numprint{21703} & \textbf{\numprint{20013}} & \numprint{28396} & \numprint{24835} \\
& \numprint{16} & \numprint{25572} & \textbf{\numprint{23100}} & \numprint{27093} & \numprint{25072} & \numprint{34302} & \numprint{30639} \\
& \numprint{32} & \numprint{29667} & \textbf{\numprint{27425}} & \numprint{31907} & \numprint{29150} & \numprint{40661} & \numprint{34423} \\
\hline 
\multirow{5}{*}{ibm09} & \numprint{2} & \numprint{5593} & \textbf{\numprint{3357}} & \numprint{11804} & \numprint{9159} & \numprint{13130} & \numprint{9651} \\
& \numprint{4} & \numprint{10610} & \textbf{\numprint{6416}} & \numprint{18274} & \numprint{13852} & \numprint{21028} & \numprint{17693} \\
& \numprint{8} & \numprint{12053} & \textbf{\numprint{8726}} & \numprint{20346} & \numprint{16942} & \numprint{26819} & \numprint{21107} \\
& \numprint{16} & \numprint{14987} & \textbf{\numprint{11588}} & \numprint{22256} & \numprint{20090} & \numprint{33138} & \numprint{28843} \\
& \numprint{32} & \numprint{17802} & \textbf{\numprint{14449}} & \numprint{25313} & \numprint{22858} & \numprint{39478} & \numprint{34857} \\
\hline 
\end{tabular}}
\end{center}

\end{table}
\begin{table}[h!]
\begin{center}
\scriptsize
\vspace*{-2cm}
\caption{Detailed per instance results on the ISPD98 benchmark suite \cite{ISPD98}. \emph{mlDHGP} refers to our algorithm with \emph{KaHyPar} as undirected hypergraph partitioner for initial partitioning.  \emph{memDHGP} refers to our memetic algorithm that uses mlDHGP equipped with \emph{KaHyPar} as undirected hypergraph partitioner for initial partitioning to build an initial population.  The \emph{Best} column reports the best edge cut found during 8 hours of individual runs. For \emph{mlDHGP}, the \emph{Average} column reports the average edge cut of 5 individual runs.  For \emph{memDHGP}, the \emph{Best} column reports the best result found after running for 8 hours.  In general, lower is better.}
\label{tab:detailedresultshg:ispd98:2}

\resizebox{.5\columnwidth}{!}{ 
\begin{tabular}{|lr|rr|rr|rr|}
\hline 
& & mlDHGP & memDHGP & \multicolumn{2}{c}{TopoRB} & \multicolumn{2}{|c|}{TopoKWay} \\
\hline 
\multicolumn{1}{|c}{Hypergraph} & \multicolumn{1}{c|}{K} & \multicolumn{1}{r}{Average} & \multicolumn{1}{r|}{Best (8h)} & \multicolumn{1}{r}{Average} & \multicolumn{1}{r|}{Best (8h)} &\multicolumn{1}{r}{Average} & \multicolumn{1}{r|}{Best (8h)}\\
\hline 
\multirow{5}{*}{ibm10} & \numprint{2} & \numprint{12129} & \textbf{\numprint{8288}} & \numprint{11717} & \numprint{8315} & \numprint{17445} & \numprint{11372} \\
& \numprint{4} & \numprint{18728} & \textbf{\numprint{11809}} & \numprint{15691} & \numprint{12922} & \numprint{24176} & \numprint{18224} \\
& \numprint{8} & \numprint{22141} & \textbf{\numprint{16401}} & \numprint{22146} & \numprint{18837} & \numprint{36282} & \numprint{28691} \\
& \numprint{16} & \numprint{24929} & \textbf{\numprint{20199}} & \numprint{27168} & \numprint{24613} & \numprint{49440} & \numprint{39218} \\
& \numprint{32} & \numprint{30100} & \textbf{\numprint{26238}} & \numprint{33799} & \numprint{30462} & \numprint{59161} & \numprint{51361} \\
\hline 
\multirow{5}{*}{ibm11} & \numprint{2} & \numprint{10669} & \textbf{\numprint{6550}} & \numprint{10943} & \numprint{7618} & \numprint{13320} & \numprint{8190} \\
& \numprint{4} & \numprint{16257} & \textbf{\numprint{10447}} & \numprint{18761} & \numprint{15073} & \numprint{25457} & \numprint{20894} \\
& \numprint{8} & \numprint{17992} & \textbf{\numprint{12341}} & \numprint{23071} & \numprint{18430} & \numprint{36747} & \numprint{29617} \\
& \numprint{16} & \numprint{20197} & \textbf{\numprint{16198}} & \numprint{27404} & \numprint{23673} & \numprint{42915} & \numprint{35907} \\
& \numprint{32} & \numprint{23409} & \textbf{\numprint{19767}} & \numprint{30592} & \numprint{27495} & \numprint{50761} & \numprint{43688} \\
\hline 
\multirow{5}{*}{ibm12} & \numprint{2} & \numprint{15449} & \numprint{11349} & \numprint{14881} & \textbf{\numprint{10588}} & \numprint{14858} & \numprint{12725} \\
& \numprint{4} & \numprint{20307} & \textbf{\numprint{15652}} & \numprint{20215} & \numprint{16538} & \numprint{23398} & \numprint{18847} \\
& \numprint{8} & \numprint{23036} & \textbf{\numprint{18126}} & \numprint{24916} & \numprint{21053} & \numprint{36501} & \numprint{31499} \\
& \numprint{16} & \numprint{28437} & \textbf{\numprint{23367}} & \numprint{30176} & \numprint{26434} & \numprint{48537} & \numprint{35157} \\
& \numprint{32} & \numprint{34536} & \textbf{\numprint{27911}} & \numprint{38183} & \numprint{33930} & \numprint{62718} & \numprint{53703} \\
\hline

\multirow{5}{*}{ibm13} & \numprint{2} & \numprint{11893} & \numprint{8695} & \numprint{12790} & \textbf{\numprint{8284}} & \numprint{19262} & \numprint{10593} \\
& \numprint{4} & \numprint{14791} & \textbf{\numprint{10285}} & \numprint{21166} & \numprint{12883} & \numprint{43564} & \numprint{34438} \\
& \numprint{8} & \numprint{21405} & \textbf{\numprint{14330}} & \numprint{32543} & \numprint{21452} & \numprint{58298} & \numprint{48381} \\
& \numprint{16} & \numprint{25313} & \textbf{\numprint{16761}} & \numprint{35524} & \numprint{28945} & \numprint{70447} & \numprint{58881} \\
& \numprint{32} & \numprint{29676} & \textbf{\numprint{26017}} & \numprint{43126} & \numprint{37405} & \numprint{92014} & \numprint{81315} \\
\hline 
\multirow{5}{*}{ibm14} & \numprint{2} & \numprint{24379} & \numprint{14713} & \numprint{15305} & \textbf{\numprint{14219}} & \numprint{21228} & \numprint{18308} \\
& \numprint{4} & \numprint{30912} & \numprint{21613} & \numprint{24657} & \textbf{\numprint{21539}} & \numprint{38520} & \numprint{33361} \\
& \numprint{8} & \numprint{36370} & \textbf{\numprint{30710}} & \numprint{36889} & \numprint{32478} & \numprint{49762} & \numprint{44724} \\
& \numprint{16} & \numprint{42321} & \textbf{\numprint{35598}} & \numprint{44721} & \numprint{40298} & \numprint{63893} & \numprint{58170} \\
& \numprint{32} & \numprint{48741} & \textbf{\numprint{43979}} & \numprint{53609} & \numprint{48580} & \numprint{78351} & \numprint{72065} \\
\hline 
\multirow{5}{*}{ibm15} & \numprint{2} & \numprint{27810} & \textbf{\numprint{19804}} & \numprint{28396} & \numprint{24432} & \numprint{38193} & \numprint{29247} \\
& \numprint{4} & \numprint{44069} & \textbf{\numprint{33151}} & \numprint{52804} & \numprint{46517} & \numprint{79810} & \numprint{74511} \\
& \numprint{8} & \numprint{51886} & \textbf{\numprint{38306}} & \numprint{65971} & \numprint{57918} & \numprint{102738} & \numprint{93440} \\
& \numprint{16} & \numprint{58961} & \textbf{\numprint{49687}} & \numprint{74815} & \numprint{68546} & \numprint{119898} & \numprint{105971} \\
& \numprint{32} & \numprint{66287} & \textbf{\numprint{56374}} & \numprint{82252} & \numprint{75762} & \numprint{141076} & \numprint{129211} \\
\hline 
\multirow{5}{*}{ibm16} & \numprint{2} & \numprint{25941} & \textbf{\numprint{11494}} & \numprint{26062} & \numprint{19608} & \numprint{34882} & \numprint{30835} \\
& \numprint{4} & \numprint{46933} & \textbf{\numprint{36124}} & \numprint{50521} & \numprint{45089} & \numprint{80648} & \numprint{61926} \\
& \numprint{8} & \numprint{57761} & \textbf{\numprint{45328}} & \numprint{61132} & \numprint{55221} & \numprint{97002} & \numprint{89035} \\
& \numprint{16} & \numprint{67904} & \textbf{\numprint{58471}} & \numprint{72549} & \numprint{64470} & \numprint{114337} & \numprint{105134} \\
& \numprint{32} & \numprint{80591} & \textbf{\numprint{69325}} & \numprint{83631} & \numprint{77873} & \numprint{135371} & \numprint{127061} \\
\hline 
\multirow{5}{*}{ibm17} & \numprint{2} & \numprint{36934} & \numprint{32507} & \numprint{27629} & \textbf{\numprint{25282}} & \numprint{36665} & \numprint{29655} \\
& \numprint{4} & \numprint{47186} & \numprint{39301} & \numprint{42638} & \textbf{\numprint{37519}} & \numprint{64506} & \numprint{53945} \\
& \numprint{8} & \numprint{62896} & \textbf{\numprint{54230}} & \numprint{61043} & \numprint{55706} & \numprint{86830} & \numprint{78413} \\
& \numprint{16} & \numprint{74427} & \textbf{\numprint{65672}} & \numprint{77867} & \numprint{70777} & \numprint{109982} & \numprint{101743} \\
& \numprint{32} & \numprint{86597} & \textbf{\numprint{80152}} & \numprint{93864} & \numprint{87711} & \numprint{129758} & \numprint{121273} \\
\hline 
\multirow{5}{*}{ibm18} & \numprint{2} & \numprint{21296} & \textbf{\numprint{16338}} & \numprint{20830} & \numprint{18825} & \numprint{24096} & \numprint{18409} \\
& \numprint{4} & \numprint{36235} & \textbf{\numprint{28131}} & \numprint{33642} & \numprint{31101} & \numprint{50926} & \numprint{39056} \\
& \numprint{8} & \numprint{49742} & \textbf{\numprint{38947}} & \numprint{45337} & \numprint{41406} & \numprint{73198} & \numprint{65903} \\
& \numprint{16} & \numprint{57312} & \textbf{\numprint{47532}} & \numprint{60672} & \numprint{55461} & \numprint{94165} & \numprint{84746} \\
& \numprint{32} & \numprint{67770} & \textbf{\numprint{58061}} & \numprint{76324} & \numprint{69222} & \numprint{114532} & \numprint{104020} \\
\hline
\hline
\textbf{Mean} & & \numprint{16151} & \numprint{12245} & \numprint{18344} & \numprint{15030} & \numprint{26839} & \numprint{21246} \\ 
\hline
\end{tabular}}
\end{center}
\end{table}
\end{landscape}
\fi
\end{appendix}
\end{document}